\documentclass[11pt,usletter]{article}
\usepackage{graphicx}
\usepackage{subcaption}
\usepackage[margin = 2.5cm]{geometry}
\usepackage{color}
\usepackage{amsmath}
\usepackage{amssymb}
\usepackage{amsthm}
\usepackage{bbm}
\usepackage{tikz}
\usepackage[hidelinks]{hyperref}
\usepackage{comment}
\usepackage{algorithm,algpseudocode}
\usepackage{tcolorbox}
\usepackage{enumerate}
\graphicspath{{./images/}}

\usepackage[normalem]{ulem}

\newcommand{\bm}[1]{{\boldsymbol#1}}
\newcommand{\bR}{\mathbb{R}}
\newcommand{\bC}{\mathbb{C}}
\newcommand{\bN}{\mathbb{N}}

\DeclareMathOperator{\supp}{supp}
\let\Re\relax

\DeclareMathOperator{\Re}{Re}

\newtheorem{lemma}{Lemma}
\newtheorem{remark}{Remark}

\newtheorem{theorem}{Theorem}

\newcommand\numberthis{\addtocounter{equation}{1}\tag{\theequation}}

\title{
The greedy side of the LASSO: New algorithms for weighted sparse recovery  via loss function-based orthogonal matching pursuit}
\author{Sina Mohammad-Taheri\footnote{Department of Mathematics and Statistics, Concordia University, Montr\'eal, QC, Canada.} $^{,}$\footnote{Corresponding author. E-mail: \href{mailto:sina.mohammad-taheri@mail.concordia.ca}{sina.mohammad-taheri@mail.concordia.ca}} { } and Simone Brugiapaglia$^*$}

\begin{document}

\maketitle

\begin{abstract}
We propose a class of greedy algorithms for weighted sparse recovery by considering new loss function-based generalizations of Orthogonal Matching Pursuit (OMP). Given a (regularized) loss function, the proposed algorithms alternate the iterative construction of the signal support via greedy index selection and a signal update based on solving a local data-fitting problem restricted to the current support. 
We show that greedy selection rules associated with popular weighted sparsity-promoting loss functions admit explicitly computable and simple formulas. Specifically, we consider $ \ell^0 $- and $ \ell^1 $-based versions of the weighted LASSO (Least Absolute Shrinkage and Selection Operator), the Square-Root LASSO (SR-LASSO) and the Least Absolute Deviations LASSO (LAD-LASSO). Through numerical experiments on Gaussian compressive sensing and high-dimensional function approximation, we demonstrate the effectiveness of the proposed algorithms by empirically showing that they can outperform standard OMP (with respect to accuracy and computational cost) and inherit desirable characteristics from the corresponding loss functions, such as SR-LASSO's noise-blind optimal parameter tuning and LAD-LASSO's fault tolerance. In doing so, our study sheds new light on the connection between greedy sparse recovery and convex relaxation.

\end{abstract}

\paragraph{Keywords:} weighted sparsity, greedy algorithms, orthogonal matching pursuit, LASSO, square-root LASSO, least absolute deviations LASSO. 

\section{Introduction}

Sparse recovery  lies at the heart of modern data science, signal processing, and statistical learning. Its goal is to reconstruct an $ N $-dimensional $ s $-sparse signal $ x $ (i.e., such that $ \|x\|_0:=|\{j:x_j \neq 0\}| \leq s $) from $ m $ (possibly noisy) linear measurements $ y = Ax + e$, where $A$ is an $ m \times N $ measurement (sensing,  mixing, or dictionary) matrix and $e$ is an $m$-dimensional noise vector. In this paper, we focus in particular on the \emph{compressed sensing} framework \cite{candes2006robust,donoho2006compressed}, corresponding to the underdetermined regime (i.e., $m < N$). For a general treatment of sparse recovery, compressed sensing and their numerous applications in data science, signal processing, and scientific computing we refer to, e.g., the books \cite{adcock2022sparse, adcock2021compressive, elad2010sparse, eldar2012compressed, foucart2013mathematical, hastie2015statistical, lai2021sparse, vidyasagar2019introduction}.

Sparse recovery techniques are typically divided into two main categories: convex relaxation methods and iterative algorithms. In convex relaxation methods, sparse solutions are identified by solving convex optimization programs such as those based on $ \ell^1 $ minimization. Popular examples are \emph{(Quadratically-Constrained) Basis Pursuit} and the \emph{Least Absolute Shrinkage and Selection Operator} (\emph{LASSO}).
On the other hand, iterative algorithms aim at computing a sparse solution through explicit iterative algorithmic procedures that combine techniques from numerical linear algebra with sparsity-enhancing ideas. These include thresholding-based algorithms such as \emph{Iterative Hard Thresholding} (\emph{IHT}) and \emph{Hard Thresholding Pursuit} (\emph{HTP}), and greedy algorithms such as \emph{Compressive Sampling Matching Pursuit} (\emph{CoSaMP}) and \emph{Orthogonal Matching Pursuit} (\emph{OMP})---the main object of study of this paper. For a detailed overview of these and other sparse recovery techniques we refer readers to, e.g.,    \cite{adcock2021compressive,foucart2013mathematical,lai2021sparse,temlyakov2011greedy,vidyasagar2019introduction}.

Over the last few years, motivated by the need to incorporate prior knowledge about the target signal into sparse reconstruction methods, a substantial amount of research has been devoted to \emph{weighted} sparse recovery. In a variety of applications, ranging from compressive imaging to surrogate modelling and uncertainty quantification, it has been shown both empirically and theoretically that a careful choice of weights can improve both reconstruction accuracy and sample complexity with respect to unweighted $ \ell^1 $ minimization. A non-exhaustive list of works in this direction includes  \cite{adcock2018infinite,adcock2019correcting,adcock2017compressed,adcock2022sparse,adcock2021compressive,bah2016sample,candes2008enhancing,chkifa2018polynomial,friedlander2011recovering,huang2008adaptive,khajehnejad2011analyzing,peng2014weighted,rauhut2016interpolation,yu2013sufficient,zou2006adaptive} and references therein. 

Although weighted sparse recovery has been extensively investigated from the perspective of convex relaxation through weighted $\ell^1$ minimization, iterative algorithms are far from being well studied in the weighted setting. To the best of our knowledge, iterative algorithms for weighted sparse recovery have only been considered in a handful of works  \cite{adcock2020sparse,jo2013iterative,li2013weighted, wu2013weighted}. The main goal of our paper is to reduce this gap. With this aim, adopting an approach that merges convex relaxation and iterative algorithms, we propose new  LASSO-based weighted greedy algorithms of OMP type. 

\subsection{Main contributions}

The main contributions of this paper can be summarized as follows.
\begin{enumerate}
\item Adopting a \emph{loss function-based} perspective (see \S\ref{s:summary_main} and \S\ref{sec:greedy_selection}), we propose a new class of greedy algorithms able to promote weighted sparse recovery based on the OMP paradigm.
They are defined via theoretically-justified greedy index selection rules based on maximal reduction of weighted LASSO-type loss functions (see Theorems~\ref{thm:general}, \ref{thm:WLASSO}, \ref{thm:WSRLASSO} and \ref{thm:WLADLASSO}). These include the weighed (unconstrained) LASSO and two of its most notable variants: the weighted  \emph{Square-Root LASSO} (\emph{SR-LASSO}) and \emph{Least Absolute Deviations LASSO} (\emph{LAD-LASSO}). This loss function-based perspective allows one to adapt OMP to various structured signal models and sources of errors corrupting the data.

\item The proposed algorithms are numerically shown to outperform standard OMP (with respect to both accuracy and computational cost) and inherit the desirable characteristics of the underlying loss functions. In particular, those based on the SR- and LAD-LASSO, have noise-blind tuning parameter selection strategies and fault-tolerance, respectively. In addition, thanks to the presence of a regularization term, our greedy algorithms prevent overfitting and, consequently, improve the robustness of OMP with respect to the number of iterations. Numerical evidence in this direction is presented in \S\ref{sec:numerics}. These results shed new light on the connection between convex relaxation methods and iterative (specifically, greedy) algorithms.

\item The proposed algorithms admit a reliable stopping criterion and a significant reduction in runtime, thanks to the regularization effect mentioned above.

\end{enumerate}
We conclude with a remark about the novelty of our contributions in relation to an OMP variant proposed in \cite{adcock2020sparse}. A comprehensive literature review can be found in \S\ref{literature_review}
\begin{remark}
Our construction in this work builds upon a variant of OMP proposed in \cite{adcock2020sparse} that relies on a weighted $ \ell^0 $-based LASSO formulation. Here, adopting a more general loss function-based perspective, we extend the work of \cite{adcock2020sparse} to a broader class of loss functions including $ \ell^1 $-based LASSO and other variants of the LASSO family, i.e., weighted SR- and LAD-LASSO. Moreover, we extend the weighted OMP strategy proposed in \cite{adcock2020sparse} to the case of $\ell^0$-based SR- and LAD-LASSO. See Appendix~\ref{app:l0_greedy_selection}.
\end{remark}

\subsection{Summary of the main results}
\label{s:summary_main}
We now provide an overview of our main results, referring to \S\ref{sec:greedy_selection} for a detailed technical discussion.
Our objective is to construct a signal that minimizes a loss function of the form 
\begin{equation}
\label{eq:def_G_intro}
G(z) := F(z) + \lambda R(z), \quad \forall z \in \mathbb{F}^N,
\end{equation}
where $\mathbb{F} = \mathbb{R}$ or $\mathbb{C}$, and $ F, R: \mathbb{F}^N \to [0, +\infty) $ are a data-fidelity and a regularization term, respectively, and $ \lambda \geq 0 $ is a tuning parameter. For weighted LASSO, SR-LASSO, and LAD-LASSO loss functions (see equations \eqref{eq:loss_WLASSO}, \eqref{eq:loss_WSRLASSO}, and \eqref{eq:loss_WLADLASSO}, respectively) $ F $ is an $\ell^2$- or $\ell^1$-based data-fidelity term and $R$ is a weighted $\ell^1$ norm. We aim at minimizing $G$ in a greedy fashion. Following the OMP paradigm, we construct the support  of the signal one index at a time. Specifically, at iteration $ k $, the support set $ S^{(k)} $ of the approximation $x^{(k)}$ is updated according to the following \emph{greedy index selection rule}:
$$ S^{(k)} = S^{(k - 1)} \cup \{j^{(k)}\},\quad \text{where}\quad j^{(k)} \in \arg\max_{j \in [N]} \Delta(x^{(k - 1)}, S^{(k - 1)}, j), $$
where $ \Delta(x^{(k-1)},S^{(k-1)},j) $ is the loss reduction resulting from adding a single index $j$ to the support $S^{(k-1)}$ of $x^{(k-1)}$ and with $[N]:=\{1,\ldots,N\}$. $\Delta$ is implicitly defined by
\begin{equation}
\label{eq:def_Delta_intro}
\min_{t \in \mathbb{F}} G(z + te_j) = G(z) - \Delta(z, S, j), \quad \text{where } S: = \supp(z), \quad \forall z \in \mathbb{F}^N.
\end{equation}
 Then, the signal is updated by solving a local optimization problem restricted to the newly constructed support $ S^{(k)} $, i.e.,
\begin{equation}
\label{eq:local_opt_intro}
	x^{(k)} \in \arg\min_{z \in\mathbb{F}^N} F(z)\quad \text{s.t}\quad \supp(z) \subseteq S^{(k)}.
\end{equation}
The corresponding loss function-based OMP algorithm is presented in Algorithm~\ref{alg:general_OMP} (adopting a stopping criterion based on the number of iterations).
\begin{algorithm}[h!]
	\caption{Loss function-based OMP}
	\label{alg:general_OMP}
	\begin{algorithmic}[1]
		\State \textbf{Inputs:}
              $G : \mathbb{F}^N \to [0,+\infty)$ (loss function of the form \eqref{eq:def_G_intro}), with $\mathbb{F} = \mathbb{R}$ or $\mathbb{C}$;
		$ A \in \mathbb{F}^{m \times N} $ (measurement matrix);
		$ y \in \mathbb{F}^m $ (measurement vector);
		$ K \in [N] $ (number of iterations).
            \State \textbf{Output:} $ x^{(K)} \in \mathbb{F}^N $ (approximate $ K $-sparse solution to $ Az = y $).
		\Procedure{Loss-function based OMP}{$G$, $ A $, $ y $,  $ K $}
		\State Let $ x^{(0)} = 0 $ and $ S^{(0)} = \emptyset $
		\For{$ k = 1, \dots, K $}
		\State \normalsize Find $ j^{(k)} \in \arg\max_{j \in [N]}\Delta(x^{(k - 1)}, S^{(k - 1)}, j) $, with $\Delta$ defined as in \eqref{eq:def_Delta_intro}
		\State Define $ S^{(k)} = S^{(k - 1)} \cup \{j^{(k)}\} $
		\State Compute $ x^{(k)} $ by solving \eqref{eq:local_opt_intro}
		\EndFor
		\State \Return $x^{(K)}$
		\EndProcedure
	\end{algorithmic}
\end{algorithm}
\begin{remark}[Standard OMP]
  The standard OMP algorithm is a special case of Algorithm~\ref{alg:general_OMP} when $ G $ is the least-squares loss function, i.e., $ G(z) = F(z) = \|y - Az\|_2^2 $ and for $\lambda = 0$. 
  \end{remark}
  To demonstrate that Algorithm~\ref{alg:general_OMP} is practically implementable, we ought to show that the loss reduction factor $ \Delta (x, S, j)$ is (ideally, easily) computable. The main technical contribution of the paper is to show that this is indeed the case for the weighted LASSO, SR-LASSO, and LAD-LASSO loss functions (referred to as ``$*$-LASSO'' below). This is summarized in the following result, which unifies Theorems~\ref{thm:WLASSO}, \ref{thm:WSRLASSO} and \ref{thm:WLADLASSO}.
\begin{theorem}[Weighted $*$-LASSO-based greedy selection rules]
	\label{thm:general}
	Let $ \lambda \geq 0 $, $ S \subseteq [N] $, 
 $$ 
 A \in \mathbb{F}^{m \times N}\quad 
 \begin{cases}
    \text{with $\mathbb{F} = \mathbb{C}$ and $ \ell^2 $-normalized columns} & \text{(LASSO and SR-LASSO)} \\
    \text{with $\mathbb{F} = \mathbb{R}$ and nonzero columns} & \text{(LAD-LASSO)}
 \end{cases}
 $$ 
 and $ x \in \mathbb{F}^N $ satisfying 
$$ 
	x \in \arg\min_{z \in\mathbb{F}^N} F(z)\quad \text{s.t}\quad \supp(z) \subseteq S.
$$
 Then, the loss reduction $\Delta(x,S,j)$ defined in \eqref{eq:def_Delta_intro} admits explicit formulas provided by \eqref{eq:LASSO_WOMP_quantity}, \eqref{eq:SRLASSO_WOMP_quantity} and \eqref{eq:LADLASSO_WOMP_quantity}, respectively, for the weighted LASSO, SR-LASSO and LAD-LASSO loss functions (see \eqref{eq:loss_WLASSO}, \eqref{eq:loss_WSRLASSO} and \eqref{eq:loss_WLADLASSO}).
\end{theorem}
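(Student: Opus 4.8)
The plan is to compute $\Delta(x,S,j)$ directly from its defining relation \eqref{eq:def_Delta_intro}, namely $\Delta(x,S,j) = G(x) - \min_{t \in \mathbb{F}} G(x + te_j)$, by reducing the minimization over the single scalar $t$ to a problem solvable in closed form. Since $x + te_j$ differs from $x$ only in its $j$-th entry, the weighted $\ell^1$ regularizer changes only through its $j$-th summand, so that $R(x+te_j) = R(x) - w_j|x_j| + w_j|x_j + t|$; in the relevant case $j \notin S$ this is simply $R(x) + w_j|t|$. Writing $r := y - Ax$ for the residual and $a_j$ for the $j$-th column of $A$, the data-fidelity term $F(x + te_j)$ becomes a function of $r$, $a_j$, and $t$ alone, equal to $\|r - ta_j\|_2^2$, $\|r - ta_j\|_2$, or $\|r - ta_j\|_1$ for LASSO, SR-LASSO, and LAD-LASSO respectively. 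Thus $G(x + te_j)$ is an explicit scalar function of $t$, and the task reduces to minimizing it.

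First I would exploit the hypothesis that $x$ minimizes $F$ over all vectors supported on $S$. For the $\ell^2$-based terms this first-order optimality condition is the familiar orthogonality $\dotprod{a_i}{r} = 0$ for every $i \in S$, while for the $\ell^1$-based LAD term it becomes a subgradient (sign) condition on the residual; these relations clean up the formulas and, in particular, show that indices already in $S$ produce no first-order reduction, consistent with the OMP paradigm. For the complex cases ($\mathbb{F} = \mathbb{C}$, i.e.\ LASSO and SR-LASSO) I would then optimize over the phase of $t$ before its modulus: writing $t = \rho e^{i\theta}$ with $\rho \geq 0$, every term of $G(x + te_j)$ depends on $\theta$ only through $\Re(\bar t\,\dotprod{a_j}{r})$, which is extremized by aligning the phase of $t$ with that of $\dotprod{a_j}{r}$, collapsing the problem to a one-dimensional minimization over $\rho \geq 0$. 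The LAD-LASSO case is already real, hence directly one-dimensional.

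It then remains to solve three scalar problems. For the weighted LASSO the objective is a quadratic plus $\lambda w_j \rho$ (using $\|a_j\|_2 = 1$), whose minimizer is soft-thresholding and whose optimal value yields $\Delta = \big((|\dotprod{a_j}{r}| - \lambda w_j/2)_+\big)^2$ up to the normalization of the loss. For the weighted SR-LASSO the objective is $\sqrt{\|r\|_2^2 - 2\rho\,|\dotprod{a_j}{r}| + \rho^2} + \lambda w_j \rho$; setting the derivative to zero and solving the resulting stationarity equation gives a closed-form minimizer after a short case analysis. For the weighted LAD-LASSO the objective is piecewise linear and convex in $t$, with kinks at the ratios $r_i/(a_j)_i$ (well-defined thanks to the nonzero-columns hypothesis) and at $t = -x_j$; its minimum is attained at one of these breakpoints and is characterized by a weighted-median condition in which the regularizer $\lambda w_j|t|$ enters as an additional weighted term. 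Substituting the optimal $t$ back into $G(x + te_j)$ and subtracting from $G(x)$ then produces the formulas \eqref{eq:LASSO_WOMP_quantity}, \eqref{eq:SRLASSO_WOMP_quantity}, and \eqref{eq:LADLASSO_WOMP_quantity}.

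The main obstacle I anticipate lies in the SR-LASSO and LAD-LASSO cases. For SR-LASSO the square root is non-smooth at zero residual, so I expect to separate the regime in which the stationarity equation has an interior solution from the regime in which the minimizer is $\rho = 0$, and to verify that the candidate solution indeed lies in $[0,\infty)$. For LAD-LASSO the difficulty is combinatorial: one must identify the correct breakpoint among the $m+1$ candidates, which amounts to locating a weighted median of the ratios $r_i/(a_j)_i$ with weights $|(a_j)_i|$ (augmented by the regularization term), and then read off both the optimal $t$ and the resulting reduction in a clean explicit form.
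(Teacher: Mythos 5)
Your overall strategy---reducing to a scalar minimization in $t$, optimizing the phase before the modulus in the complex cases, soft-thresholding for LASSO, a stationarity case analysis for SR-LASSO, and a breakpoint/weighted-median argument for LAD-LASSO (the paper formalizes this last step by augmenting $A$ and $r$ with an extra row carrying the regularizer and invoking a univariate-LAD lemma)---is exactly the paper's. However, there is a genuine gap: you dismiss $j \in S$ as not ``the relevant case,'' justifying this by the claim that indices already in $S$ ``produce no first-order reduction.'' That claim is true only for the data-fidelity term $F$, whose (sub)gradient vanishes on $S$ by the local optimality of $x$; it is false for the full loss $G = F + \lambda R$. For $j \in S$ the loss can be strictly reduced by shrinking or zeroing the entry $x_j$, because the saving $\lambda w_j\left(|x_j| - |x_j+t|\right)$ in the regularizer can outweigh the increase in $F$: e.g.\ for the LASSO with $0 < |x_j| < \lambda w_j$, taking $t = -x_j$ gives $G(x+te_j) - G(x) = |x_j|^2 - \lambda w_j|x_j| < 0$. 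This is precisely why \eqref{eq:LASSO_WOMP_quantity} and \eqref{eq:SRLASSO_WOMP_quantity} contain nontrivial branches for $j \in S$, and the theorem asserts a formula for \emph{every} $j \in [N]$. As written, your proposal proves only the $j \notin S$ branches of the LASSO and SR-LASSO formulas.

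To close the gap you must treat $j \in S$ separately. There the orthogonality $\langle r, a_j\rangle = 0$ kills the cross term, so the phase of $t$ can no longer be fixed by aligning with $\langle r, a_j\rangle$ (which is zero); it must instead be chosen relative to $x_j$, via the triangle inequality $|x_j + t| \geq \bigl||x_j| - |t|\bigr|$, with equality when $t = \alpha x_j$ for some real $\alpha \leq 0$. This reduces the problem to minimizing $\rho^2 + \lambda w_j\bigl||x_j|-\rho\bigr|$ (LASSO) or $\sqrt{\rho^2 + \|r\|_2^2} + \lambda w_j\bigl||x_j|-\rho\bigr|$ (SR-LASSO) over $\rho \geq 0$, each needing its own case analysis ($\rho \lessgtr |x_j|$, and additionally $\lambda w_j \lessgtr 1$ for SR-LASSO); this is where the $j\in S$ expressions in \eqref{eq:LASSO_WOMP_quantity} and \eqref{eq:SRLASSO_WOMP_quantity}, including the quantity $\widetilde{\rho}$, come from. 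Your LAD-LASSO sketch, by contrast, does implicitly cover both cases, since you include the kink at $t = -x_j$ among the breakpoints and fold the regularizer into the weighted median as an extra weighted term, which matches the paper's augmentation of $A$ and $r$.
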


\subsection{Literature review}
\label{literature_review}
Weights have been employed in sparse recovery methods for various purposes. For instance, in the seminal work \cite{candes2008enhancing}, the authors propose to solve a sequence of (re)weighted $ \ell^1 $ minimization problems to enhance sparse signal recovery. In our context, weights can generally be thought of as a way of incorporating prior information about the signal into a sparse recovery model. In \emph{adaptive LASSO} \cite{huang2008adaptive, zou2006adaptive}, a data-driven but careful choice of weights is shown to admit near oracle properties. Works such as \cite{friedlander2011recovering,yu2013sufficient} show that replacing the $ \ell^1 $-norm with its weighted version can improve recovery assuming that accurate (partial) support knowledge is provided. A similar result was derived in \cite{khajehnejad2011analyzing} from a probabilistic point of view where the signal support is assumed to be formed by two subsets with different probability of occurrence. Further studies of weighted $ \ell^1 $ minimization and its impactful application in the context of function approximation from pointwise samples and uncertainty quantification include \cite{adcock2018infinite, adcock2019correcting, adcock2022sparse, peng2014weighted, rauhut2016interpolation}. The notion of weighted sparsity was formalized in \cite{rauhut2016interpolation}. Weighted sparsity is  related to structured sparsity (see \cite{baraniuk2010model}). In fact it allows one to \emph{promote} structures (rather than being a structure itself). For example, in the context of high-dimensional function approximation (see \cite{adcock2022sparse} and references therein) weights are able to promote so-called \emph{sparsity in lower sets}, which largely contributes to mitigating the curse of dimensionality in the sample complexity. Using the weighted $ \ell^1 $ minimization to improve the sample complexity was also addressed in \cite{bah2016sample} in the signal processing context. Apart from convex $ \ell^1 $-minimization, weights are implemented in algorithms such as weighted IHT \cite{jo2013iterative}, and weighted OMP \cite{adcock2020sparse, bouchot2017multi, li2013weighted,wu2013weighted} (see below).

OMP and its non-orthogonalized version, Matching Pursuit (MP), were introduced in \cite{mallat1993matching,pati1993orthogonal} for time-frequency dictionaries, and later analyzed in, e.g., \cite{needell2009uniform,tropp2004greed,tropp2007signal}. Well-known advantages of OMP are its simple and intuitive formulation and its computational efficiency, especially for small values of sparsity. A lot of research has been devoted to improve OMP, e.g., by allowing the algorithm to select several indices at each iteration or combining it with thresholding strategies \cite{davies2008faster, donoho2012sparse,needell2009cosamp,needell2009uniform}, or by optimizing the greedy selection rule \cite{rebollo2002optimized}. The loss function-based perspective adopted in our work is related to the approach in \cite{shalev2010trading,zhang2011sparse}. However, there are at least two key differences with our setting: (i) we do not assume the loss function to be differentiable and (ii) the corresponding greedy selection criterion is not based on the gradient of the loss function. Our framework extends both the standard OMP algorithm and the weighted OMP algorithm proposed in \cite{adcock2020sparse}, based on $\ell^0$ regularization. To the best of our knowledge, the only other works that incorporate weights into OMP, but with different weighting strategies than those proposed here, are \cite{bouchot2017multi,li2013weighted,wu2013weighted}. 

Let us finally consider the family of greedy coordinate descent algorithms (see, e.g., \cite{li2009coordinate}). They aim to solve a given optimization problem by selecting one coordinate index at a time and minimizing the loss function with respect to the corresponding entry while freezing all the others. Although their greedy selection method coincides with the one adopted in this paper, greedy coordinate descent algorithms differ from loss function-based OMP since their greedy index selection is not combined with the solution of a local data-fitting optimization problem of the form \eqref{eq:local_opt_intro}. In addition, the greedy coordinate selection in \cite{li2009coordinate} is only explicitly computed for the unweighted LASSO, whereas here we derive explicit greedy index selection rules for weighted LASSO, SR-LASSO and LAD-LASSO. 

\subsection{Outline of the paper}
The rest of the paper is organized as follows. In \S2 we discuss in detail the loss function-based OMP framework summarized in \S\ref{s:summary_main} and present weighted $*$-LASSO-based greedy selection rules. Then, we illustrate the practical performance of $*$-LASSO-based OMP through numerical experiments in \S\ref{sec:numerics} and outline open problems and future research directions in \S\ref{s:conclusions}. Appendix~\ref{app:proofs} contains the proofs of Theorems~\ref{thm:WLASSO}, \ref{thm:WSRLASSO} and \ref{thm:WLADLASSO}, stated in \S\ref{sec:greedy_selection}. In Appendix~\ref{app:l0_greedy_selection}, we present $ \ell^0 $-based variants of the proposed algorithms.

\section{LASSO-based weighted OMP}

In this section we present LASSO-based weighted OMP (WOMP) algorithms. In order to theoretically justify our methodology, we first review the rationale behind greedy algorithms such as OMP, emphasizing the role played by certain (regularized) loss functions. 

\subsection{Loss function-based OMP}
\label{ss:loss_function_based_OMP}
Greedy algorithms such as OMP are iterative procedures characterized by the following two steps:
\begin{enumerate}[(i)]
\item the iterative construction of signal's support by means of greedy index selection; \label{greedy:support}
\item the computation of signal's entries on (a subset of) the constructed support by solving a ``local'' optimization problem.  \label{greedy:optimization}
\end{enumerate}
In this section, we describe a general paradigm to perform these two operations (and, consequently, design greedy algorithms) from the perspective of loss functions. Specifically, we consider an optimization problem of the form
\begin{equation}
\label{eq:general_loss}
\min_{z \in\mathbb{C}^N} G(z) := \min_{z \in\mathbb{C}^N} \left(F(z) + \lambda R(z)\right),
\end{equation}
where $G, F, R:\mathbb{C}^N \to [0,+\infty)$ and $\lambda \geq 0$. Here $G$ is a (regularized) loss function, composed by a \emph{data fidelity} term $F$ and a \emph{regularization} term $R$, balanced by a \emph{tuning parameter} $\lambda$.

Aiming to minimize $G$, in Step \eqref{greedy:support} an OMP-type greedy algorithm constructs the signal support by selecting the index (or indices) leading to a maximal reduction of the loss function $G$---this is why this type of algorithm is called ``greedy''. 
Specifically, given a support set $S^{(k-1)}$ and an approximation $x^{(k-1)}$, at iteration $k$ the algorithm constructs a new index set $S^{(k)}$ as follows:
\begin{equation*}
	S^{(k)} = S^{(k-1)} \cup \{j^{(k)}\}, \quad \text{where} \;
	j^{(k)} \in \arg\max_{j\in[N]} \Delta(x^{(k-1)}, S^{(k - 1)}, j), 
\end{equation*}
with $\Delta:\mathbb{C}^N \times 2^{[N]} \times [N] \to [0,+\infty)$ (where $2^X$ denotes the power set of $X$) implicitly defined by
\begin{equation}
\label{eq:def_Delta}
\min_{t \in \bC}G(x^{(k-1)} + te_j) := G(x^{(k-1)}) - \Delta(x^{(k-1)}, S^{(k - 1)}, j).
\end{equation}
Here $\Delta(z, S, j)$ is the loss function reduction corresponding to adding the index $j \in [N]$ to the support $S \subseteq [N]$ and given a current approximation $z \in \mathbb{C}^N$. In fact, rearranging the above relation leads to
$$
\Delta(x^{(k-1)}, S^{(k - 1)}, j)  = \max_{t \in \bC} [ G(x^{(k-1)}) - G(x^{(k-1)} + te_j) ].
$$

After a suitable updated support $S^{(k)}$ is identified, in Step \eqref{greedy:optimization} the approximation $x^{(k-1)}$ is updated as $x^{(k)}$ by solving a local data-fitting optimization problem. This optimization problem takes the form
\begin{equation}
	\label{eq:general_greedy_step_2}
	x^{(k)} \in \arg\min_{z \in \bC^N}F(z)\quad \text{s.t.}\; \supp(z) \subseteq S^{(k)}.
\end{equation}
Note that this local optimization only involves the data-fidelity term $F$ and not the regularization term $R$. As we will see, this will lead to theoretical benefits in order to formally certify that $\Delta$ corresponds to the maximal reduction of $G$. Moreover, the choice of the support $S^{(k)}$ is already regularized. Therefore, the local optimization performs only a data fitting step onto the regularized subspace $\{z\in\mathbb{C}^N : \supp(z) \subseteq S^{(k)}\}$. This can be summarized in the following iteration.

\begin{center}
\begin{tcolorbox}[colback = white]
\textbf{Loss function-based OMP iteration}
\begin{align}
	\label{eq:general_greedy_step_1}
	S^{(k)} &= S^{(k-1)} \cup \{j^{(k)}\}, \quad \text{with} \;
	j^{(k)} \in \arg\max_{j\in[N]} \Delta(x^{(k-1)}, S^{(k - 1)}, j) \text{ and } \Delta \text{ as in \eqref{eq:def_Delta}}\\
	\label{eq:general_greedy_step_2}
	x^{(k)} & \in \arg\min_{z \in \bC^N}F(z)\quad \text{s.t.}\; \supp(z) \subseteq S^{(k)}
\end{align}
\end{tcolorbox}
\end{center}
We now revisit the standard OMP algorithm and the weighted OMP algorithm of \cite{adcock2020sparse} in light of the above framework.

\paragraph{Standard OMP.} With the above discussion in mind, we consider the least-squares loss function, without regularization (i.e., $\lambda =0$),
\begin{equation}
	\label{eq:least_squares_loss}
    G^{\mathrm{LS}}(z)= F^{\mathrm{LS}}(z) := \|y - Az\|_2^2, \quad \forall z \in \mathbb{C}^N,
\end{equation}
where $ y \in \bC^m $ is a vector of measurements (or observations) and $ A \in \bC^{m \times N} $ is a measurement (or design) matrix with $\ell^2$-normalized columns. With this choice, steps \eqref{eq:general_greedy_step_1} and \eqref{eq:general_greedy_step_2} correspond to the following well-known iteration of the OMP algorithm:
\begin{align}
S^{(k)} &= S^{(k-1)} \cup \{j^{(k)}\},\quad \text{where}\;j^{(k)} \in \arg\max_{j \in [N]} \Delta^{\mathrm{LS}},\quad \Delta^{\mathrm{LS}} = |(A^*(y - Ax^{(k-1)}))_j|, \label{eq:omp_step_1}\\
x^{(k)} &\in \arg\min_{z \in \bC^N}\|y - Az\|_2^2\quad \text{s.t.}\quad \supp(z) \subseteq S^{(k)} \label{eq:omp_step_2}.
\end{align}
Interestingly, in OMP the index selected at each iteration maximizes, at the same time, the correlation between the columns of the matrix $ A $ and the residual vector $ r^{(k-1)}:= y - Ax^{(k-1)} \in \bC^m $ and the least-squares loss reduction. In fact, it is possible to show that (see, e.g., \cite[Lemma 3.3]{foucart2013mathematical}) the problem
$$ \min_{t \in \bC}G^{\mathrm{LS}}(z + te_j) = G^{\mathrm{LS}}(z) - |(A^*(y - Az))_j|^2 $$
prescribes the choice of the new greedy index.

We note that sparsity is not directly promoted by minimizing the least-squares loss function $G^{\mathrm{LS}}$ of OMP. In fact, in OMP the sparsity of the approximated solution is directly related with the number of iterations. 
Specifically, each iteration adds a single index to the support $S^{(k)}$. Hence, 
running $ s $ iterations of OMP generates an $s$-sparse vector (i.e., with $\|x^{(s)}\|_0 \leq s$). Although very powerful in the case of standard sparsity, standard OMP does not directly allow one to promote other sparsity structures, such as, e.g., weighted sparsity \cite{rauhut2016interpolation}.\\

In this paper, we focus on algorithms that are able to promote weighted sparsity  \cite{rauhut2016interpolation}. Recall that, given a vector of weights $ w \in \bR^N $ with $ w > 0 $, the weighted $\ell^0_w$- and $\ell^1_w$-norm of a vector $ z \in \bC^N $ are defined as %
\begin{equation}
\label{eq:def_l0w_l1w_norms}
\|z\|_{0, w} := \sum_{j \in \supp(z)}w_j^2 \\
\quad \text{and} \quad
\|z\|_{1, w} := \sum_{j \in \supp(z)}w_j|x_j|,
\end{equation}
respectively \cite{rauhut2016interpolation}. As the use of weights allows one to encourage hidden structures in the ground truth signal, it is highly application dependent. Examples of specific applications with explicit weight choices include sparse polynomial approximation~\cite{adcock2022sparse}, recovery with partial support information \cite{friedlander2011recovering,yu2013sufficient} and sparse-in-levels signal reconstruction \cite{adcock2021compressive}. In accordance with the loss function-based perspective adopted in this section, we promote weighted sparsity through the regularized loss function $G$, in particular, by suitable choice of the regularization term $R$.
This idea was recently employed in \cite{adcock2020sparse} in the context of sparse high-dimensional function approximation, as illustrated in the next paragraph.

\paragraph{$\ell^0_w$-based Weighted OMP ($\ell^0_w$-WOMP).} Inspired by the unconstrained LASSO formulation (see \S\ref{subsub:weighted_l1_formulations}) \cite{adcock2020sparse} suggested to adopt the $\ell^0_w$-regularized least squares loss function
\begin{equation}
\label{eq:l0_lasso_loss}
	G_{\ell^0_w}(z) = \|y - Az\|_2^2 + \lambda \|z\|_{0, w} \quad \forall z \in \mathbb{C}^N.
\end{equation}
Although the loss function $G_{\ell^0_w}$ is nonconvex and discontinuous, the corresponding loss reduction function $\Delta_{\ell^0_w}$ can be explicitly computed as
\begin{equation}
		\label{eq:l0_WOMP_quantity}
		\Delta_{\ell^0_w}(x, S, j) = \begin{cases}
			\max\{|(A^*(Ax - y))_j|^2 - \lambda w_j^2, 0\} & j \notin S \\ 
			\max\{\lambda w_j^2 - |x_j|^2, 0\} & j \in S, \; x_j \neq 0   \\
			0 & j \in S, \; x_j = 0
		\end{cases},
\end{equation}
under the assumption that $A$ has $\ell^2$-normalized columns and that 
\begin{equation*}
		x \in \arg\underset{z \in \bC^N}{\min} \|y - Az\|_2^2 \quad \text{s.t.} \quad \mathrm{supp}(z) \subseteq S.
\end{equation*}
For more details and a proof of this result, we refer to \cite[Proposition~1]{adcock2020sparse}.
This leads to what we will refer to as the $\ell^0_w$-weighted OMP ($\ell^0_w$-WOMP) algorithm, defined by the following iteration:
	\begin{align}
		S^{(k)} &= S^{(k-1)} \cup j^{(k)},\quad \text{where}\;j^{(k)} = \arg\max_{j\in[N]} \Delta_{\ell^0_w}(x^{(k-1)}, S^{(k-1)}, j), \\
		x^{(k)} &\in \arg\min_{z \in \bC^N}\|y - Az\|_2^2\quad \text{s.t}\quad \supp(z) \subseteq S^{(k)} \label{eq:l0_WOMP_step_2}.
	\end{align}
Note that for $\lambda = 0$, the $\ell^0_w$-WOMP algorithms coincides with standard OMP.
On top of allowing one to incorporate weights, using a regularized loss function such as $G_{\ell^0_w}$ also improves the robustness of OMP with respect to the stopping criterion. In fact, the presence of a regularization term prevents the greedy algorithm from \emph{overfitting} due to an excessive number of iterations (see \cite{adcock2020sparse} for details on numerical results).\\

However, there is no free lunch. The possibility of including weights and the improved robustness with respect to the number of iterations come at the cost of adding an extra parameter $ \lambda $ that needs to be tuned appropriately. Unfortunately, the optimal value of $ \lambda $ (i.e., the value that minimizes the reconstruction error) depends on characteristics of the model such as the sparsity of the ground truth signal or the magnitude of the noise corrupting the measurements. This makes $\lambda$ challenging to tune in general. Luckily, some insights on how to tune $\lambda$ can be found in the convex optimization literature for LASSO-type loss functions. These are described in the next subsection and constitute the foundation upon which we will design the class of LASSO-inspired greedy algorithms proposed in this paper.

\subsection{LASSO-type loss functions for weighted $\ell^1$ minimization}\label{subsub:weighted_l1_formulations}
In this subsection we introduce different convex optimization programs that have been extensively used for weighted sparse signal recovery. Consider a vector of weights $ w \in\mathbb{R}^N$ with $w >0$. We aim to recover a sparse vector $ x \in \bC^N $ from measurements $ y = Ax + e  \in \bC^m $, where $ e \in \bC^m $ is an error or noise vector corrupting the measurements. This could include errors from various sources, such as physical noise (e.g., from measurement devices), numerical or discretization error (e.g., from numerical solvers), or sparse corruptions (e.g., from node failures in a parallel computing setting). 
In the context of weighted sparse recovery, an approximation to the signal $x$ from noisy measurements $y$ can be computed by solving one of the unconstrained weighted $\ell^1$ minimization problems discussed below.\footnote{Here we do not consider constrained programs such as quadratically constrained basis pursuit or constrained LASSO (see, e.g., \cite{adcock2021compressive,adcock2022sparse,rauhut2016interpolation}) since they do not fit our framework.} We organize our discussion based on the nature of the noise $e$ corrupting the measurements.

\paragraph{Bounded noise: weighted LASSO and SR-LASSO.} If  the noise satisfies a bound of the form $\|e\|_2\leq \eta$ for a small constant $\eta$ (that might be known or unknown in advance), weighted quadratically constrained basis pursuit is one of the most popular weighted $\ell^1$-minimization strategies \cite{adcock2021compressive, adcock2022sparse, rauhut2016interpolation}. However, it requires the knowledge of $\eta$ and it is a constrained optimization problem---hence, it is not of the form \eqref{eq:general_loss}. For this reason, we do not consider it further in this paper. A popular recovery strategy of the form \eqref{eq:general_loss} is the (unconstrained) \emph{weighted LASSO}, defined by the loss function
\begin{equation}
\label{eq:loss_WLASSO}
G_{\ell^1_w}(z) :=  \|y - Az\|_2^2 + \lambda \|z\|_{1, w}, \quad \forall z \in \mathbb{C}^N.
\end{equation}
The LASSO dates back at least to the pioneering works \cite{santosa1986linear,tibshirani1996regression} and since then has become one of the most widespread optimization problems in statistics and data science. Although the LASSO eliminates the need for an explicit knowledge of $\eta$, the choice of its tuning parameter $\lambda$ is not straightforward. The range of values of $\lambda$ that lead to theoretical optimal recovery guarantees scales linearly in $\|e\|_2$, i.e., more specifically, $ \lambda \asymp \|e\|_2/\sqrt{\|x\|_{0,w}} $, \cite{adcock2019correcting} (or, when $e$ is a normal random vector, on its standard deviation; see, e.g., \cite{bickel2009simultaneous,shen2015stable}). In practice, this means that $\lambda$ should often be tuned via \emph{cross validation} (see, e.g.,  \cite{hastie2009elements}) that, although generally accurate, is often computationally daunting. 

To alleviate this issue, an alternative strategy of the form \eqref{eq:general_loss} is the weighted \emph{Square-Root LASSO (SR-LASSO)}, whose loss function is defined by
\begin{equation}
\label{eq:loss_WSRLASSO}
G^{\textnormal{SR}}_{\ell^1_w}(z) := \|y - Az\|_2 + \lambda \|z\|_{1, w}, \quad \forall z \in \mathbb{C}^N.
\end{equation}
The (unweighted) SR-LASSO was proposed in \cite{belloni2011square}. It is a well known optimization problem in statistics (see, e.g., the book \cite{van2016estimation}), and has become increasingly popular in compressive sensing \cite{adcock2019correcting,adcock2022log,adcock2022sparse, foucart2022sparsity, petersen2022robust}. There is only a small difference between the objective functions of SR-LASSO and LASSO, i.e., the lack of the exponent 2 on the data-fidelity term of SR-LASSO. However, this slight difference gives rise to substantial changes. It has been demonstrated both theoretically and empirically that the optimal choice of $\lambda$ for (weighted) SR-LASSO is no longer dependent on the noise level, i.e., $ \lambda \asymp 1/\sqrt{\|x\|_{0,w}} $, which facilitates parameter tuning in the presence of unknown bounded noise \cite{adcock2019correcting,belloni2011square}.

\paragraph{Sparse corruptions: weighted LAD-LASSO.} When the noise corrupting the measurements is of the form 
$$e = e^{\textnormal{bounded}} + e^{\textnormal{sparse}},$$ 
where $\|e^{\textnormal{bounded}}\|_2$ and $\|e^{\textnormal{sparse}}\|_0$ are bounded, but $\|e^{\textnormal{sparse}}\|_2$ is possibly very large, the LASSO and SR-LASSO  are generally not able to achieve successful sparse recovery. A simple remedy is to use a data-fidelity term based on the $\ell^1$-norm, as opposed to the $\ell^2$-norm, thanks to its ability to promote sparsity on the residual. This is the idea behind the (unconstrained) weighted \emph{LAD-LASSO}, whose loss function is given by
\begin{equation}
\label{eq:loss_WLADLASSO}
G^{\textnormal{LAD}}_{\ell^1_w}(z) :=\|y - Az\|_1 + \lambda \|z\|_{1, w}, \quad \forall z \in \mathbb{C}^N.
\end{equation}
Early works on LAD-LASSO include \cite{laska2009exact,wang2007robust}. It is a regularized version of the classical \emph{Least Absolute Deviations} (\emph{LAD}) problem \cite{bloomfield1983least,candes2005decoding}. In addition to signal's weighted sparsity, in weighted LAD-LASSO, the optimal choice of the tuning parameter further depends on the sparsity of $ e^{\textnormal{sparse}} $, i.e., $ \lambda \asymp \sqrt{\|e^{\textnormal{sparse}}\|_0/\|x\|_{0,w}} $ \cite{adcock2019correcting, adcock2018compressed}. Nonetheless, the choice $ \lambda \asymp 1 $ usually works well in practice \cite{adcock2018compressed}.\\

\subsection{Two key questions}

Our objective for the rest of the paper is to study OMP-type greedy algorithms characterized by the iteration \eqref{eq:general_greedy_step_1}-\eqref{eq:general_greedy_step_2}, based on the weighted LASSO, SR-LASSO, and LAD-LASSO loss functions defined in \eqref{eq:loss_WLASSO}, \eqref{eq:loss_WSRLASSO}, and \eqref{eq:loss_WLADLASSO}, respectively. Our investigation is driven by two main questions: 

\begin{itemize}
\item [(Q1)] \emph{Is the quantity $\Delta$ in \eqref{eq:def_Delta}, defining the OMP-type greedy selection rule, explicitly computable for the weighted LASSO, SR-LASSO, and LAD-LASSO loss functions?}

\item [(Q2)] \emph{Are the favorable properties of the weighted SR-LASSO, and LAD-LASSO inherited by the corresponding OMP-type greedy algorithms?}
\end{itemize}
We will provide affirmative answers to both (Q1) and (Q2). The answer to (Q1) will be accompanied by explicit formulas for $\Delta$ and rigorous loss-function reduction guarantees, discussed in \S\ref{sec:greedy_selection}. The answer to (Q2) will be based on numerical evidence, presented in \S\ref{sec:numerics}. Specifically, we will show that SR-LASSO-based OMP admits a \emph{noise robust} optimal parameter tuning strategy (i.e., the optimal value of $\lambda$ is independent to the noise level) and that LAD-LASSO-based OMP is \emph{fault tolerant}, i.e.\ able to correct for high-magnitude sparse corruptions. 

\begin{remark}[$\ell^0_w$-based regularization]
It is possible to consider $\ell^0_w$-based loss functions for the SR-LASSO and the LAD-LASSO, similarly to the $\ell^0_w$-regularized least squares loss defined in \eqref{eq:l0_lasso_loss} and employed in \cite{adcock2020sparse} (which correspond to an $\ell^0_w$-based LASSO formulation). However, we have observed experimentally that (Q2) does not admit an affirmative answer for the $\ell^0_w$-based analogs (see Experiments I and II in \S\ref{s:numerics_lambda}). For this reason, we refrained from studying the $\ell^0_w$-based formulations in detail in the present paper. Nonetheless, we provide explicit formulas for $\Delta$ for these variants in Appendix~\ref{app:l0_greedy_selection}. 
\end{remark}

\section{Greedy selection rules for weighted LASSO-type loss functions}
\label{sec:greedy_selection}

Equipped with the general loss function-based OMP paradigm presented in \S\ref{ss:loss_function_based_OMP}, we present three weighted OMP iterations based on the LASSO, square-root LASSO, and LAD-LASSO loss functions reviewed in \S\ref{subsub:weighted_l1_formulations}. The proofs of the results in this section can be found in Appendix~\ref{app:proofs}.

\subsection{LASSO-based OMP}
We start by considering the LASSO loss function $G_{\ell^1_w}$ defined in \eqref{eq:loss_WLASSO}, whose corresponding greedy selection rule is identified by the following result.
\begin{theorem}[LASSO-based greedy selection rule]
	\label{thm:WLASSO}
	Let $ \lambda \geq 0 $, $ S \subseteq [N] $, $ A \in \bC^{m \times N} $ with $ \ell^2 $-normalized columns, and $ x \in \bC^N $ be such that
\begin{equation}
	x \in \arg\min_{z \in \bC^N} \|y - Az\|_2^2 \quad \text{s.t.} \quad \mathrm{supp}(z) \subseteq S.
\end{equation}
Then, for every $ j \in [N] $, 
$$ \min_{t \in \mathbb{C}} G_{\ell^1_w}(x + te_j) = G_{\ell^1_w}(x) - \Delta_{\ell^1_w}(x, S, j), $$
where 
\begin{equation}
\label{eq:LASSO_WOMP_quantity}
\Delta_{\ell^1_w}(x, S, j) 
= \begin{cases}
\max\left\{|(A^*(Ax - y))_j| - \frac{\lambda}{2}w_j, 0\right\}^2 & j \notin S \\
\max\left\{
|x_j|(\lambda w_j-|x_j|), 
\lambda w_j \left(|x_j|-\frac{\lambda w_j}{4} - \left||x_j| - \frac{\lambda w_j}{2}\right|\right), 
0
\right\} & j \in S
\end{cases}.
\end{equation}
\end{theorem}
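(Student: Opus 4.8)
The plan is to turn the minimization over $t \in \bC$ into an elementary scalar problem and then split into the cases $j \notin S$ and $j \in S$. First I would set $r = y - Ax$ and $c_j = (A^*r)_j = (A^*(y-Ax))_j$, and expand the squared data-fidelity term using that $A$ has $\ell^2$-normalized columns (so $\|Ae_j\|_2 = 1$). This gives, for every $t \in \bC$,
$$ G_{\ell^1_w}(x+te_j) - G_{\ell^1_w}(x) = |t|^2 - 2\Re(\bar t\, c_j) + \lambda w_j\bigl(|x_j+t| - |x_j|\bigr), $$
and hence $\Delta_{\ell^1_w}(x,S,j) = \max_{t \in \bC}\bigl[\,2\Re(\bar t\, c_j) - |t|^2 - \lambda w_j(|x_j+t| - |x_j|)\,\bigr]$. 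The key structural input is the first-order optimality (normal equations) of the constrained least-squares problem defining $x$: the residual $r$ is orthogonal to every column $a_i$ with $i \in S$, i.e.\ $(A^*r)_i = 0$ for all $i \in S$. This fact, together with $\supp(x)\subseteq S$, is precisely what decouples the two cases.

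In the case $j \notin S$ we have $x_j = 0$, so the regularization contribution is $\lambda w_j |t|$ and the objective becomes $2\Re(\bar t\, c_j) - |t|^2 - \lambda w_j|t|$. Writing $t$ in polar form $t = \rho\, e^{\mathrm{i}\theta}$ with $\rho \ge 0$, the cross term $\Re(\bar t\, c_j) = \rho|c_j|\cos(\theta - \arg c_j)$ is maximized by aligning the phase, $\theta = \arg c_j$, which reduces the problem to maximizing $\rho \mapsto (2|c_j| - \lambda w_j)\rho - \rho^2$ over $\rho \ge 0$. Elementary one-dimensional calculus yields the maximizer $\rho^\star = \max\{|c_j| - \tfrac{\lambda}{2}w_j,\,0\}$ and optimal value $\max\{|c_j| - \tfrac{\lambda}{2}w_j,\,0\}^2$, which is exactly the first branch of \eqref{eq:LASSO_WOMP_quantity}.

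The case $j \in S$ is where the work lies, since $x_j$ may be nonzero and the objective is genuinely nonsmooth in the complex variable $t$. Here optimality forces $c_j = 0$, so I must maximize $-|t|^2 - \lambda w_j(|x_j + t| - |x_j|)$, equivalently (via the substitution $u = x_j + t$) minimize $f(u) := |u - x_j|^2 + \lambda w_j|u|$ over $u \in \bC$ — a complex soft-thresholding problem. Aligning phases once more (now $\arg u = \arg x_j$) reduces this to minimizing $g(\rho) = (\rho - |x_j|)^2 + \lambda w_j\rho$ over $\rho \ge 0$. Splitting according to the sign of the unconstrained vertex $\rho = |x_j| - \tfrac{\lambda}{2}w_j$ gives
$$ \Delta_{\ell^1_w}(x,S,j) = \begin{cases} \tfrac14(\lambda w_j)^2 & |x_j| \ge \tfrac{\lambda}{2}w_j, \\[2pt] |x_j|\bigl(\lambda w_j - |x_j|\bigr) & |x_j| < \tfrac{\lambda}{2}w_j. \end{cases} $$

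The final step is a purely algebraic reconciliation: I would check that the two-branch formula above coincides with the stated $\max\bigl\{\,|x_j|(\lambda w_j - |x_j|),\ \lambda w_j(|x_j| - \tfrac{\lambda w_j}{4} - ||x_j| - \tfrac{\lambda w_j}{2}|),\ 0\,\bigr\}$. Evaluating the middle term on $|x_j|\ge \tfrac{\lambda}{2}w_j$ shows it equals $\tfrac14(\lambda w_j)^2$ and dominates the first (a downward parabola in $|x_j|$ peaking at $|x_j|=\tfrac{\lambda}{2}w_j$); on $|x_j| < \tfrac{\lambda}{2}w_j$ a short computation shows the first term dominates the middle one and is nonnegative; and the explicit $0$ absorbs the degenerate sub-case $x_j = 0$, $j\in S$. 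I expect the main obstacle to be the careful treatment of the nonsmooth complex minimization in the $j\in S$ case — specifically justifying the phase-alignment reduction and the constrained one-dimensional optimization at $\rho = 0$ — rather than the algebraic matching, which is routine. The phase-alignment device used throughout is exactly what will carry over to the SR-LASSO and LAD-LASSO proofs.
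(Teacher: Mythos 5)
Your proposal is correct and follows essentially the same route as the paper's own proof: the same split into $j \notin S$ and $j \in S$, the same use of the normal equations of the constrained least-squares problem to force $(A^*r)_j = 0$ when $j \in S$, and the same phase-alignment reduction of the complex minimization to a one-dimensional piecewise-quadratic problem. The only cosmetic difference is in the $j \in S$ case, where you substitute $u = x_j + t$ to obtain the standard soft-thresholding form (yielding a clean two-branch answer that you then reconcile algebraically with the stated max-formula), whereas the paper parametrizes by $\rho = |t|$ with $t$ anti-aligned to $x_j$ and arrives at the max-formula directly; the two computations are equivalent.
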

This leads to the following LASSO-based OMP iteration:
	\begin{align}
		S^{(k)} &= S^{(k-1)} \cup \{j^{(k)}\}\quad \text{where}\quad j^{(k)} \in \arg\max_{j\in[N]} \Delta_{\ell^1_w}(x^{(k-1)}, S^{(k-1)}, j) \label{eq:LASSO_OMP_step_1}\\
		x^{(k)} &\in \arg\min_{z \in \bC^N}\|y - Az\|_2^2\quad \text{s.t. }\supp(z) \subseteq S^{(k)} \label{eq:LASSO_OMP_step_2}.
	\end{align}

\subsection{SR-LASSO-based OMP}
For the SR-LASSO loss function $G_{\ell^1_w}^{\mathrm{SR}}$ defined in \eqref{eq:loss_WSRLASSO}, we have the following result.
\begin{theorem}[SR-LASSO-based greedy selection rule]
	\label{thm:WSRLASSO}
	Let $ \lambda \geq 0 $, $ S \subseteq [N] $, $ A \in \bC^{m \times N} $ with $ \ell^2 $-normalized columns, and $ x \in \bC^N $ satisfying 
\begin{equation}
	x \in \arg\min_{z \in \bC^N} \|y - Az\|_2 \quad \text{s.t.} \quad \mathrm{supp}(z) \subseteq S.
\end{equation}
Then, for every $ j \in [N] $,
$$ \min_{t\in \mathbb{C}} G_{\ell^1_w}^{\mathrm{SR}}(x + te_j) = G_{\ell^1_w}^{\mathrm{SR}}(x) - \Delta_{\ell^1_w}^{\mathrm{SR}}(x, S, j), $$
where 
\begin{equation}
	\label{eq:SRLASSO_WOMP_quantity}
	\Delta_{\ell^1_w}^{\mathrm{SR}}(x, S, j) = \begin{cases}
		\max \left\{\|r\|_2 - \lambda w_j |\langle r, a_j\rangle| - \sqrt{(1 - (\lambda w_j)^2)(\|r\|_2^2 - |\langle r, a_j\rangle|^2)}, 0\right\} &  j \notin S \\
		\|r\|_2 - \sqrt{\widetilde{\rho}^2 + \|r\|_2^2} + \lambda w_j\left( |x_j|  - \left||x_j| - \widetilde{\rho}\right|\right) &  j \in S\\ 
	\end{cases},
\end{equation} 
with $ r = y - Ax $ and 
$$
\widetilde{\rho} 
    := \begin{cases}
    |x_j| & \lambda w_j \geq 1 \\
    \min\left\{|x_j|, \frac{\lambda w_j \|r\|_2}{\sqrt{1 - (\lambda w_j)^2}}\right\}& \lambda w_j < 1 
    \end{cases}.
$$
\end{theorem}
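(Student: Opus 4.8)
The plan is to evaluate the one-dimensional minimization $\min_{t \in \mathbb{C}} G^{\mathrm{SR}}_{\ell^1_w}(x + te_j)$ directly by exploiting the structure of the SR-LASSO loss. Writing $r = y - Ax$ and letting $a_j$ denote the $j$-th column of $A$, I first note that $A(x + te_j) = Ax + t a_j$, so $\|y - A(x+te_j)\|_2 = \|r - t a_j\|_2$, while the weighted $\ell^1$ term changes only in its $j$-th coordinate. This reduces the problem to minimizing $\|r - t a_j\|_2 + \lambda w_j|x_j + t|$, up to the additive constant $\lambda(\|x\|_{1,w} - w_j|x_j|)$. The first key step is to \emph{reduce the complex optimization to a real one}. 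Since $\|a_j\|_2 = 1$, one has $\|r - ta_j\|_2^2 = \|r\|_2^2 - 2\Re(\bar t\dotprod{a_j}{r}) + |t|^2$; for fixed modulus $|t| = \rho$, the phase of $t$ enters only through $-2\Re(\bar t\dotprod{a_j}{r})$ (when $j\notin S$, where $x_j=0$ so the $\ell^1$ term is phase-independent) and can be chosen to make it equal to $-2\rho|\dotprod{r}{a_j}|$. For $j \in S$, I would instead substitute $u = x_j + t$ and rotate coordinates along $x_j/|x_j|$, showing that any component of $u$ orthogonal (in the rotated frame) to $x_j$ strictly increases both $\|r - ta_j\|_2$ and $|u|$, so the optimal $u$ is a real multiple of $x_j/|x_j|$.

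The second key step uses the hypothesis that $x$ minimizes $\|y - Az\|_2$ over $\supp(z)\subseteq S$. This forces $Ax$ to be the orthogonal projection of $y$ onto $\mathrm{range}(A_S)$, whence $\dotprod{r}{a_i} = 0$ for every $i \in S$. In particular, for $j \in S$ the cross term vanishes and $\|r - ta_j\|_2 = \sqrt{\|r\|_2^2 + |t|^2}$, which drastically simplifies that case. With these reductions I treat the two cases separately. For $j \notin S$, the objective becomes the scalar function $\psi(\rho) = \sqrt{\|r\|_2^2 - 2\rho|\dotprod{r}{a_j}| + \rho^2} + \lambda w_j\rho$ on $\rho \geq 0$. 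I set $\psi'(\rho) = 0$, solve the resulting quadratic (which has an admissible stationary point with $\rho \leq |\dotprod{r}{a_j}|$ only when $\lambda w_j < 1$), and substitute back; a short computation shows the minimal value equals $\lambda w_j|\dotprod{r}{a_j}| + \sqrt{(1-(\lambda w_j)^2)(\|r\|_2^2 - |\dotprod{r}{a_j}|^2)}$, so that subtracting from $G^{\mathrm{SR}}_{\ell^1_w}(x)=\|r\|_2 + \lambda\|x\|_{1,w}$ yields exactly the claimed formula. When $\lambda w_j \geq 1$, or when the stationary point is negative, the minimum is attained at $\rho = 0$ and $\Delta = 0$, which is precisely what the $\max\{\cdot,0\}$ encodes.

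For $j \in S$, after the reduction the objective is $g(p) = \sqrt{\|r\|_2^2 + (p - |x_j|)^2} + \lambda w_j p$ with $p \in [0, |x_j|]$. Here I show the minimizer corresponds to $\widetilde\rho := |x_j| - p^\ast$, where $p^\ast$ solves $\widetilde\rho/\sqrt{\|r\|_2^2 + \widetilde\rho^2} = \lambda w_j$ when $\lambda w_j < 1$ (truncated at $|x_j|$ when this value exceeds the endpoint, and equal to $|x_j|$ when $\lambda w_j \geq 1$), matching the stated definition of $\widetilde\rho$; substituting back and using $|x_j| - \bigl||x_j| - \widetilde\rho\bigr| = \widetilde\rho$ on the admissible range recovers the formula.

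The main obstacle I anticipate is the careful case analysis around the boundary of the feasible interval and the threshold $\lambda w_j = 1$: one must verify that the unconstrained stationary point lands in the admissible range (otherwise the minimizer sits at an endpoint), that the $\max\{\cdot,0\}$ and the $\min\{\cdot,\cdot\}$ in $\widetilde\rho$ correctly capture these truncations, and that degenerate situations (such as $x_j = 0$ with $j\in S$, or $\|r\|_2 = 0$) are handled consistently. The algebraic simplification producing the closed form in the $j \notin S$ case is routine but must be carried out with care to confirm the exact constants.
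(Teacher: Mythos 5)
Your proposal is correct and follows essentially the same route as the paper's proof: split into the cases $j \notin S$ and $j \in S$, use the orthogonality $\langle r, a_i\rangle = 0$ for $i \in S$ coming from the least-squares hypothesis, reduce the complex minimization to a real scalar problem, and carry out the stationary-point/endpoint analysis with the threshold $\lambda w_j = 1$ and the truncation encoded by $\max\{\cdot,0\}$ and $\widetilde{\rho}$. The only cosmetic difference is in Case 2, where you justify the reduction to the ray through $x_j$ via an orthogonal decomposition of $u = x_j + t$ (parametrized by $p = |u|$), while the paper uses the triangle inequality $|x_j + t| \geq \bigl||x_j| - |t|\bigr|$ (parametrized by $\rho = |t| = |x_j| - p$); the resulting one-dimensional problems are identical under this change of variable.
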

The corresponding SR-LASSO-based OMP iteration reads
	\begin{align}
			S^{(k)} &= S^{(k-1)} \cup \{j^{(k)}\}\quad \text{where}\quad j^{(k)} \in \arg\max_{j\in[N]} \Delta_{\ell^1_w}^{\mathrm{SR}}(x^{(k-1)}, S^{(k-1)}, j), \label{eq:SR-LASSO_OMP_step_1}\\
		x^{(k)} &\in \arg\min_{z \in \bC^N}\|y - Az\|_2\quad \text{s.t. }\supp(z) \subseteq S^{(k)}.\label{eq:SR-LASSO_OMP_step_2}
	\end{align}

\subsection{LAD-LASSO-based OMP}
Finally, we consider the LAD-LASSO loss function $G_{\ell^1_w}^{\mathrm{LAD}}$ defined in \eqref{eq:loss_WLADLASSO}. In this case, we restrict ourselves to the real-valued case for the sake of simplicity. In order to formulate the corresponding greedy selection rule, we need to introduce some auxiliary notation. First, we define an augmented version $\widetilde{A} \in \mathbb{R}^{(m+1) \times N}$ of the matrix $A \in \mathbb{R}^{m\times N}$ as 
\begin{equation}
\label{eq:def_Atilde}
\widetilde{A} := 
\begin{bmatrix}A \\ \lambda w^*\end{bmatrix}
\quad \text{ or, equivalently, } \quad 
\widetilde{A}_{ij}
:=
\begin{cases}
A_{ij} & i \in [m], \; j \in [N]\\
\lambda w_j & i =m+1, \; j\in [N]
\end{cases}.
\end{equation}
In addition, given $x\in \mathbb{R}^N$, we consider $N$ augmentations of the residual vector $r = Ax-y \in \mathbb{R}^N$ as the vectors $\widetilde{r}^{\, j} \in \mathbb{R}^{m+1}$, defined by
\begin{equation}
\label{eq:def_rjtilde}
\widetilde{r}^{\, j} := 
\begin{bmatrix}r \\ -\lambda w_j x_j\end{bmatrix}
\quad \text{ or, equivalently, } \quad
\widetilde{r}_i^{\, j}
:=
\begin{cases}
r_i & i \in [m]\\
-\lambda w_j x_j & i =m+1
\end{cases},
\quad \forall j \in [N].
\end{equation}
Let $\widetilde{a}_j$ be the $j$th column of $\widetilde{A}$ and $ \tau_j: [\|\widetilde{a}_j\|_0] \to \supp(\widetilde{a}_j) $ be a bijective map defining a nondecreasing rearrangement of the vector 
$$
\left(\frac{\widetilde{r}_i^{\, j}}{\widetilde{A}_{ij}}\right)_{i \in \supp(\widetilde{a}_j)} \in \bR^{\|\widetilde{a}_j\|_0},
$$
i.e., such that
\begin{equation}
\label{eq:def_tau_j_nondecreasing}
\frac{\widetilde{r}^{\, j}_{\tau_j(1)}}{\widetilde{A}_{\tau_j(1), j}} 
\leq \frac{\widetilde{r}^{\, j}_{\tau_j(2)}}{\widetilde{A}_{\tau_j(2), j}} 
\leq \dots 
\leq \frac{\widetilde{r}^{\, j}_{\tau_j(\|\widetilde{a}_j\|_0)}}{\widetilde{A}_{\tau_j(\|\widetilde{a}_j\|_0), j}}.
\end{equation}
We are now in a position to state our result.

\begin{theorem}[LAD-LASSO-based greedy selection rule]
	\label{thm:WLADLASSO}
Let $ \lambda \geq 0 $, $ S \subseteq [N] $, $ A \in \bR^{m \times N} $ with nonzero columns, and $ x \in \bR^N $ satisfying 
\begin{equation}
	x \in \arg\min_{z \in \bR^N} \|y - Az\|_1 \quad \mathit{s.t.} \quad \mathrm{supp}(z) \subseteq S.
\end{equation}
Then, for every $ j \in [N] $,
$$ \min_{t \in \bR} G_{\ell^1_w}^{\mathrm{LAD}}(x + te_j) = G_{\ell^1_w}^{\mathrm{LAD}}(x) - \Delta_{\ell^1_w}^\mathrm{LAD}(x, S, j), $$
where 
\begin{equation}
	\label{eq:LADLASSO_WOMP_quantity}
	\Delta_{\ell^1_w}^{\mathrm{LAD}}(x, S, j) = 
 \lambda w_j|x_j| + \|r\|_1
 - \left\|\widetilde{r}^{\, j} - \frac{\widetilde{r}^{\, j}_{\hat{i}(j)}}{\widetilde{A}_{\hat{i}(j), j}} \widetilde{a}_j\right\|_1,
\end{equation}
with $ \hat{i}(j) := \tau_j(\hat{k}(j)) $ and 
\begin{equation}
    \label{w_median_l1_A_tilde}
\hat{k}(j) := 
	\min\left\{k \in [\|\widetilde{a}_j\|_0]: \sum_{i = 1}^{k}\frac{|\widetilde{A}_{\tau_j(i), j}|}{\|\widetilde{a}_j\|_1} \geq \frac{1}{2}\right\},
\end{equation} 
and where $\widetilde{A}$, $\widetilde{r}^{\, j}$, and $\tau_j$ are defined as in \eqref{eq:def_Atilde}, \eqref{eq:def_rjtilde}, and \eqref{eq:def_tau_j_nondecreasing}, respectively.
\end{theorem}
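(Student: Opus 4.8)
We need to prove Theorem~\ref{thm:WLADLASSO}, the LAD-LASSO greedy selection rule. The setup: we have the loss function $G^{\mathrm{LAD}}_{\ell^1_w}(z) = \|y - Az\|_1 + \lambda\|z\|_{1,w}$, and $x$ is a minimizer of $\|y-Az\|_1$ subject to $\supp(z) \subseteq S$. We want to compute $\min_{t \in \mathbb{R}} G^{\mathrm{LAD}}_{\ell^1_w}(x + te_j)$ and show it equals $G^{\mathrm{LAD}}_{\ell^1_w}(x) - \Delta^{\mathrm{LAD}}_{\ell^1_w}(x,S,j)$ with the explicit $\Delta$ formula.

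**What's the core difficulty?**

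The key challenge compared to LASSO/SR-LASSO: the $\ell^1$ data-fidelity term means the one-dimensional minimization over $t$ is a piecewise-linear convex optimization, not a smooth one. Let me think about the structure.

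Let me write out $G^{\mathrm{LAD}}_{\ell^1_w}(x+te_j)$. Set $r = Ax - y$ (note the sign convention in the theorem: $r = Ax - y$, matching the LAD definition). Then:
$$\|y - A(x+te_j)\|_1 = \|r + t a_j\|_1 = \sum_{i=1}^m |r_i + t A_{ij}|.$$

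And the regularization term:
$$\lambda \|x + te_j\|_{1,w} = \lambda \sum_{l \neq j} w_l |x_l| + \lambda w_j |x_j + t|.$$

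So the change from $G(x)$ involves only the terms that depend on $t$:
$$G(x + te_j) = \sum_{i=1}^m |r_i + t A_{ij}| + \lambda w_j|x_j + t| + \text{(const in } t).$$

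**The augmentation trick.**

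Now here's the clever part that the theorem's notation is set up for. The term $\lambda w_j |x_j + t|$ can be absorbed into the $\ell^1$ sum! Write $\lambda w_j |x_j + t| = |\lambda w_j x_j + \lambda w_j t|$. This is exactly the $(m+1)$-th coordinate of an augmented residual. Let $\tilde{A}$ have the extra row $\lambda w^*$ (so $\tilde{A}_{m+1, j} = \lambda w_j$) as in \eqref{eq:def_Atilde}. We want the augmented residual $\tilde{r}^{\,j}$ whose $(m+1)$-th entry, when we add $t\tilde{A}_{m+1,j} = t\lambda w_j$, gives us $\lambda w_j(x_j + t)$. That means $\tilde{r}^{\,j}_{m+1} = \lambda w_j x_j$.

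Wait, the theorem defines $\tilde{r}^{\,j}_{m+1} = -\lambda w_j x_j$ with $r = Ax - y$. Let me reconcile signs. We have $G(x+te_j) = \|r + ta_j\|_1 + \lambda w_j|x_j + t| + C$. Now $\lambda w_j |x_j + t| = |\lambda w_j x_j + t\lambda w_j|$. If I define the augmented quantity $\tilde{r}^{\,j} = [r;\ \lambda w_j x_j]$ and augmented column $\tilde{a}_j = [a_j;\ \lambda w_j]$, then $\|\tilde{r}^{\,j} + t\tilde{a}_j\|_1 = \|r + ta_j\|_1 + |\lambda w_j x_j + t\lambda w_j| = \|r+ta_j\|_1 + \lambda w_j|x_j + t|$.

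The theorem uses $\tilde{r}^{\,j}_{m+1} = -\lambda w_j x_j$, which suggests they're minimizing $\|\tilde{r}^{\,j} - t\tilde{a}_j\|_1$ (minimizing over $-t$, equivalent by symmetry). Either way, the upshot:

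$$G(x + te_j) = \|\tilde{r}^{\,j} + t\tilde{a}_j\|_1 + C, \quad \text{so} \quad \min_t G(x+te_j) = \min_t \|\tilde{r}^{\,j} + t\tilde{a}_j\|_1 + C.$$

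**The weighted median.**

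Now $\min_t \sum_{i \in \supp(\tilde{a}_j)} |\tilde{r}^{\,j}_i + t\tilde{A}_{ij}|$ is a classic weighted $\ell^1$ / weighted median problem. Factoring out $|\tilde{A}_{ij}|$:
$$\sum_i |\tilde{A}_{ij}| \cdot |t + \tilde{r}^{\,j}_i / \tilde{A}_{ij}|.$$
This is minimized by the \emph{weighted median} of the values $-\tilde{r}^{\,j}_i/\tilde{A}_{ij}$ (equivalently $\tilde{r}^{\,j}_i/\tilde{A}_{ij}$ up to sign) with weights $|\tilde{A}_{ij}|$. The optimal $t^*$ satisfies: the cumulative weight crosses $1/2$ of the total weight $\|\tilde{a}_j\|_1 = \sum_i |\tilde{A}_{ij}|$. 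This is precisely what $\hat{k}(j)$ in \eqref{w_median_l1_A_tilde} computes — the smallest index $k$ (in the sorted order $\tau_j$) where cumulative weight $\geq 1/2$. Then $t^* = -\tilde{r}^{\,j}_{\hat{i}(j)}/\tilde{A}_{\hat{i}(j),j}$, and plugging in:
$$\min_t \|\tilde{r}^{\,j} + t\tilde{a}_j\|_1 = \left\|\tilde{r}^{\,j} - \frac{\tilde{r}^{\,j}_{\hat{i}(j)}}{\tilde{A}_{\hat{i}(j),j}}\tilde{a}_j\right\|_1.$$

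**My proof plan.**

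I will prove this in three steps.

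\emph{Step 1 (Reduction).} Write out $G(x+te_j)$ and separate $t$-dependent from constant terms. Identify the constant $C = G(x) - \|r\|_1 - \lambda w_j|x_j|$ (the terms $l\neq j$ in the reg sum plus... actually $C$ should equal $G(x)$ minus the $t$-part evaluated at $t=0$). Carefully: at $t=0$, $\|\tilde{r}^{\,j}\|_1 = \|r\|_1 + \lambda w_j|x_j|$, and $G(x) = \|r\|_1 + \lambda\sum_l w_l|x_l|$. So $C = G(x) - \|r\|_1 - \lambda w_j|x_j|$, and
$$\min_t G(x+te_j) = \min_t\|\tilde{r}^{\,j} + t\tilde{a}_j\|_1 + G(x) - \|r\|_1 - \lambda w_j|x_j|.$$
Hence $\Delta = \lambda w_j|x_j| + \|r\|_1 - \min_t\|\tilde{r}^{\,j}+t\tilde{a}_j\|_1$, which matches \eqref{eq:LADLASSO_WOMP_quantity}.

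\emph{Step 2 (Weighted median solves the 1D problem).} Prove the standard lemma: for $f(t) = \sum_i c_i|t - v_i|$ with $c_i > 0$, $f$ is convex piecewise-linear and minimized at the weighted median. State the optimality via the subgradient: $0 \in \partial f(t^*)$ iff the cumulative weight condition holds. Identify that $\hat{k}(j)$ picks out exactly this crossing index. The restriction to $\supp(\tilde{a}_j)$ is because indices with $\tilde{A}_{ij} = 0$ contribute constants (independent of $t$), so they drop out of the minimization — this is why the sums and the map $\tau_j$ range over $\supp(\tilde{a}_j)$, and why we need nonzero columns of $A$ only to ensure $\tilde{a}_j \neq 0$ (so a minimizer exists; if $\tilde{a}_j = 0$ the problem is degenerate).

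\emph{Step 3 (Evaluate the minimum).} Substitute $t^* = -\tilde{r}^{\,j}_{\hat{i}(j)}/\tilde{A}_{\hat{i}(j),j}$ into $\|\tilde{r}^{\,j}+t\tilde{a}_j\|_1$ to get the closed form in $\Delta$.

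**The main obstacle.**

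The genuinely delicate part is \emph{Step 2}: proving that $\hat{k}(j)$ as defined in \eqref{w_median_l1_A_tilde} indeed yields an \emph{exact} minimizer, including edge cases where the cumulative weight equals exactly $1/2$ (in which case a whole interval of minimizers exists, but the stated $t^*$ is still a valid minimizer). I must verify the weighted-median characterization carefully via subdifferential calculus for the nonsmooth $\ell^1$ objective, handle the possibility of ties in the sorted ratios, and confirm the ``smallest $k$ with cumulative weight $\geq 1/2$'' convention selects a correct optimizer rather than merely a candidate.

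**One subtlety to flag.**

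I should double-check the hypothesis that $x$ minimizes $\|y-Az\|_1$ over $\supp(z) \subseteq S$ — unlike LASSO/SR-LASSO, this hypothesis is not strictly needed for the formula itself (the computation of $\min_t$ holds for \emph{any} $x$), but it is the natural input from the OMP iteration \eqref{eq:general_greedy_step_2} guaranteeing $\Delta \geq 0$ (loss reduction is genuine). I would verify $\Delta^{\mathrm{LAD}} \geq 0$ as a sanity check, noting that $t = 0$ is always a feasible choice in the inner minimization, so $\min_t\|\tilde{r}^{\,j}+t\tilde{a}_j\|_1 \leq \|\tilde{r}^{\,j}\|_1 = \|r\|_1 + \lambda w_j|x_j|$, giving $\Delta \geq 0$ directly — pleasingly, this requires no assumption on $x$ at all.
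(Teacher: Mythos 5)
Your proposal is correct and follows essentially the same route as the paper: absorb the regularization term into an $(m+1)$st coordinate via the augmented matrix $\widetilde{A}$ and residual $\widetilde{r}^{\,j}$, reduce to a univariate weighted LAD problem, and solve it by the weighted-median rule --- which is exactly the paper's Lemma~\ref{lemma:1D-LAD-solution}, proved there by explicit piecewise-linear slope analysis rather than the subgradient argument you sketch. Your two side observations are also accurate: the split into $j \in S$ and $j \notin S$ is indeed unnecessary (the paper performs it, but its Case 1 is just Case 2 with $x_j = 0$), and the minimality hypothesis on $x$ is never actually used in this theorem (the paper's proof never invokes it either, in contrast to the LASSO and SR-LASSO cases where the orthogonality $(A^*r)_j = 0$ coming from least-squares optimality is essential).
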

This proposition leads to the LAD-LASSO-based OMP iteration
	\begin{align}
		S^{(k + 1)} &= S^{(k)} \cup j^{(k)}\quad \text{where}\quad j^{(k)} = \arg\max_j \Delta_\lambda^{\mathrm{LASSO}}(x^{(k)}, S^{(k)}, j) \label{eq:LADLASSO_WOMP_step_1} \\
		x^{(k + 1)} & \in \arg\min_{z \in \bC^N}\|y - Az\|_1\quad \text{s.t}\quad \supp(z) \subseteq S^{(k)} \label{eq:LADLASSO_WOMP_step_2}.
	\end{align}
Some remarks are in order.
\begin{remark}[On terminology]
    The least-squares projection step of LASSO- and SR-LASSO-based OMP ensures orthogonality between the residual and the span of selected columns at each iteration. This property is no longer valid in the LAD-LASSO case because \eqref{eq:LADLASSO_WOMP_step_2} does not define an orthogonal projection. With a slight abuse of terminology, we will refer to the method defined by \eqref{eq:LADLASSO_WOMP_step_1}--\eqref{eq:LADLASSO_WOMP_step_2} as a variant of OMP, despite the lack of orthogonality.
\end{remark}
\begin{remark}[Solving LAD problems]
\label{remark:solving_LAD}
 Unlike the least-squares projection step of LASSO- and SR-LASSO-based WOMP, the LAD problem \eqref{eq:LADLASSO_WOMP_step_2} does not admit an explicit solution in general. However, one can take advantage of efficient convex optimization algorithms to approximately solve it. We note in passing that, for small values of $k$, the corresponding LAD problems over $\mathbb{R}^k$ are much cheaper to solve than an $\ell^1$ minimization problem over $\mathbb{R}^N$. In this paper, we numerically solve LAD problems using the MATLAB CVX package \cite{gb08,cvx} with MOSEK solver \cite{mosek9mosek}.
\end{remark}
\begin{remark}[An alternative strategy]
\label{remark:ladlasso_complex}
	An alternative LAD-LASSO-based OMP iteration can be derived by relaxing the LAD-LASSO to an augmented LASSO or SR-LASSO problem. Notably, this strategy works naturally in the complex case. Recall that our objective is to minimize $G_{\ell^1_w}^\mathrm{LAD}$ defined in \eqref{eq:loss_WLADLASSO} over $ \mathbb{C}^N$.	%
Now, for any $z\in \mathbb{C}^N$ let $ c = y - Az \in \mathbb{C}^m$ or, equivalently, $y = B t$, where
	$$ B := \begin{bmatrix}
		A & I
	\end{bmatrix} \in \mathbb{C}^{m \times (N+m)}
 \quad \text{and} \quad
 t := \begin{bmatrix}
	z \\
	c \\
	\end{bmatrix}\in \mathbb{C}^{N+m}. $$
	With this change of variable, a minimizer $\hat{z}$ of $G_{\ell^1_w}^\mathrm{LAD}$ over $\mathbb{C}^N$ satisfies
	$$ 
 \begin{bmatrix}
     \hat{z}\\ \hat{c}
 \end{bmatrix}
 \in \arg\min_{t \in \bC^{N + m}}\|t\|_{1, v}\quad  \mathit{s.t.}\quad Bt = y, 
 \quad \text{where }
	 v = \begin{bmatrix}
		\lambda w \\
		\bm{1} \\
	\end{bmatrix} $$
  for some $\hat{c} \in \mathbb{C}^m$, and where $\bm{1} \in \mathbb{C}^m$ is the vector of ones
	(see \cite{adcock2018compressed,li2013compressed} and references therein). This basis pursuit problem can be relaxed to a quadratically-constrained basis pursuit problem 
	$$ \min_{t \in \bC^{N + m}}\|t\|_{1, v}\quad  \mathit{s.t.}\quad \|y - B t \|_2 \leq \eta, $$
	with $\eta >0$. Now, one could consider either a LASSO or SR-LASSO reformulation of this problem. For example, in the LASSO case one would consider a loss function of the form
	\begin{equation}
		G(t)= \|y - B t\|_2^2 + \mu\|t\|_{1, v}, \quad t \in \mathbb{C}^{m + N}
	\end{equation}
	with $ \mu > 0 $, which leads to a LASSO-based OMP method. It is worth observing that a possible disadvantage of this strategy is the introduction of an extra tuning parameter $\mu$. 
\end{remark}

\section{Numerical experiments}
\label{sec:numerics}
In this section we present numerical results for the proposed LASSO-based WOMP algorithms. All the numerical experiments were performed in MATLAB 2017b 64-bit on a laptop equipped with a 2.4 GHz Intel Core i5 processor and 8 GB of DDR3 RAM. In some experiments, we compare our proposed algorithms with convex optimization-based recovery strategies. In these cases, we use the MATLAB CVX package \cite{gb08,cvx} with MOSEK solver \cite{mosek9mosek} and set \texttt{cvx\_precision best}. For the sake of convenience, we sometimes use MATLAB's vector notation. For example, $10.\,\hat{}\,(1:2:5)$ denotes the vector $ (10^1, 10^3, 10^5) $. The source code needed to reproduce our numerical experiments can be found on the GitHub repository \url{http://github.com/sina-taheri/Greedy_LASSO_WOMP}.

The section is organized as follows. In \S\ref{s:numerical_settings}, we start by presenting three settings used to validate and test the proposed algorithms. In \S\ref{s:numerics_lambda}, we carry out a first set of experiments aimed at studying the effect of the tuning parameter on the recovery error for different levels of noise or corruption, and for different weights' values. \S\ref{s:numerics_iterations} is dedicated to investigating the connection between the iteration number of the proposed greedy methods and the recovery error. We conclude by illustrating experiments on algorithms' runtime, loss function decay and a discussion on the stopping criteria in \S\ref{s:runtime_loss_stop}.

\subsection{Description of the numerical settings}
\label{s:numerical_settings}

The three numerical settings employed in our  experiments are illustrated below.

\paragraph{(i) Sparse random Gaussian setting (sparse and unweighted).}
First, we generate an $ s $-sparse random Gaussian vector $ x \in \bR^{N} $ as follows. $ S $, the support of $ x $, is generated by randomly and uniformly drawing a
 subset of $[N]$ of size $s$ (this avoids repeated indices). Within the support, the entries $ x $ are independently sampled from a Gaussian distribution with zero mean and unit variance, i.e., $ x_i \sim \mathcal{N}(0, 1)$, for every $ i \in S $.  This vector is measured by a sensing matrix $ A \in \bR^{m \times N} $ obtained after an $ \ell^2 $-normalization of the columns of a random Gaussian matrix $G \in \bR^{m \times N} $ with independent entries $ G_{i,j} \sim \mathcal{N}(0, 1) $ for every $i \in [m],\ j \in [N] $. The objective is to recover the synthetically-generated signal $ x $ from corrupted measurements, i.e., 
\begin{equation}
 	\label{eq:nuermical_model}
	y = Ax + e^{\mathrm{bounded}} + e^{\mathrm{unbounded}} \in \bR^m,\quad \text{where} \quad \|e^{\mathrm{bounded}}\|_2 = \eta,\ \|e^{\mathrm{unbounded}}\|_0 \leq K.
 \end{equation} 
 Here, $ e^{\mathrm{bounded}} = \eta e'/\|e'\|_2 \in \bR^m $ is a $ \ell^2 $-normalized random Gaussian vector with independent entries, i.e., $ e'_i \sim \mathcal{N}(0,1) $ for every $i \in [m]$ and $ e^{\mathrm{unbounded}} \in \bR^m $ is a $ K $-sparse vector generated by randomly and independently drawing $K$ integers uniformly from $[m]$ and filling the corresponding entries with independent random samples from $\mathcal{N}(0, M^2)$ for some $M > 0$. When we have unbounded noise in our measurements, we choose $ M $ to be very large, while in other cases we simply set it to zero. In this setting we consider unweighted recovery, i.e., $ w = \bm{1} $, the vector of ones.

\paragraph{(ii) Sparse random Gaussian setting with oracle (sparse and weighted).} Using the same model as in the previous setting, we acquire noisy measurements $ y = A x + e^{\mathrm{bounded}} + e^{\mathrm{unbounded}} \in \bR^m $ of a random $ s $-sparse vector $ x \in \bR^N $. In this second setting, we assume to have some \emph{a priori} knowledge of the support of $ x $ and incorporate this knowledge through weights in order to improve reconstruction. More precisely, we assume to know a set $ S^\mathrm{oracle} $ that partially approximates (i.e., that has nontrivial intersection with) the support of $ x $. Then, we define the weight vector $ w \in \bR^N $ as
\begin{equation}
	\label{eq:oracle_weights}
	w_j := \begin{cases}
		w_0 & j \in S^\mathrm{oracle} \\
		1 & j \notin S^\mathrm{oracle}
	\end{cases},
\end{equation}
for a suitable $ w_0 \in [0, 1] $. Note that if $ w_0 $ is chosen to be small, the contribution of signal coefficients weighted by $ w_0 $ is attenuated in the LASSO-type loss function. Consequently, activation of the corresponding indices is promoted in the greedy index selection stage of WOMP.

\paragraph{(iii) Function approximation (compressible and weighted).} In the third setting, the goal is to approximate a multivariate function
$$ f: D \to \bR,\quad D = [-1, 1]^d, $$
with $ d \gg 1 $, from pointwise evaluations $ f(t_1), f(t_2), \dots, f(t_m) $, where $t_1,\ldots,t_m$ are independently and identically sampled from a probability distribution $ \varrho $ over $ D $.  Here we briefly summarize how to perform this task efficiently via compressed sensing and refer the reader to the book \cite{adcock2022sparse} for a comprehensive treatment of the topic. This problem is mainly motivated by the study of quantity of interests in parametric models such as parametric differential equations, with applications to uncertainty quantification \cite{smith2013uncertainty}. Considering a basis of orthogonal polynomials $ \{\Psi_\nu\}_{\nu \in \mathbb{N}_0^d} $ for $ L^2_\varrho(D) $ (i.e., the Hilbert space of square-integrable functions over $D$ weighted by the probability measure $\varrho$). We aim to compute an approximation of the form
$$ f_{\Lambda} := \sum_{j \in [N]} x_{\nu_j}\Psi_{\nu_j} \approx f, \quad \text{where } \underbrace{\Lambda := \{\nu_j\}_{j \in [N]}}_{\text{truncation set}} \subset \mathbb{N}_0^d \quad \text{and} \quad  N \gg m, $$
and where $ x = (x_{\nu_j})_{j \in [N]} \in \bR^N $. This can be reformulated as a linear system in the coefficients $ x $, namely,
\begin{equation}
	\label{eq:func_approx_cs_model}
	y = Ax + e,
\end{equation}
where the measurement matrix $ A \in \bR^{m \times N} $ and the measurement vector $ y \in \bC^m $ are defined as
$$ A_{ij} := \frac{1}{\sqrt{m}}\Psi_{\nu_j}(t_i), \quad y_i := \frac{1}{\sqrt{m}}f(t_i), \quad \forall i \in [m], \; \forall j \in [N], 
$$
and where $ e \in \bC^m $ is the noise vector, including the inherent truncation error (depending on $\Lambda$) and, possibly, other types of error (e.g., numerical, model, or physical error). Under suitable smoothness conditions on $ f $, such as holomorphy, the vector of coefficients $ x $ is approximately sparse or compressible (see \cite[Chapter 3]{adcock2022sparse}). Therefore, the problem of approximating the function $ f $ is recast as finding a compressible solution $ x $ to the linear system \eqref{eq:func_approx_cs_model}. As a test function, we consider the so-called \emph{iso-exponential} function, defined as
\begin{equation}
	\label{eq:func_approx_isoexp}
	f(t) = \exp\left(-\sum_{i = 1}^{d} t_i/(2d)\right), \quad \forall t  \in D,
\end{equation}
which can be shown to be well approximated by sparse  polynomial expansions (see \cite[\S{A.1.1}]{adcock2022sparse}). Recovering $ x $ using LASSO-based WOMP algorithms, we set weights as
\begin{equation}
	\label{eq:func_approx_intrinsic_weights}
	w_j := \|\Psi_{\nu_j}\|_{L^\infty(D)} = \sup_{t\in D}|\Psi_{\nu_j}(t)|,\quad \forall j \in [N],
\end{equation}
known as \emph{intrinsic weights}. Note that these weights admit explicit formulas for, e.g., Legendre and Chebyshev orthogonal polynomials (see \cite[Remark 2.15]{adcock2022sparse}). In this paper, we will employ just Legendre polynomials.

\subsection{Recovery error versus tuning parameter}
\label{s:numerics_lambda}
The aim of Experiments I, II, and III presented in this section is to investigate the interplay between the tuning parameter $ \lambda $ and the recovery accuracy of LASSO-type WOMP algorithms in the three settings described in \S\ref{s:numerical_settings}, for different noise levels and  weight values. We recover $ x $ for a range of values of the tuning parameter $ \lambda $, at a fixed iteration number of the LASSO-type WOMP algorithms. We measure accuracy via the relative $\ell^2$-error
$$ E_\lambda = \frac{\|\hat{x}_\lambda - x\|_2}{\|x\|_2}, $$
where $ \hat{x}_\lambda $ denotes the computed approximation to $ x $ when the tuning parameter is set to $ \lambda $. Hence, we  plot the recovery error as a function of $ \lambda $. We repeat this experiment $ N_{\mathrm{trial}} $ number of times for different levels of noise and corruptions (in Experiments I and II), or different weight values (in Experiment III). The results of these statistical simulations are visualized using \emph{boxplots}, whose median values are linked by solid curves.

In Experiments I and II  we also consider $ \ell^0 $-based variants of LASSO-type WOMP algorithms. The $\ell^0$-based variant of LASSO WOMP was proposed in \cite{adcock2020sparse} and the greedy selection rules for $\ell^0$-based SR- and LAD-LASSO WOMP are derived in Appendix~\ref{app:l0_greedy_selection}. They constitute natural alternatives to the loss functions presented in \S\ref{sec:greedy_selection}, and we study their performance to justify our choice of $ \ell^1 $-based loss functions in this paper. 

\paragraph{Experiment I (sparse random Gaussian setting).} We begin with the sparse random Gaussian setting. Fig.~\ref{fig:sparse_gaussian_error_vs_lambda} shows results for recovery performed via $ \ell^0 $- and $ \ell^1 $-based WOMP algorithms and for measurements are corrupted by different levels of noise. 
\begin{figure*}[t!]
	\begin{subfigure}[t]{0.33\linewidth}
		\includegraphics[width = \textwidth]{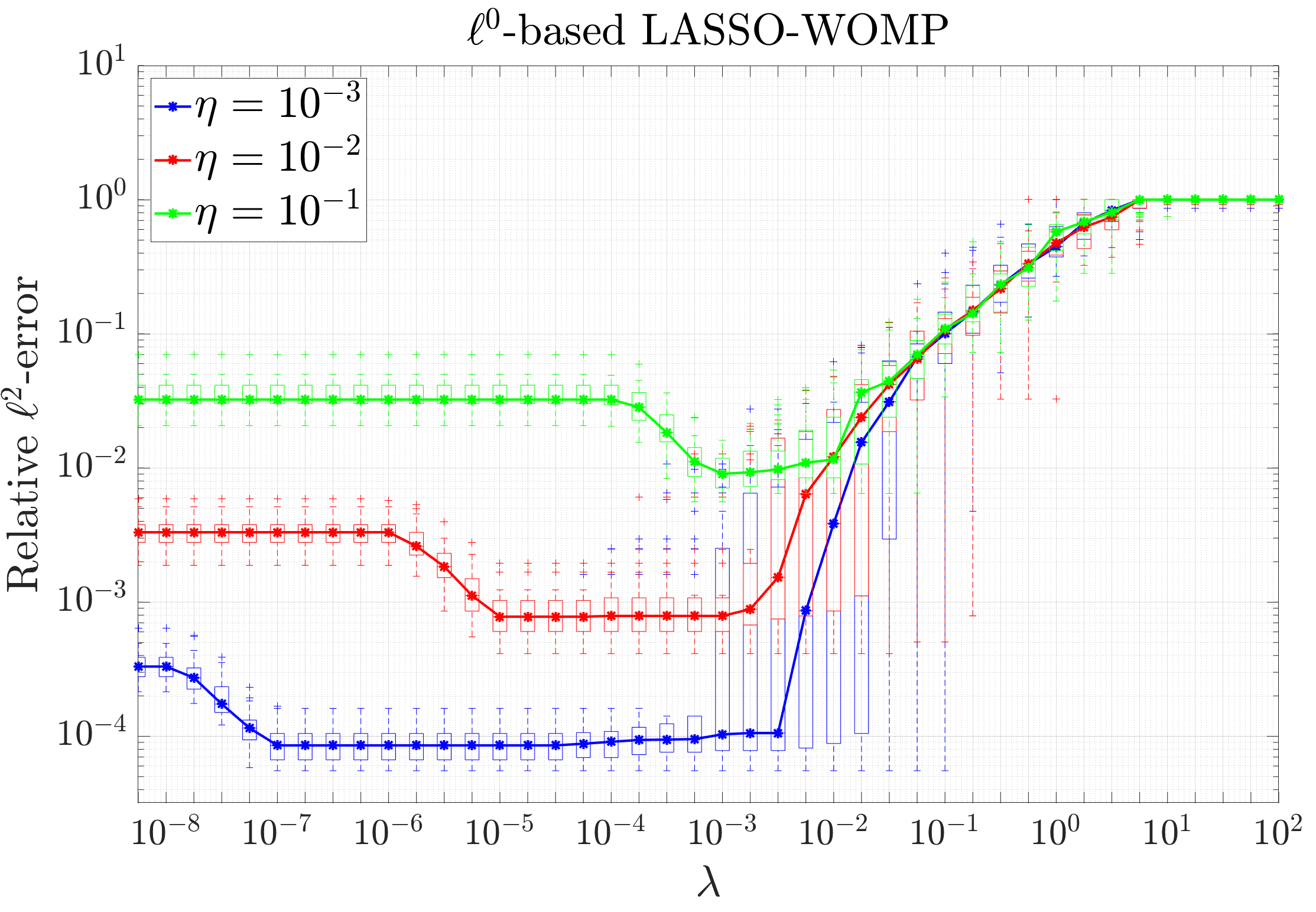}
	\end{subfigure}\hspace{0.5em}%
	\begin{subfigure}[t]{0.33\linewidth}
		\includegraphics[width = \textwidth]{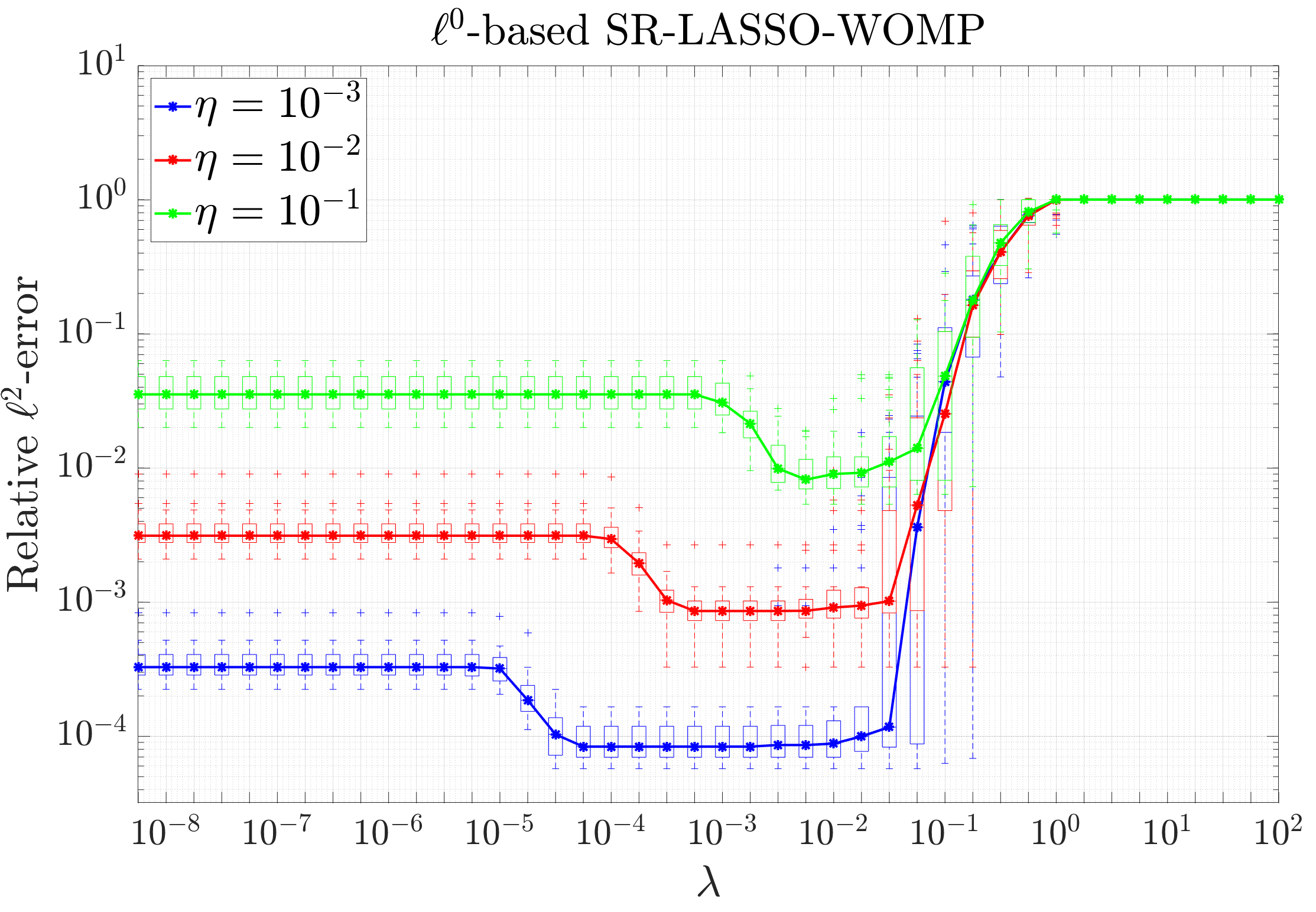}
	\end{subfigure}\hspace{0.5em}%
	\begin{subfigure}[t]{0.33\linewidth}
		\includegraphics[width = \textwidth]{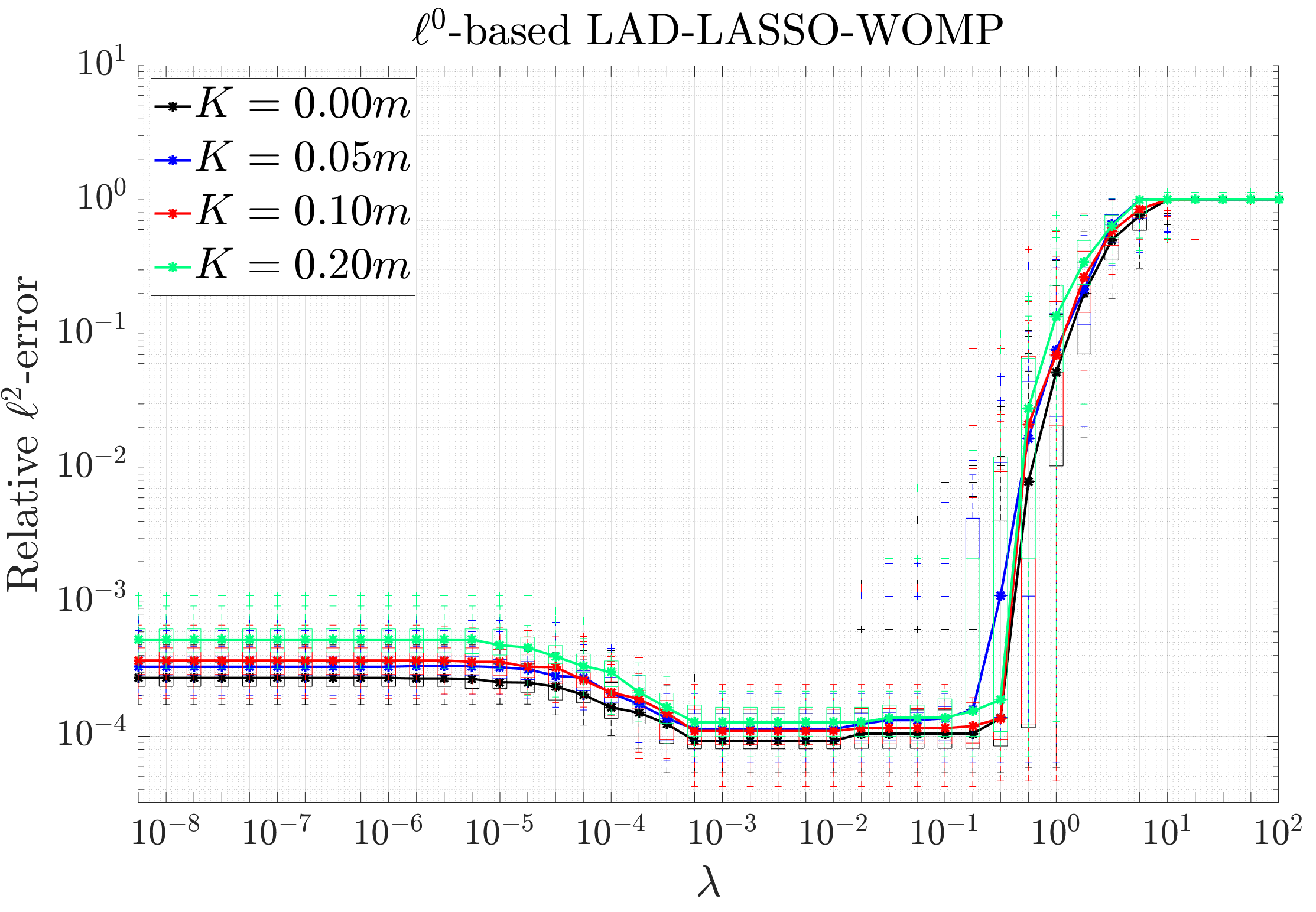}
	\end{subfigure}\hspace{0.5em}%
	
	\begin{subfigure}[t]{0.33\linewidth}
		\includegraphics[width = \textwidth]{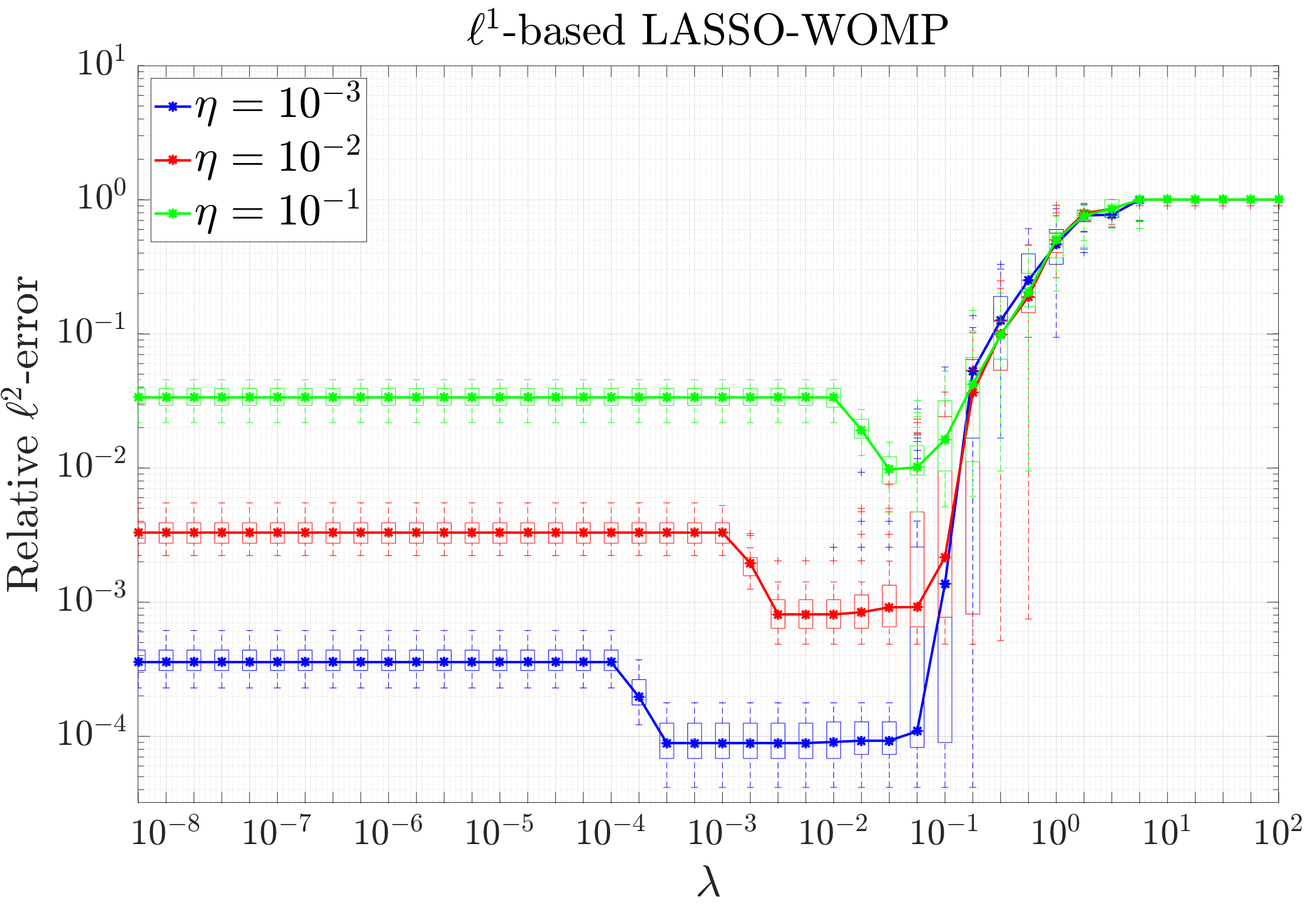}
	\end{subfigure}\hspace{0.5em}%
	\begin{subfigure}[t]{0.33\linewidth}
		\includegraphics[width = \textwidth]{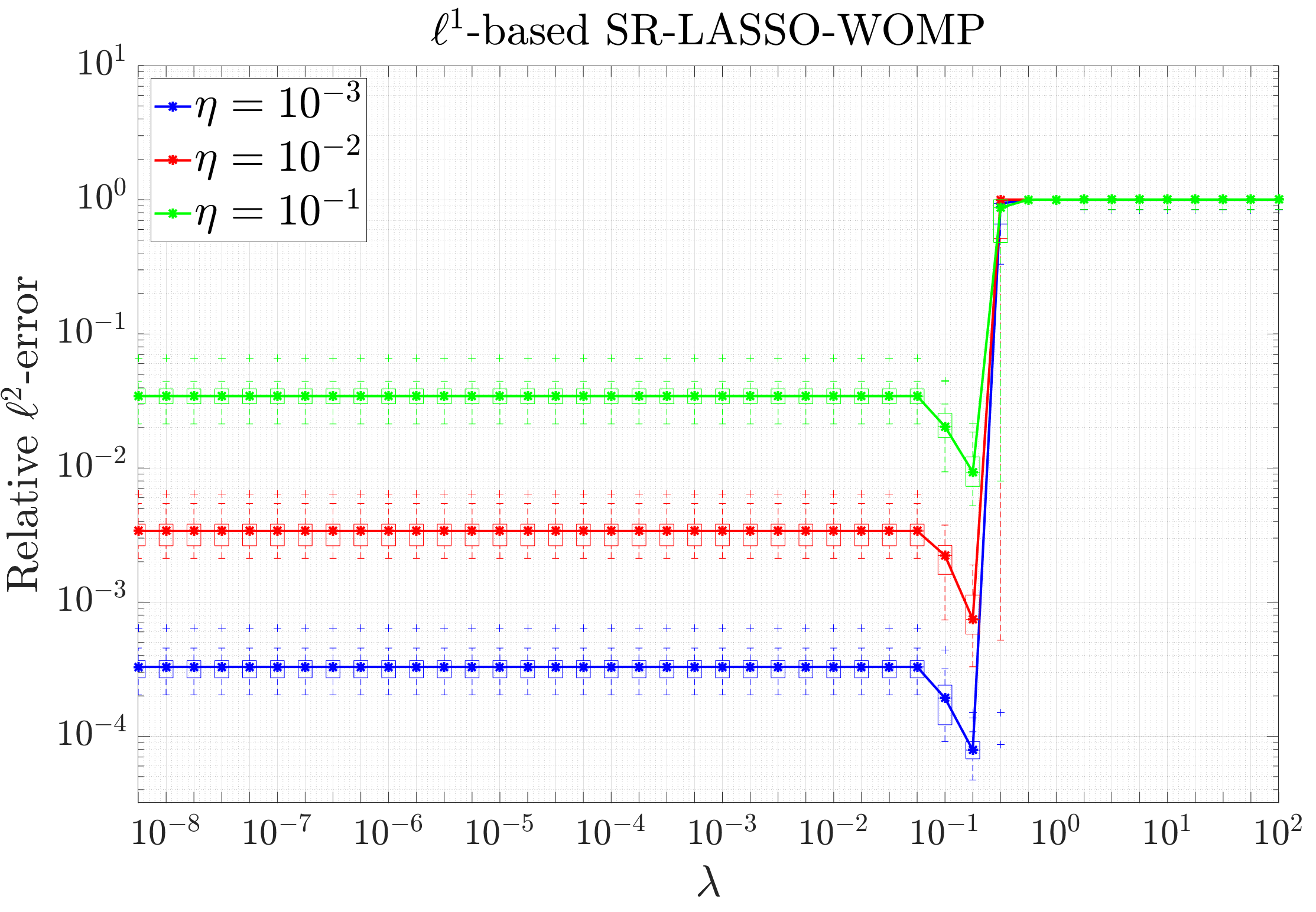}
	\end{subfigure}\hspace{0.5em}%
	\begin{subfigure}[t]{0.33\linewidth}
		\includegraphics[width = \textwidth]{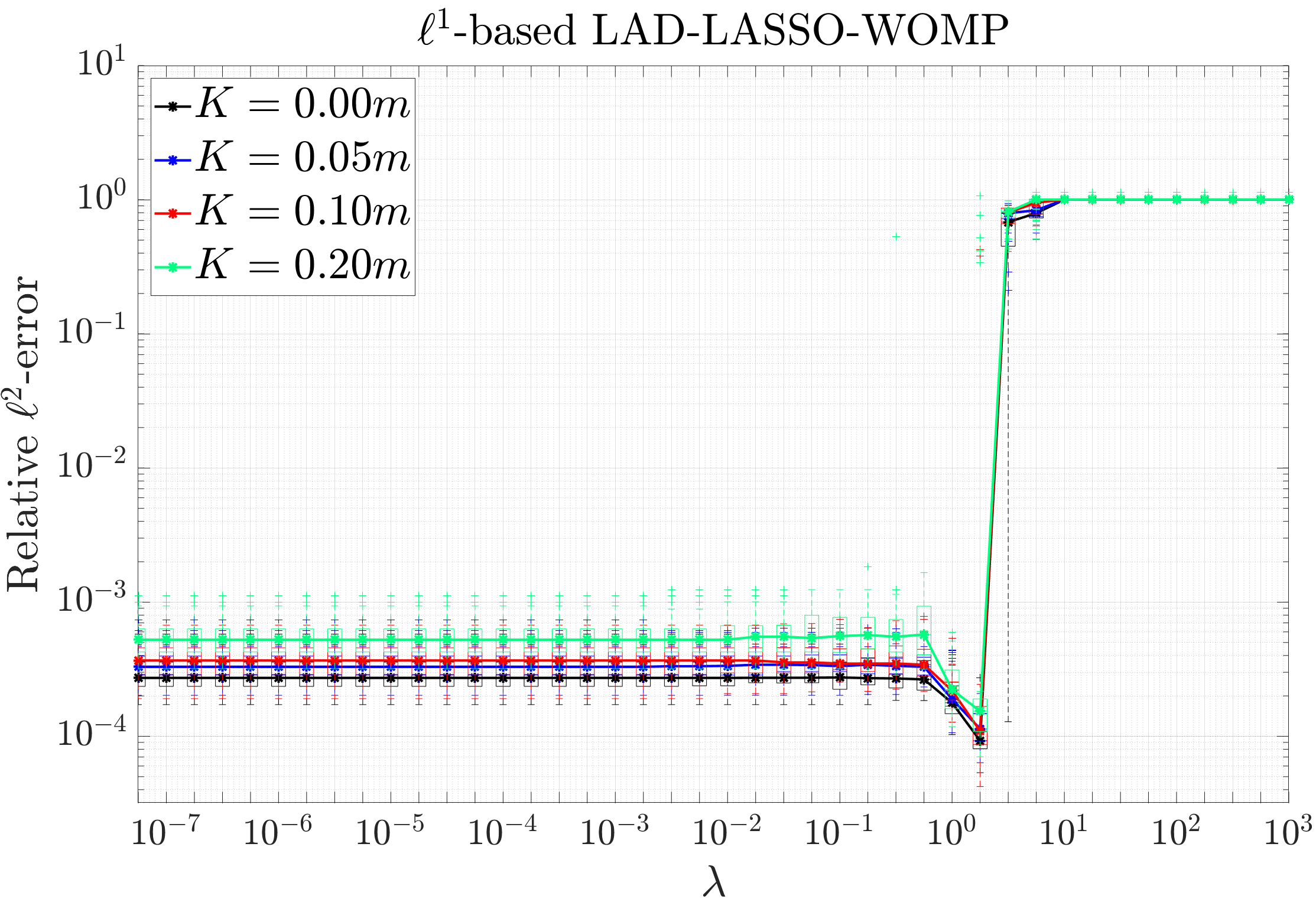}
	\end{subfigure}\hspace{0.5em}%
	\caption{Relative error as a function of the tuning parameter (Experiment I, sparse random Gaussian setting). We compare the recovery accuracy of $ \ell^0 $- and $ \ell^1 $-based WOMP algorithms for different noise or corruption levels, as in  \eqref{eq:nuermical_model}.}
	\label{fig:sparse_gaussian_error_vs_lambda}
\end{figure*}
In the LASSO and SR-LASSO WOMP cases, we let
$$ N = 300,\ m = 150,\ s = 10,\ \eta = \|e^{\mathrm{bounded}}\|_2 \in 10 \, .\hat{} \, (-3:-1),\ M = 0. $$
For LAD-LASSO WOMP, we fix
$$ N = 300,\ m = 150,\ s = 10,\ \eta = 10^{-3},\ M = 100,\ K \in \{0, 0.05m, 0.1m, 0.2 m\}. 
$$
Both the $ \ell^0 $- and $ \ell^1 $-based algorithms are able to reach a relative $\ell^2$-error below the noise level for appropriate choices of the tuning parameter $\lambda$. We note that every experiment has optimal values of $ \lambda $ for which the recovery error associated with a certain noise level is minimized. These optimal values are independent of the noise level for $\ell^1$-based SR-LASSO and on the corruption level for both $\ell^0$- and $\ell^1$-based LAD-LASSO WOMP. An analogous phenomenon can be observed for the corresponding $\ell^1$ minimization programs \cite{adcock2019correcting}. Finally, it is worth noting that the optimal values of $\lambda$ depend on the noise level for the $\ell^0$-based SR-LASSO formulation. 

\paragraph{Experiment II (function approximation).} Next we consider the high-dimensional function approximation setting. We approximate the high-dimensional function defined in \eqref{eq:func_approx_isoexp} with $d = 5$, where
$$ N = |\Lambda| = 426,\ m = 200, $$
and $ M $, $ \eta $ and $ K $ as before. Specifically, the truncation set $\Lambda=\Lambda_n^{\text{HC}}$ is a hyperbolic cross of order $n \in \mathbb{N}_0$, defined as
\begin{equation}
    \label{eq:hyperbolic_cross_set}
    \Lambda_n^{\text{HC}} := \left\{\nu = (\nu_k)_{k = 1}^d \in \bN_0^d: \prod_{k = 1}^d(\nu_k + 1) \leq n + 1\right\},
\end{equation}
In this experiment, we let $ n = 18 $. Note that in the function approximation setting, even when $\eta = 0$, samples are intrinsically corrupted by noise. This is due to the truncation error introduced by $\Lambda$ (see \cite[Chapter~7]{adcock2022sparse}). Moreover, we recall that in this experiment we use weights $ w \in \bR^N $ defined as in  \eqref{eq:func_approx_intrinsic_weights}. 

Fig.~\ref{fig:func_approx_error_vs_lambda} shows the results of this experiment. 
\begin{figure*}[t!]
	\begin{subfigure}[t]{0.33\linewidth}
		\includegraphics[width = \textwidth]{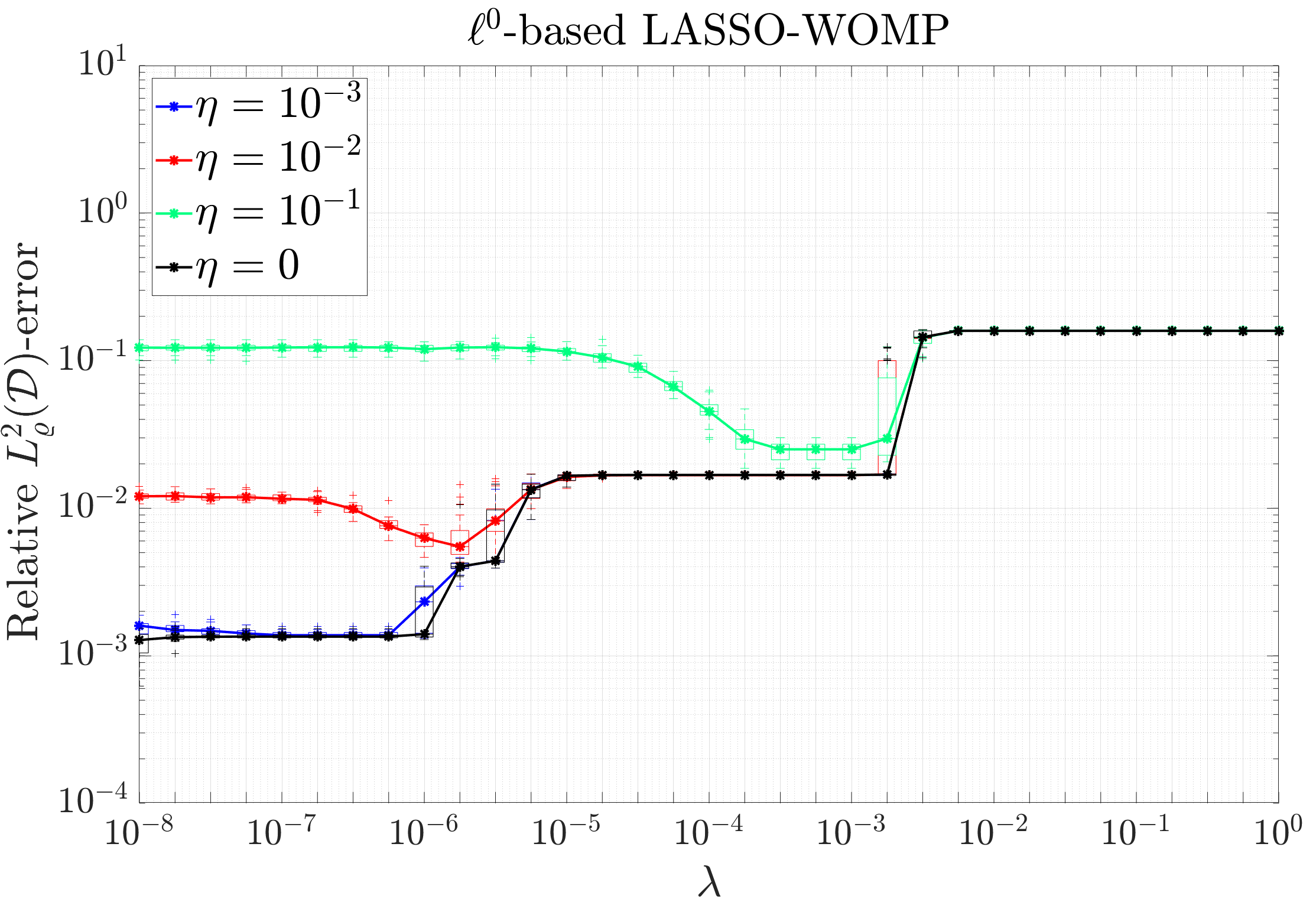}
	\end{subfigure}\hspace{0.5em}%
	\begin{subfigure}[t]{0.33\linewidth}
		\includegraphics[width = \textwidth]{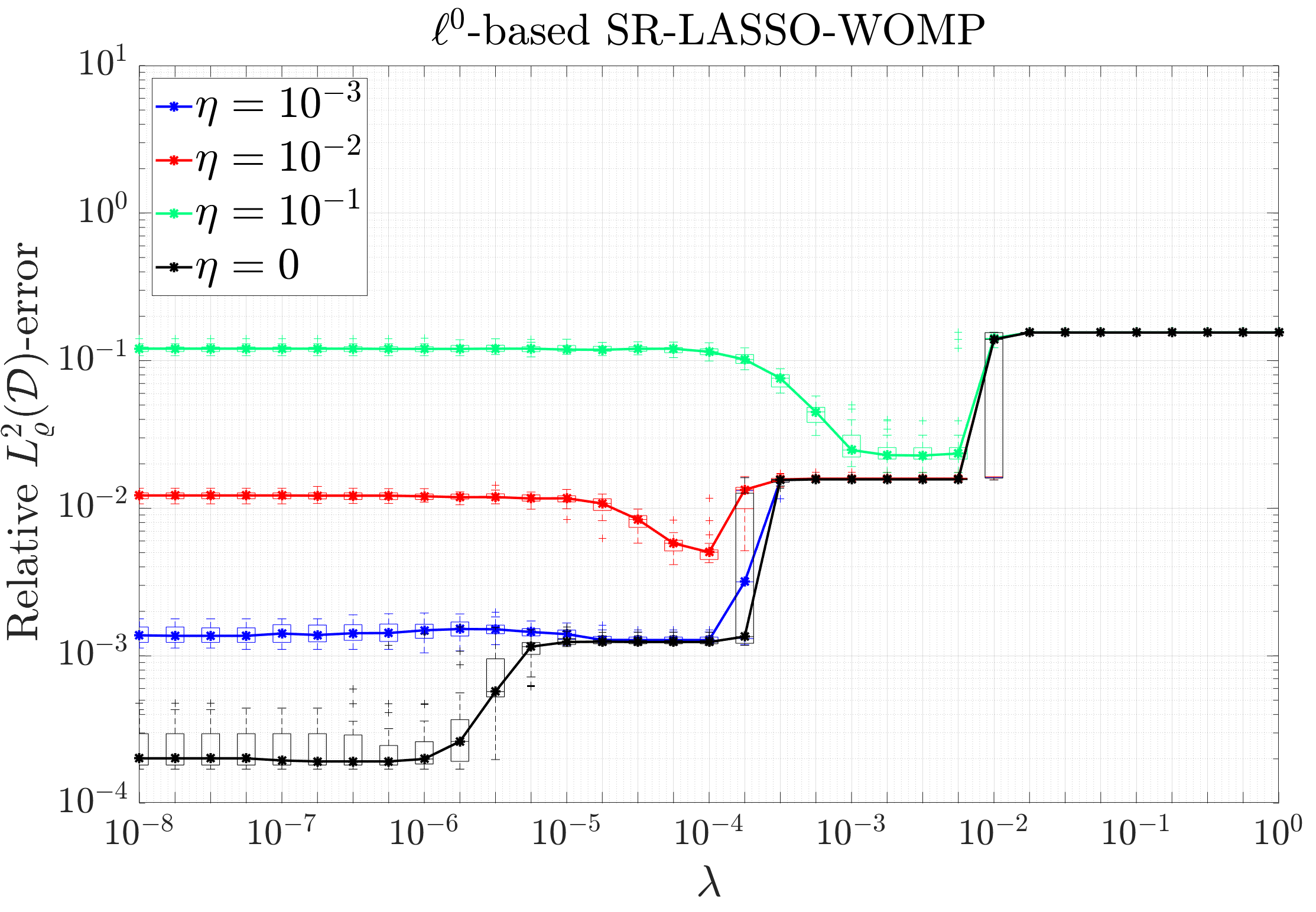}
	\end{subfigure}\hspace{0.5em}%
	\begin{subfigure}[t]{0.33\linewidth}
		\includegraphics[width = \textwidth]{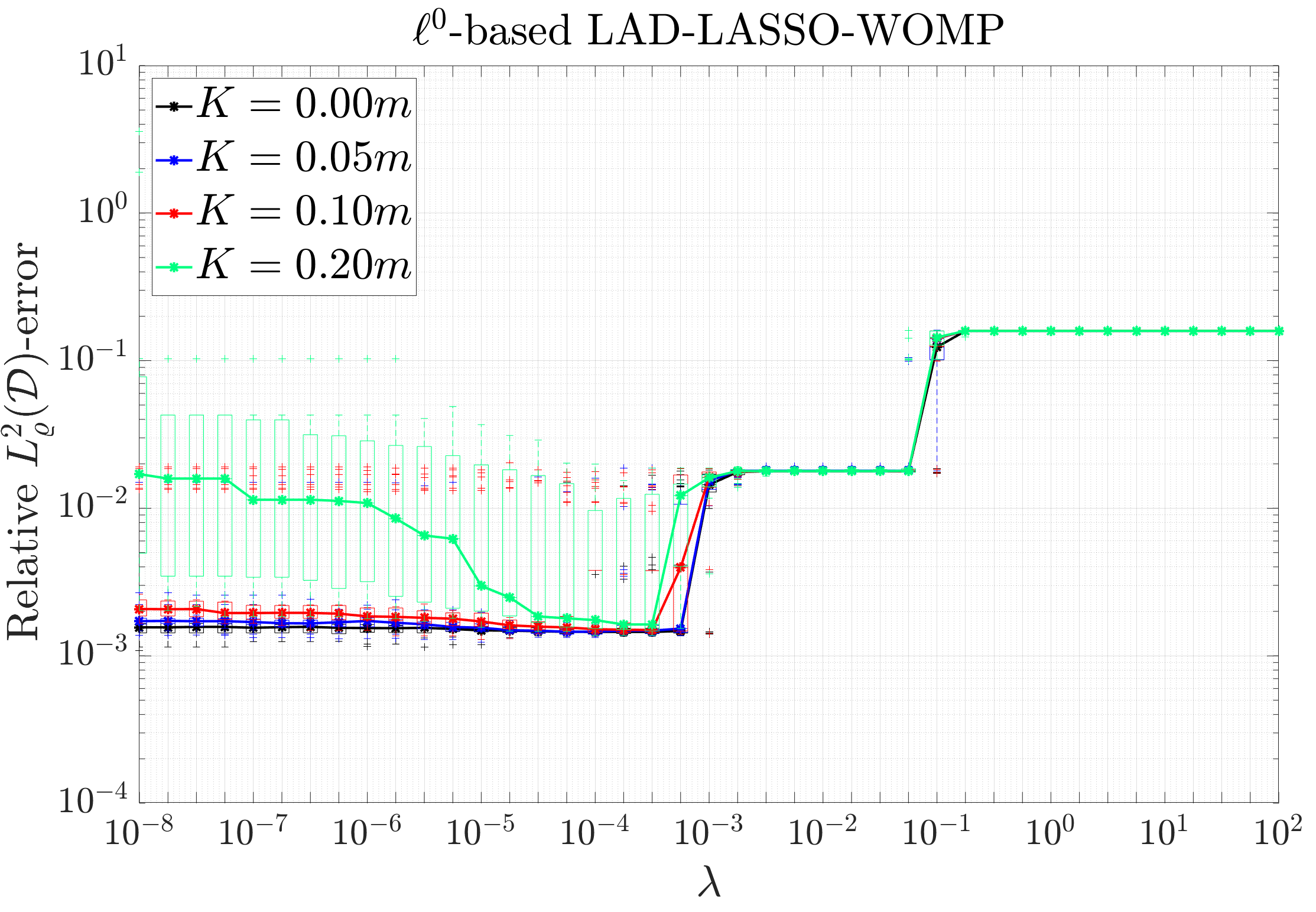}
	\end{subfigure}\hspace{0.5em}%
	\begin{subfigure}[t]{0.33\linewidth}
		\includegraphics[width = \textwidth]{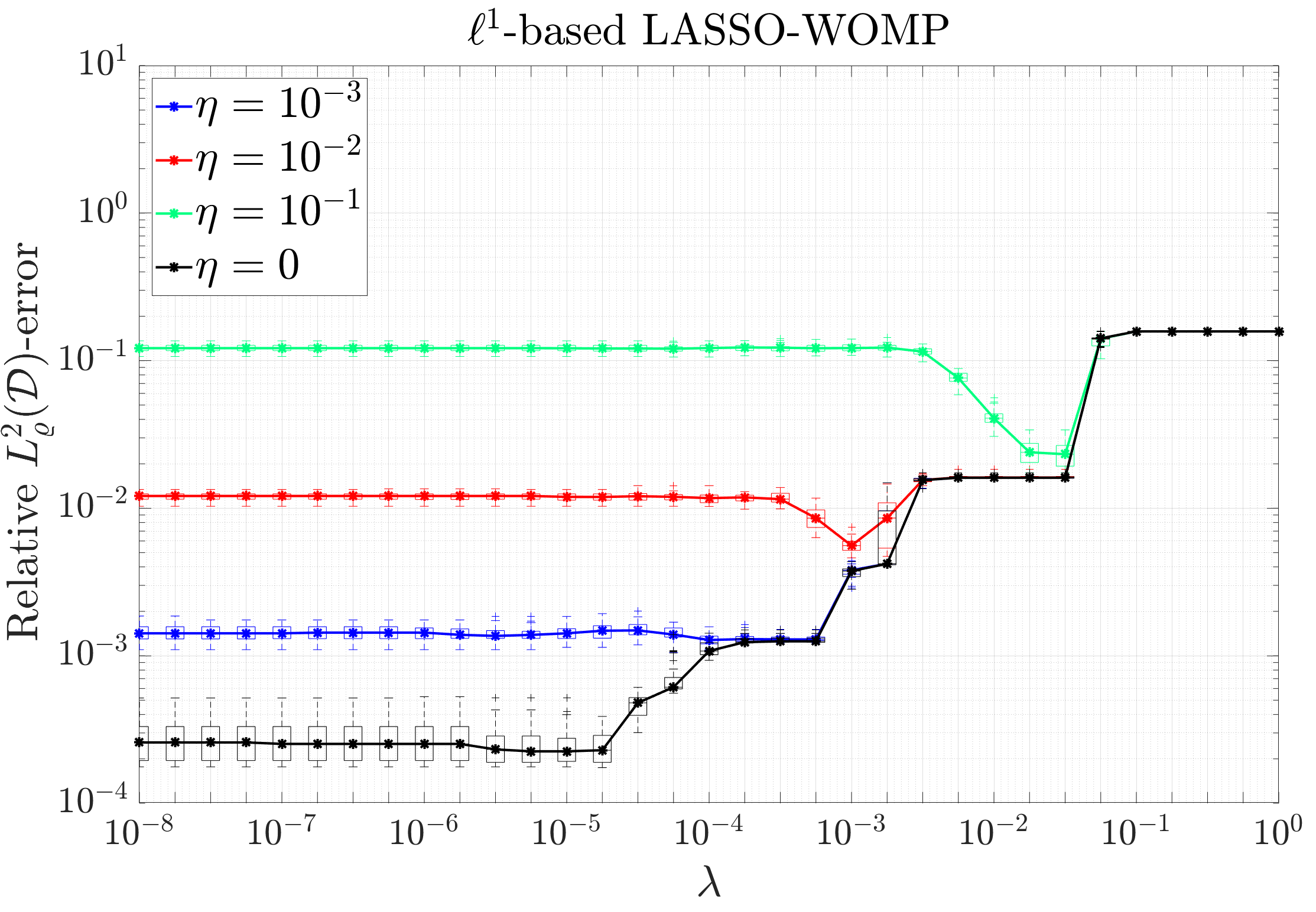}
	\end{subfigure}\hspace{0.5em}%
	\begin{subfigure}[t]{0.33\linewidth}
		\includegraphics[width = \textwidth]{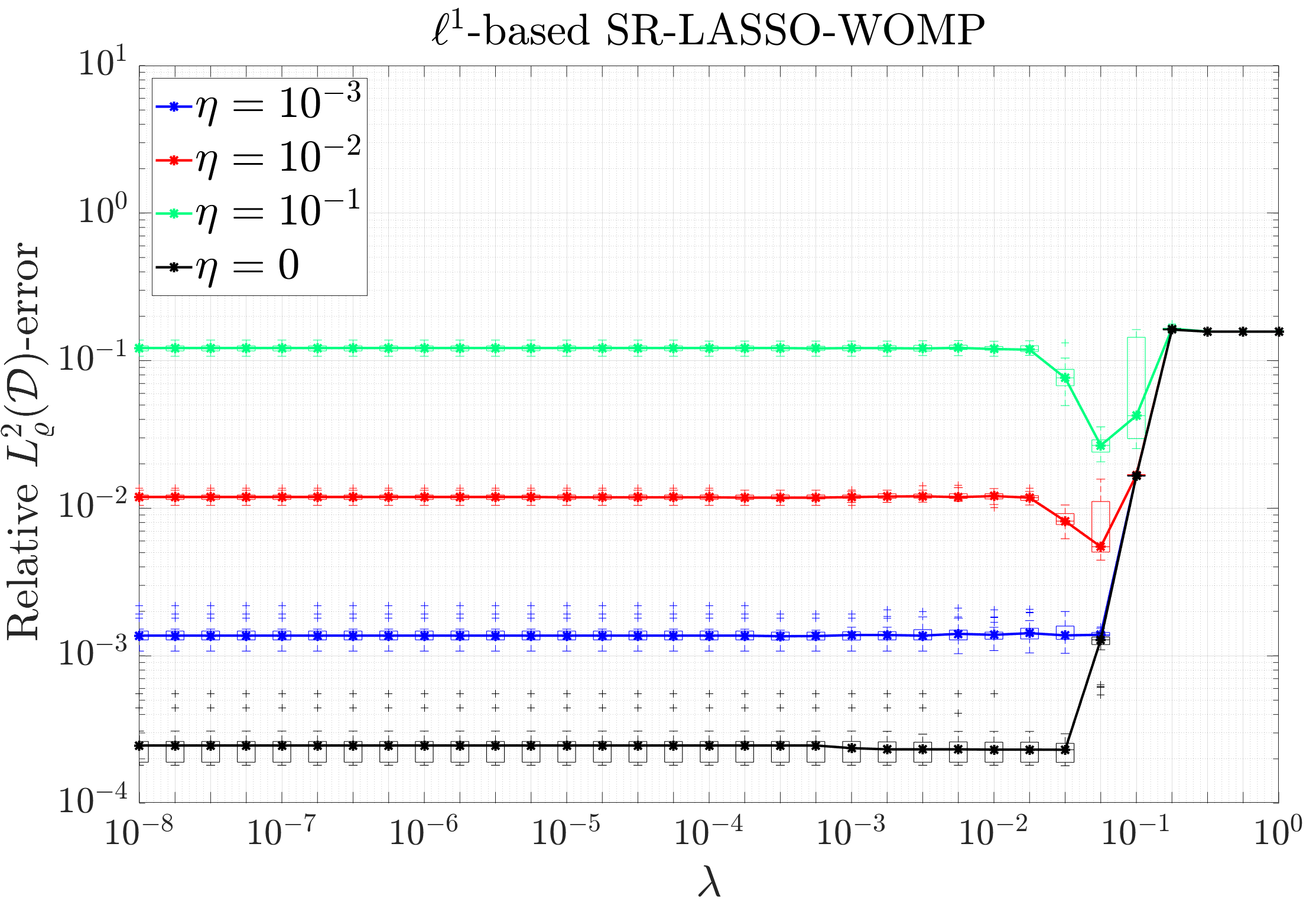}
	\end{subfigure}\hspace{0.5em}%
	\begin{subfigure}[t]{0.33\linewidth}
		\includegraphics[width = \textwidth]{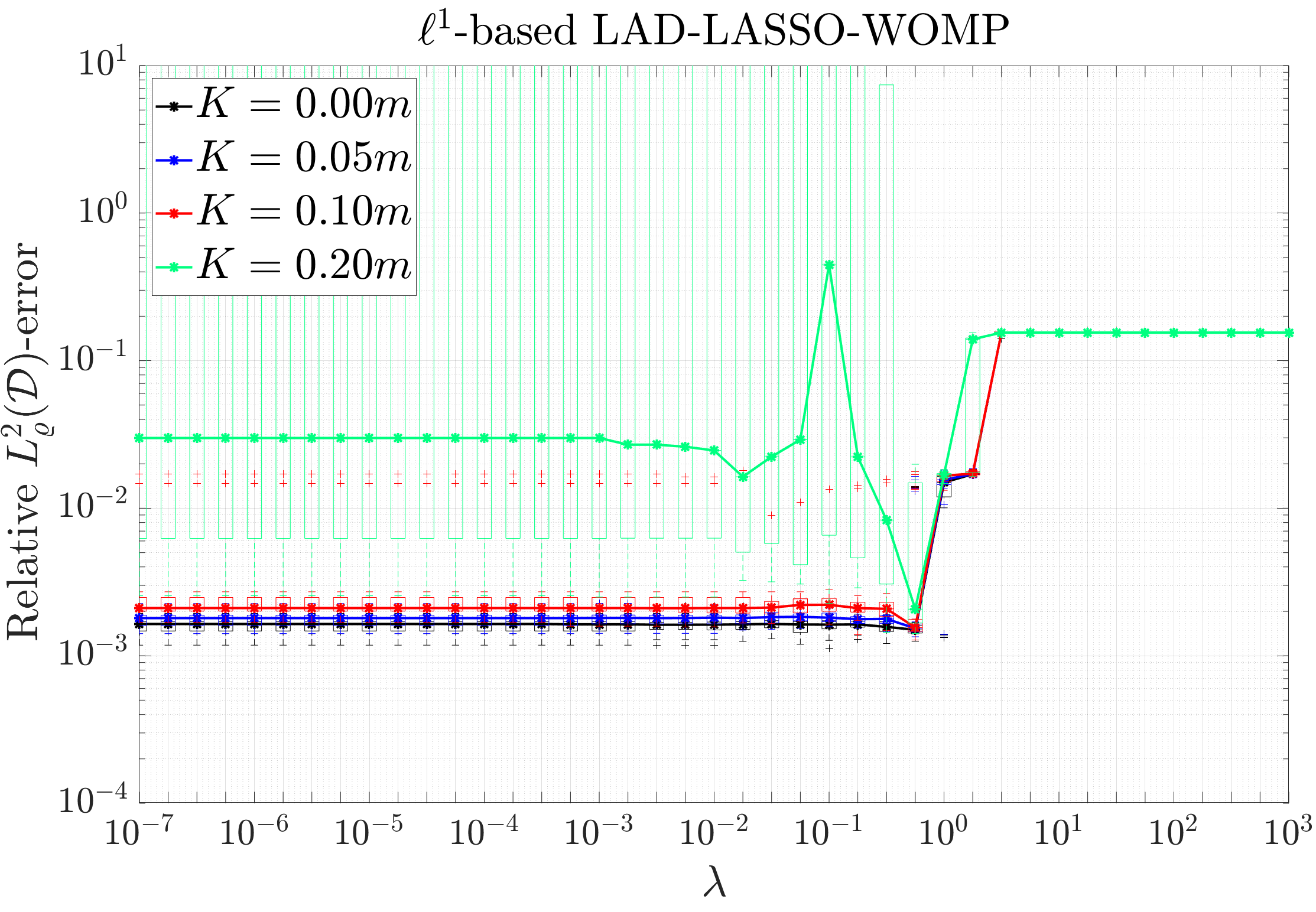}
	\end{subfigure}\hspace{0.5em}%
	\caption{Relative error as a function of the tuning parameter (Experiment II, function approximation). We compare the recovery accuracy of $ \ell^0 $- and $ \ell^1 $-based WOMP algorithms for different noise or corruption levels, as in \eqref{eq:nuermical_model}.}
	\label{fig:func_approx_error_vs_lambda}
\end{figure*}
Note that in this setting the relative $L^2_{\varrho}(D)$-error and the relative $\ell^2$-error coincide because of orthonormality of the polynomial basis $\{\Psi_\nu\}_{\nu\in\mathbb{N}_0^d}$.
Observations analogous to those made in Experiment I hold in this case as well, with some differences. First, Fig.~\ref{fig:func_approx_error_vs_lambda} shows even more clearly than Fig.~\ref{fig:sparse_gaussian_error_vs_lambda} the superiority of the $ \ell^1 $-based SR-LASSO approach with respect to its $ \ell^0 $-based counterpart. From it, we can see that only for $ \ell^1 $-based SR-LASSO WOMP the optimal values of $ \lambda $ are vertically aligned and thus independent of the noise level. Second, when the corruption level is large ($K = 0.2m$), $ \ell^0 $-based LAD-LASSO WOMP is more robust to the choice of $\lambda$ than its $\ell^1$-based counterpart. 

\paragraph{Experiment III (sparse random Gaussian setting with oracle).} In the final experiment of this section we consider the  sparse random Gaussian setting with oracle, and we illustrate the benefits provided by weights in for signal recovery via WOMP. We employ the same parameter settings as Experiment I, with the difference that this time $ N = 500 $ and we do not consider $\ell^0$-based WOMP variants, we fix the noise level, and test different choices of weights.
We set the noise level to $ \eta = 10^{-3} $ for LASSO and SR-LASSO, and corruptions with $ K = 0.1m $ for LAD-LASSO. As mentioned earlier, the prior knowledge from $ S^\mathrm{oracle} $ is incorporated into the weight vector $ w \in \bR^N $. Here we assume the oracle to have \emph{a priori} knowledge of just half of the support of $x$. In order to create $ S^\mathrm{oracle} $, half of the support entries are randomly chosen and are used to generate the weight vector $ w \in \bR^N $ as in \eqref{eq:oracle_weights} with $w_0 =10^{-3}$.

The results of this experiments are shown in Fig.~\ref{fig:oracle_error_vs_lambda}. 
\begin{figure*}[t!]
	\begin{subfigure}[t]{0.33\linewidth}
		\includegraphics[width = \textwidth]{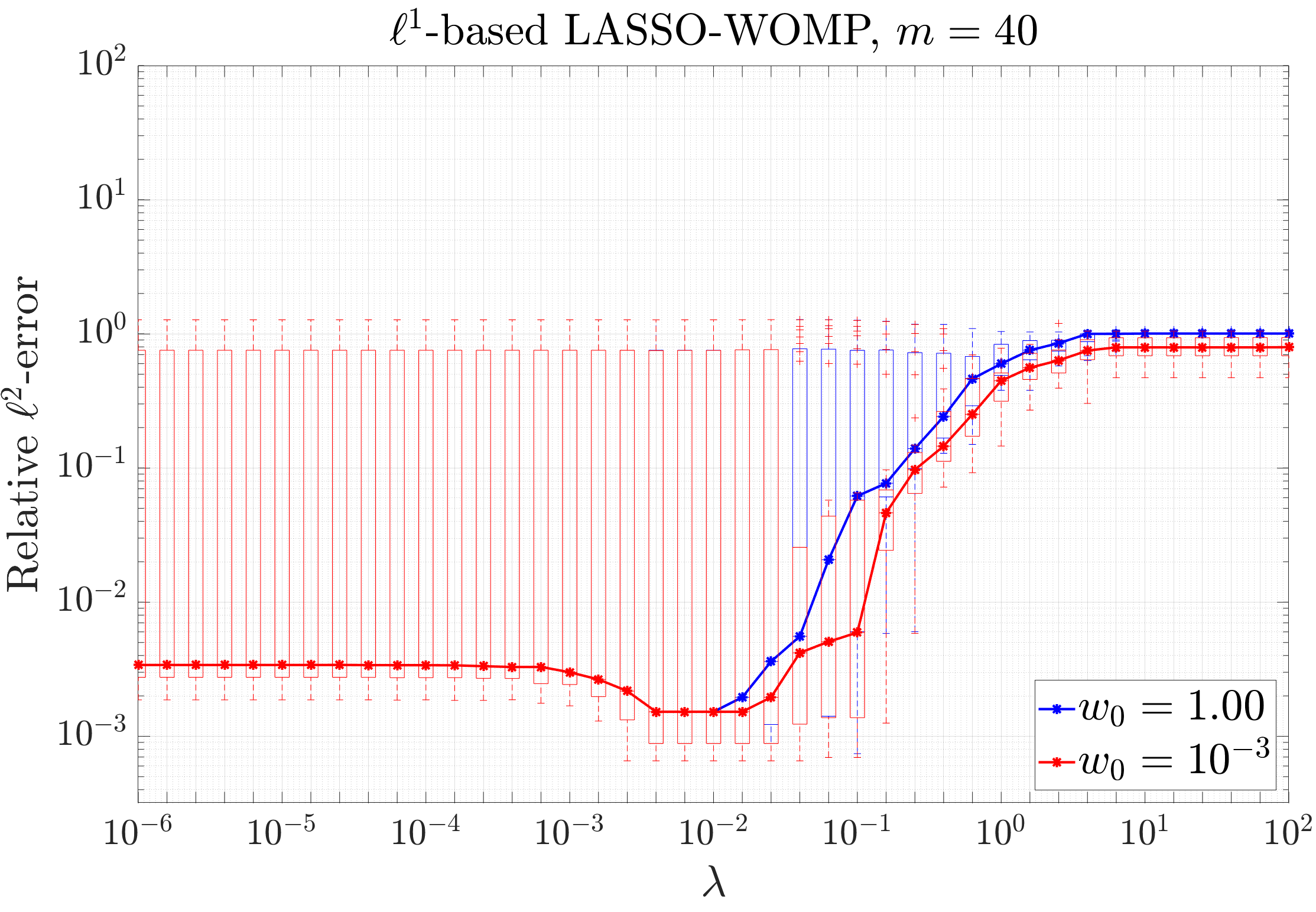}
	\end{subfigure}\hspace{0.5em}%
	\begin{subfigure}[t]{0.33\linewidth}
		\includegraphics[width = \textwidth]{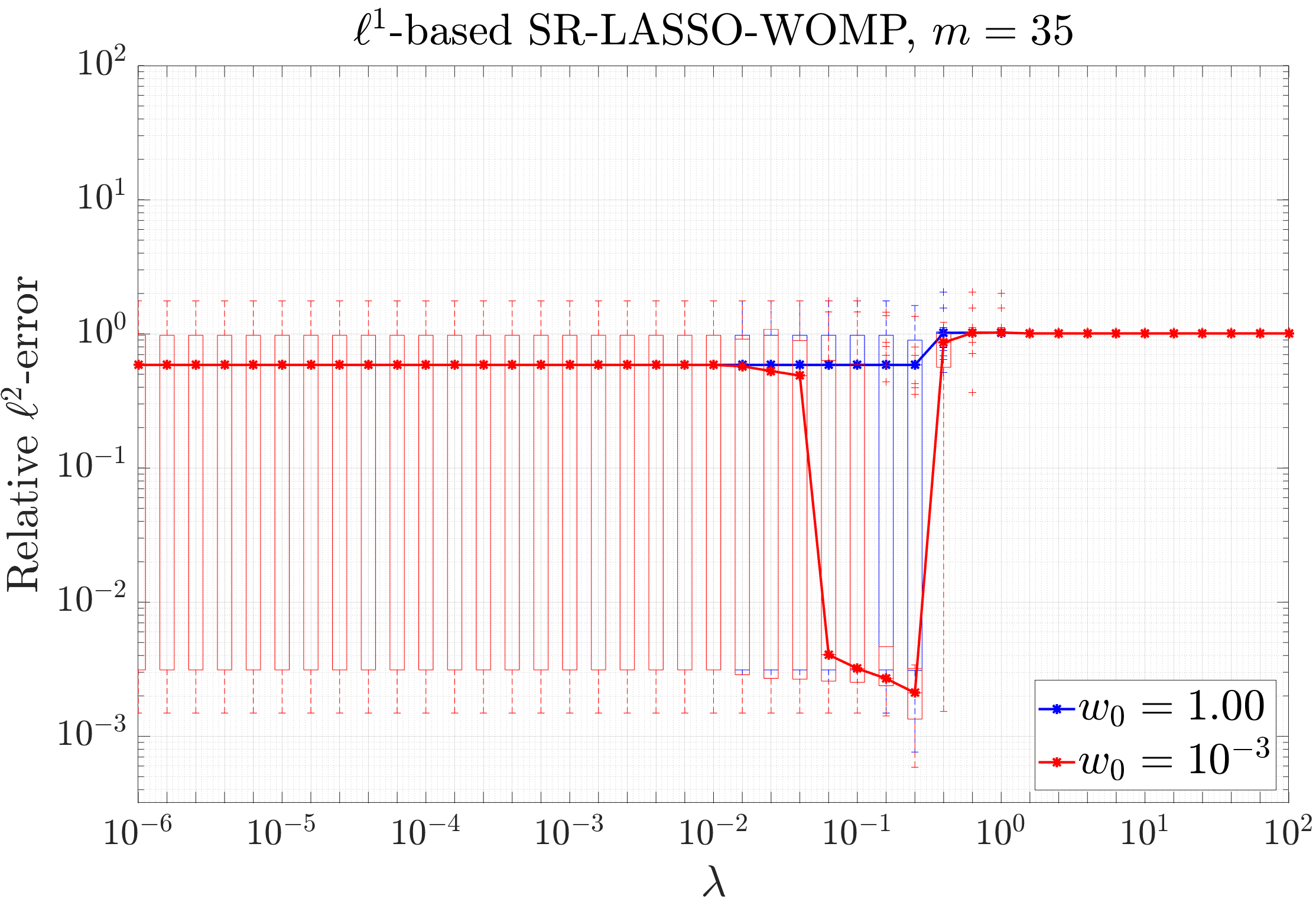}
	\end{subfigure}\hspace{0.5em}%
	\begin{subfigure}[t]{0.33\linewidth}
		\includegraphics[width = \textwidth]{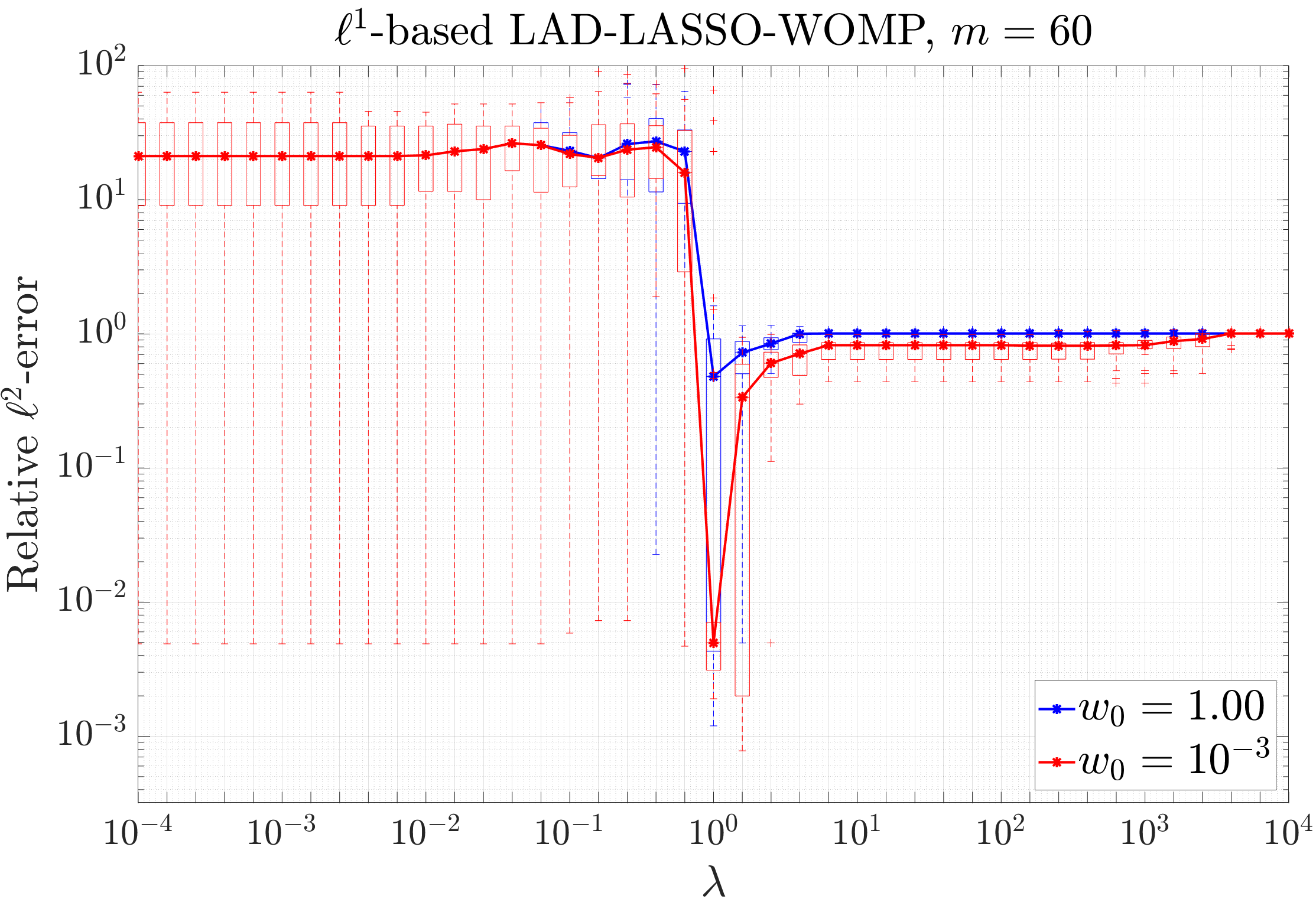}
	\end{subfigure}\hspace{0.5em}%
	
	\begin{subfigure}[t]{0.33\linewidth}
		\includegraphics[width = \textwidth]{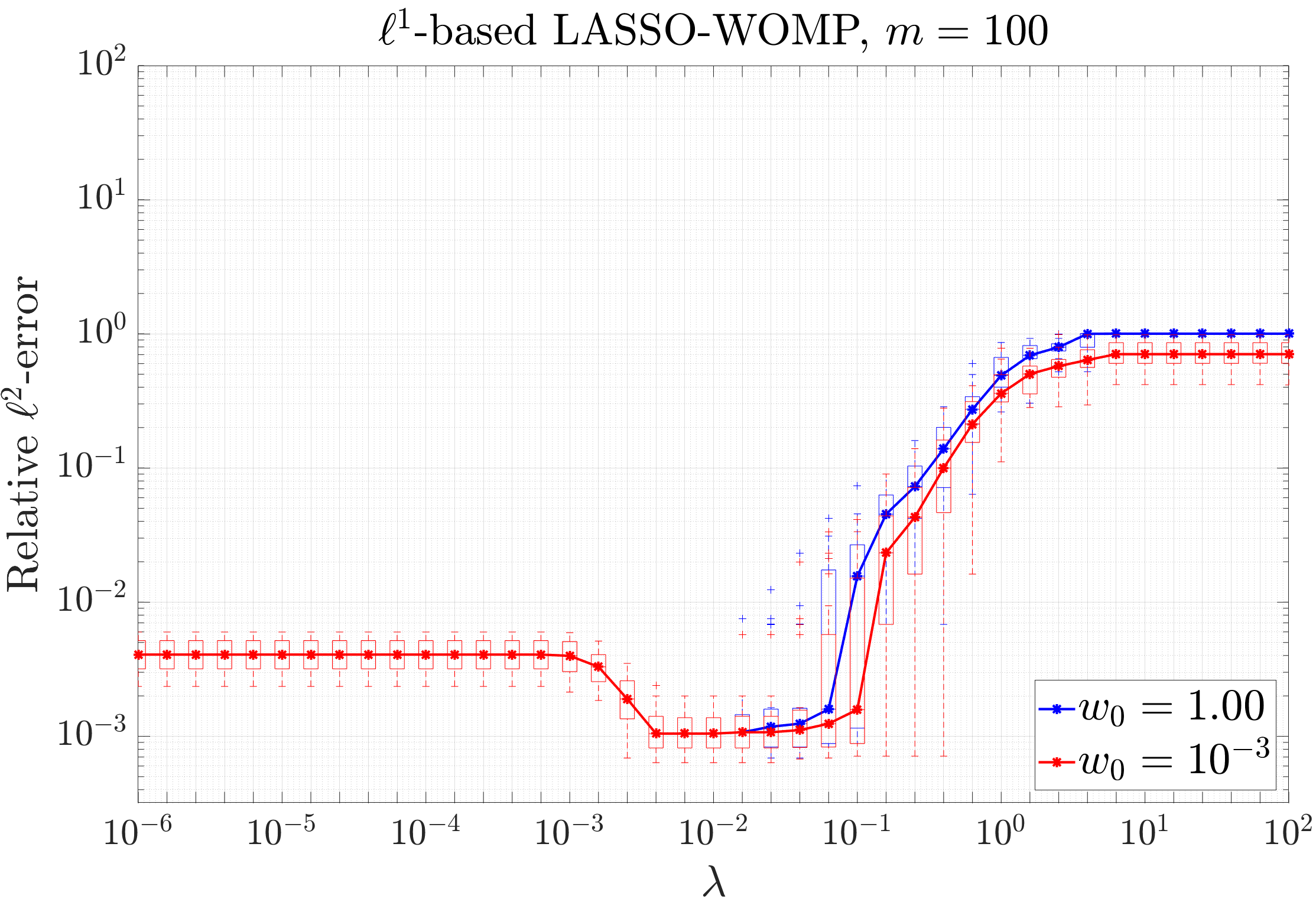}
	\end{subfigure}\hspace{0.5em}%
	\begin{subfigure}[t]{0.33\linewidth}
		\includegraphics[width = \textwidth]{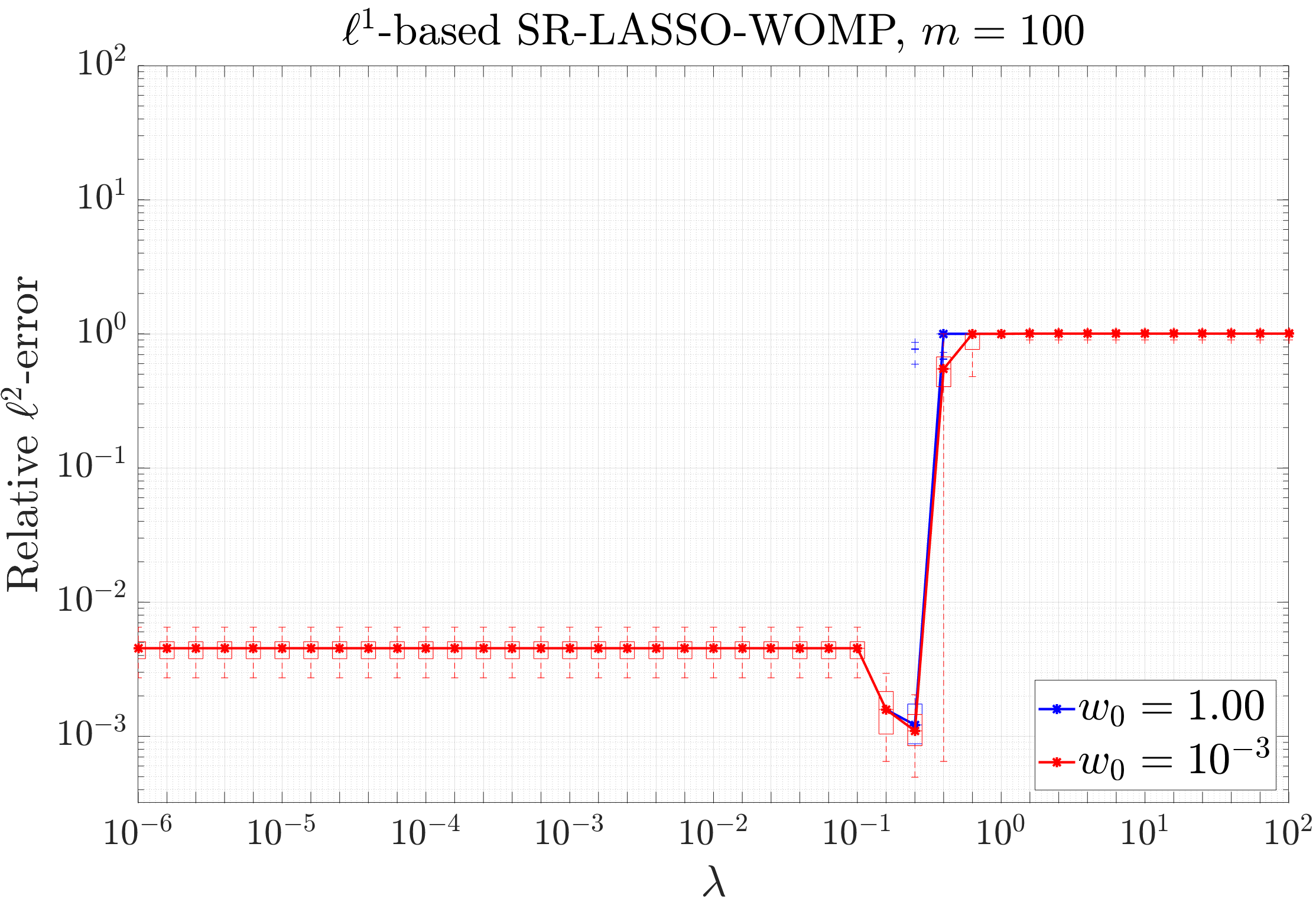}
	\end{subfigure}\hspace{0.5em}%
	\begin{subfigure}[t]{0.33\linewidth}
		\includegraphics[width = \textwidth]{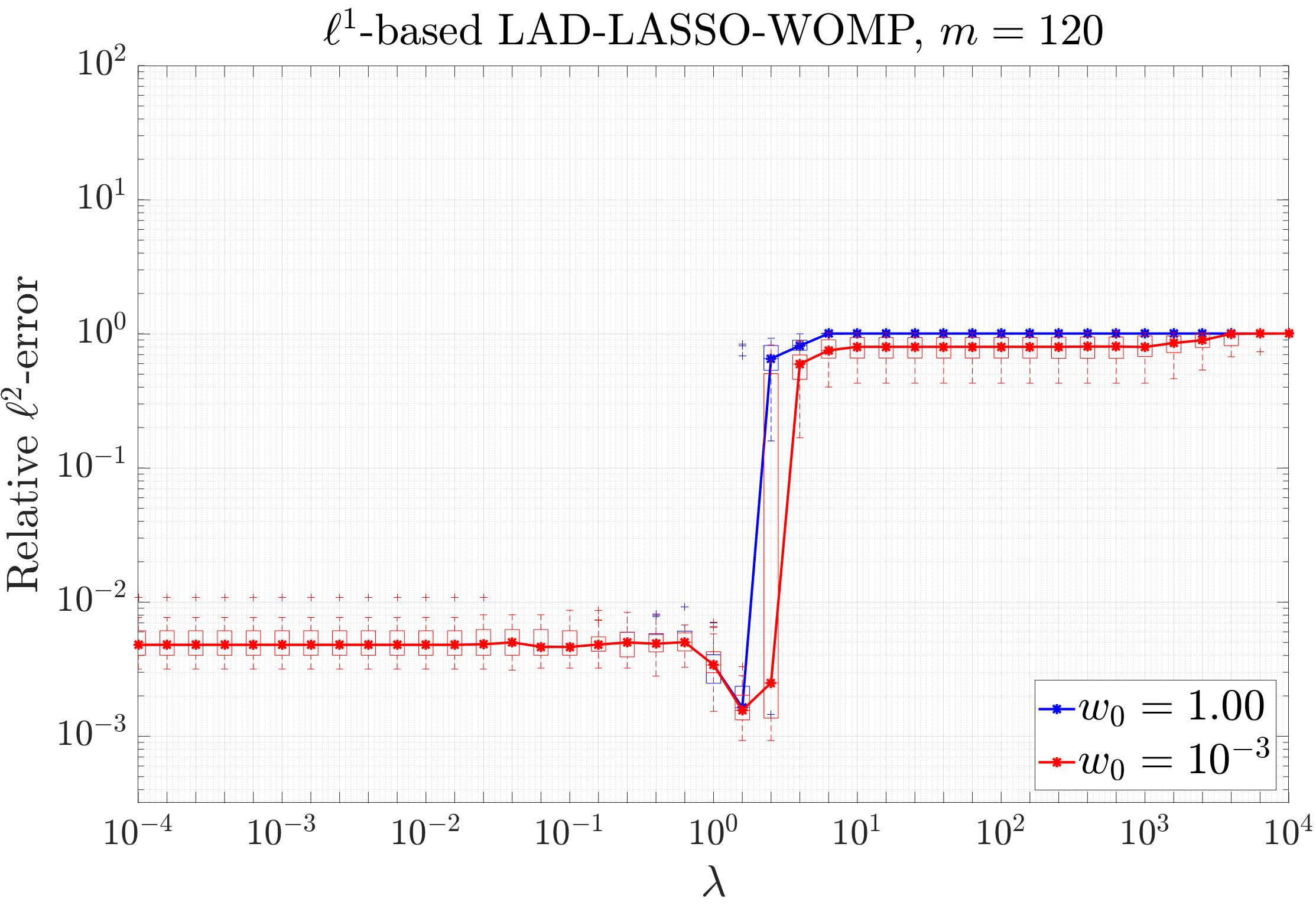}
	\end{subfigure}\hspace{0.5em}%
	\caption{Relative error as a function of the tuning parameter (Experiment III, sparse random Gaussian setting with oracle). Different $ \ell^1 $-based WOMP algorithms are tested for a fixed noise level, different choices of weights depending on the parameter $w_0$ (see \eqref{eq:oracle_weights}), and for low (top row) and high (bottom row) values of $ m $.}
	\label{fig:oracle_error_vs_lambda}
\end{figure*}
Recovery is performed for different numbers of measurements, namely, $m = 40, 100$ for LASSO, $ m = 35, 100 $ for SR-LASSO WOMP and $m=60, 120$ for LAD-LASSO WOMP. We observe that weights are able to improve reconstruction in all settings. This phenomenon has been previously known in the literature (see, e.g., \cite{friedlander2011recovering,bah2016sample}), and in this experiment is particularly evident in the SR-LASSO and LAD-LASSO cases (second and third column in Fig.~\ref{fig:oracle_error_vs_lambda}).

\subsection{Recovery error versus iteration number}
\label{s:numerics_iterations}
In the last two experiments (IV and V), we study the recovery error as a function of the number of iterations of the proposed greedy algorithms. This will highlight the benefits due to the presence of a regularization term in the loss function. We compute the relative $\ell^2$-error at iteration $ k $ and for a specific value of $ \lambda $ as
$$ E^{(k)}_{\lambda} = \frac{\|\hat{x}^{(k)} - x\|_2}{\|x\|_2},
\quad k \in [N_{\mathrm{iter}}],\ \lambda \in \mathcal{L}, $$
where $N_{\mathrm{iter}}$ is the maximum number of iterations and $\mathcal{L}$ is a suitable set of tuning parameters. We repeat the above process for $ N_{\mathrm{trial}} $ random trials. The setup for Experiments IV and V is detailed below.
\paragraph{Experiment IV (sparse random Gaussian setting).} For LASSO and SR-LASSO WOMP, we fix
    $$ N = 200,\ m = 100,\ s = 15,\ N_{\mathrm{iter}} = 150,\ \eta = 10^{-3},\ M = 0. $$
    For LAD-LASSO WOMP, we let
    $$ N = 200,\ m = 100,\ s = 15,\ N_{\mathrm{iter}} = 150,\ \eta = 10^{-3},\ M = 100,\ K = 0.05m. $$
\paragraph{Experiment V (function approximation).}  We choose
    $$ d = 10,\ n = 8,\ N = |\Lambda| = 471,\ m = 200,\ N_{\mathrm{iter}} = 250 $$
    where $ n $ is the order of hyperbolic cross set defined in \eqref{eq:hyperbolic_cross_set}, and $ M $, $ \eta $ and $ K $ as in Experiment III.\\

Figs.~\ref{fig:error_vs_iteration_sparse_gaussian} and \ref{fig:error_vs_iteration_func_approx} show the relative recovery $\ell^2$-error of $ \ell^1 $-based WOMP algorithm for Experiments IV and V, respectively. 
\begin{figure*}[t!]
	\begin{subfigure}[t]{0.33\linewidth}
		\includegraphics[width = \textwidth]{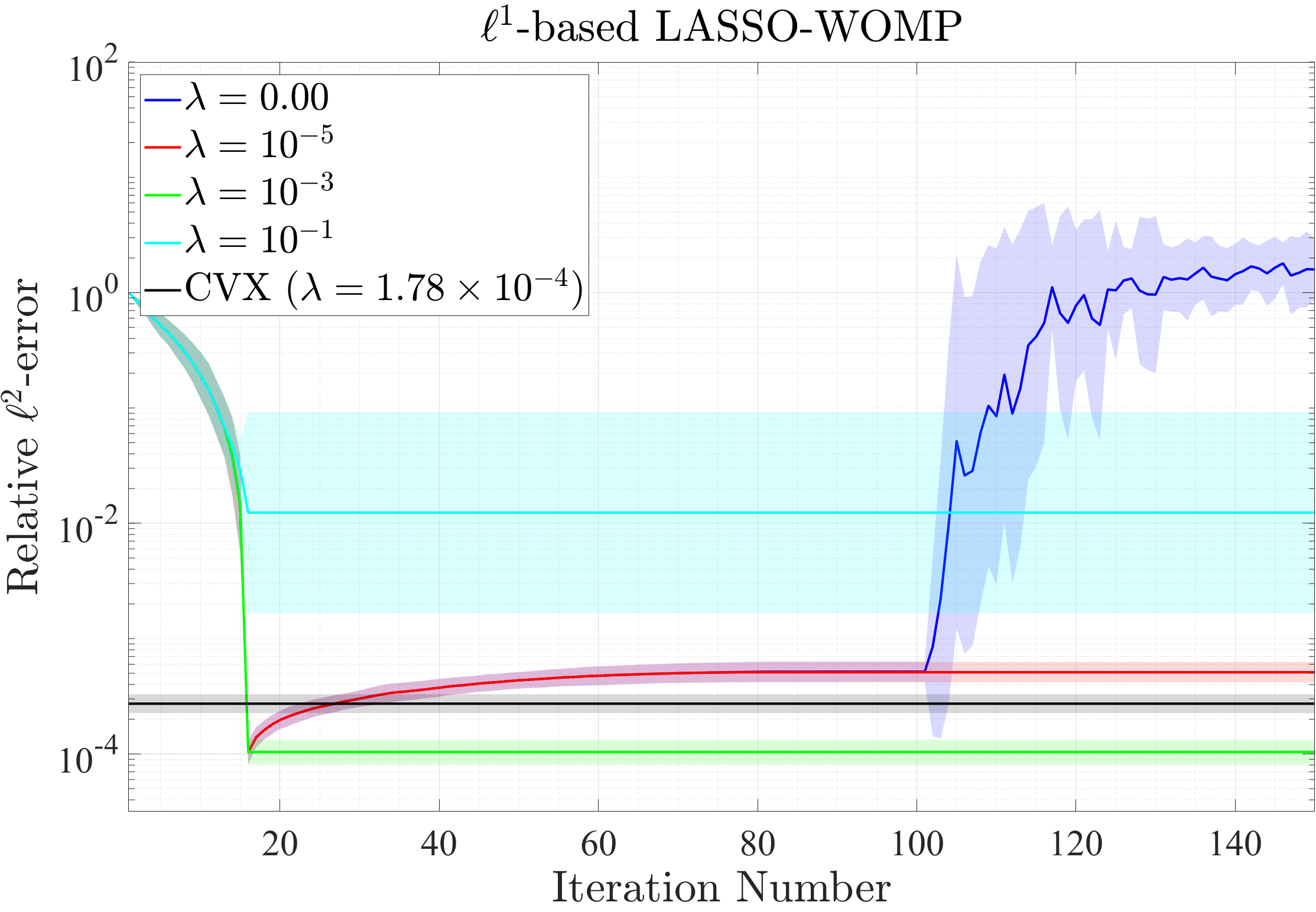}
	\end{subfigure}\hspace{0.5em}%
	\begin{subfigure}[t]{0.33\linewidth}
		\includegraphics[width = \textwidth]{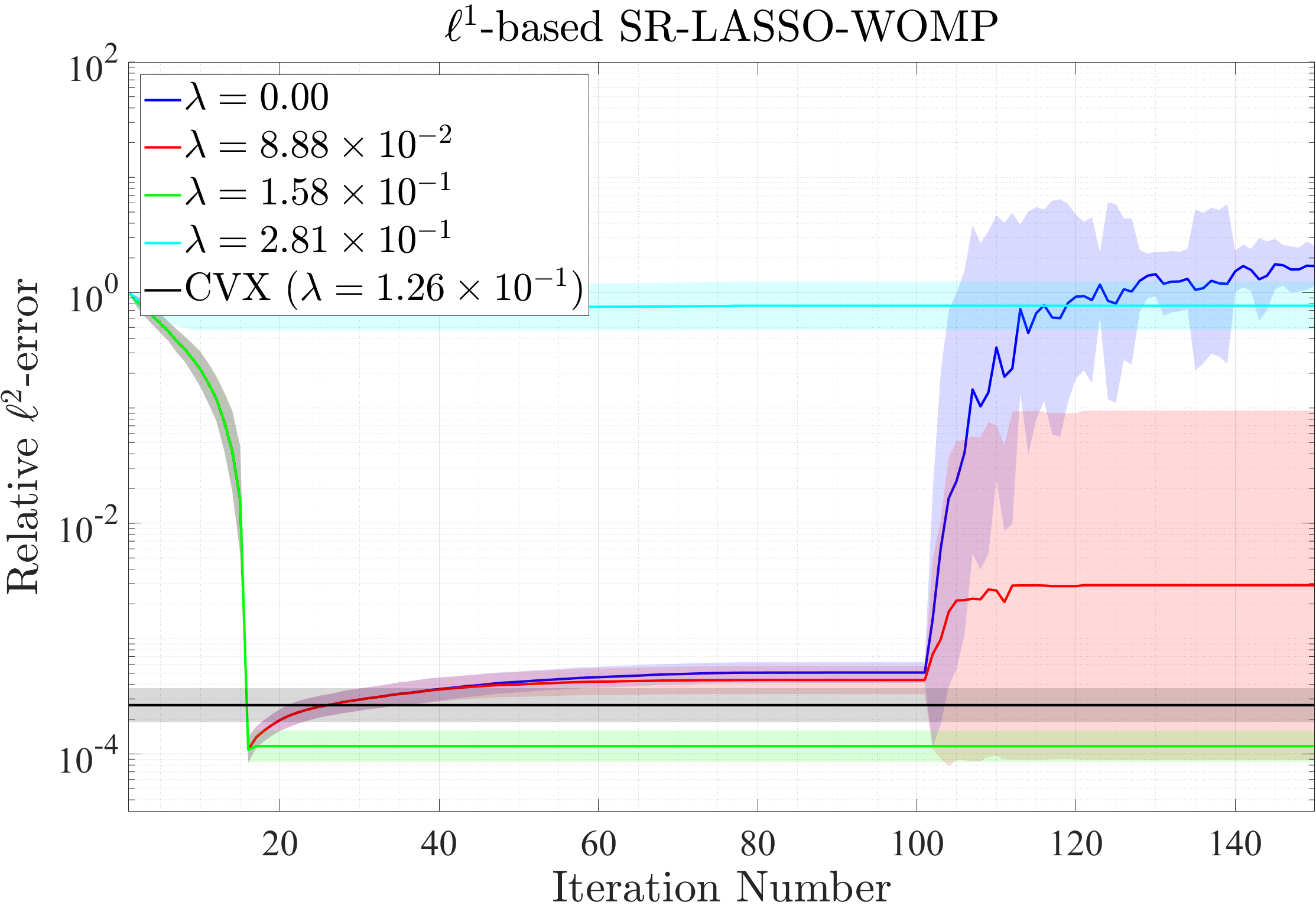}
	\end{subfigure}\hspace{0.5em}%
	\begin{subfigure}[t]{0.33\linewidth}
		\includegraphics[width = \textwidth]{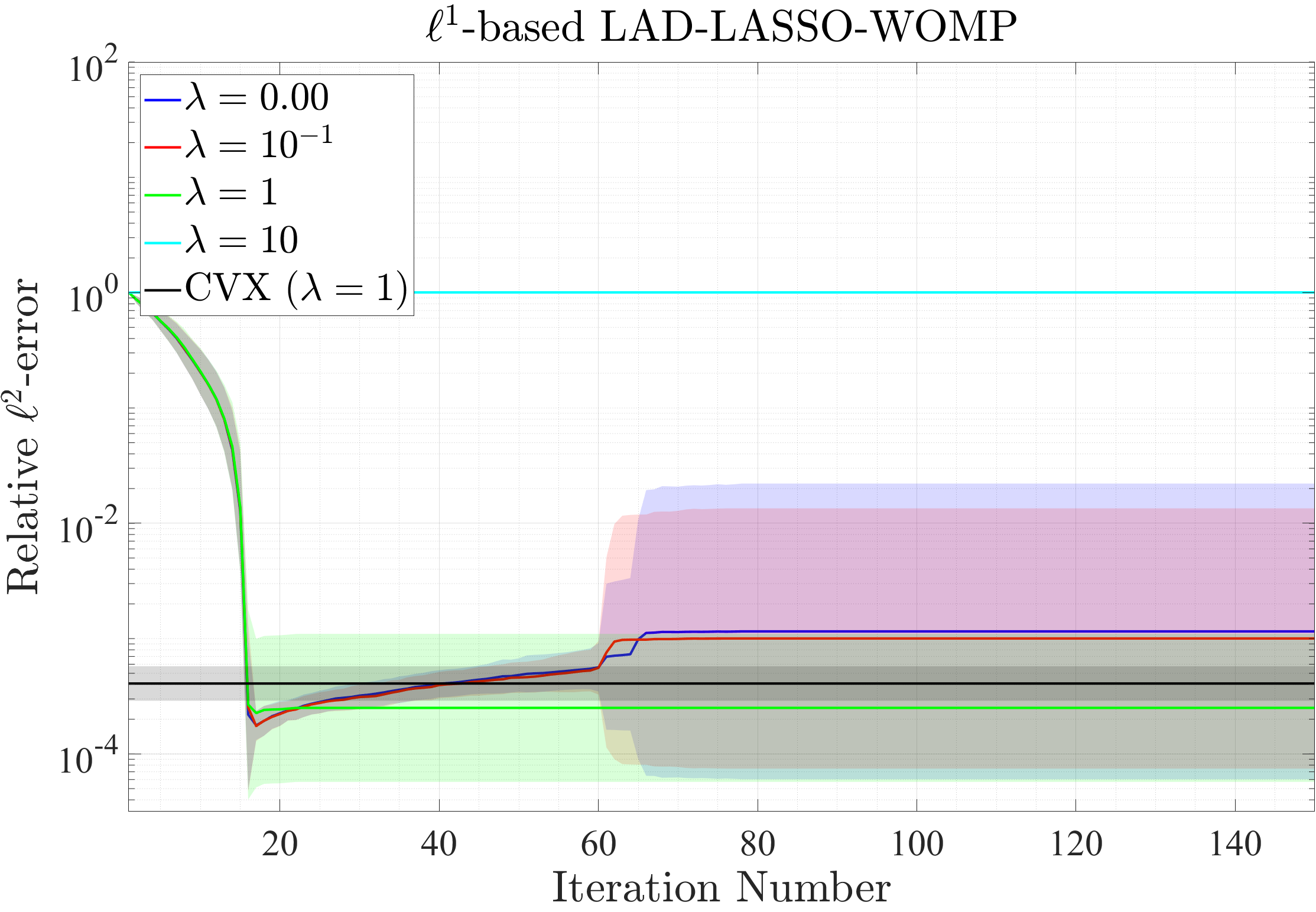}
	\end{subfigure}
	\caption{Relative error as a function of the iteration number (Experiment IV, sparse random Gaussian setting). The proposed $\ell^1$-based WOMP formulations are tested for different values of the tuning parameter $\lambda$. The black curve corresponds to recovery via convex optimization of the corresponding loss function.}
	\label{fig:error_vs_iteration_sparse_gaussian}
\end{figure*}
For better visualization, we use \emph{shaded plots}. 
\begin{figure*}[t!]
	\begin{subfigure}[t]{0.33\linewidth}
		\includegraphics[width = \textwidth]{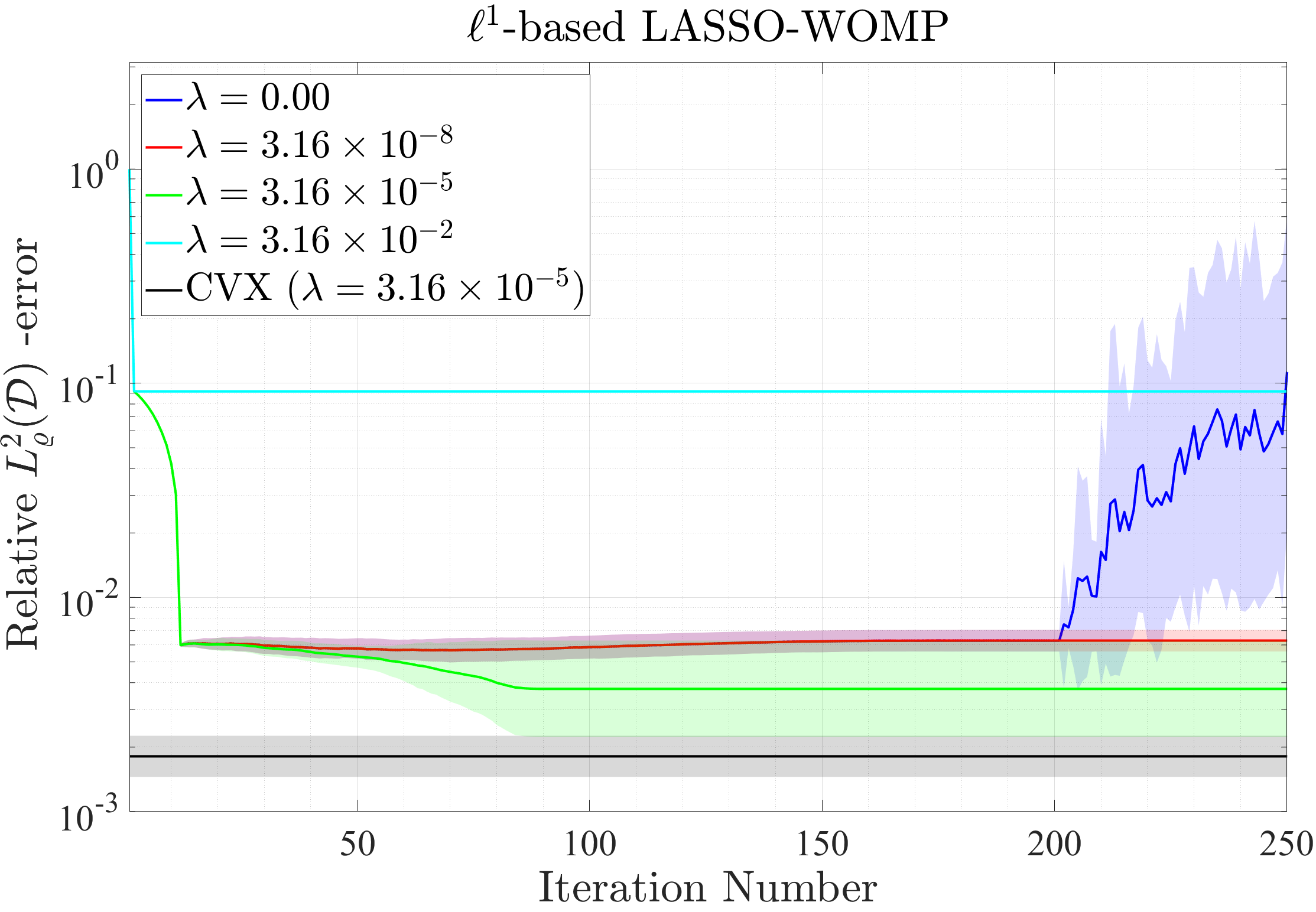}
	\end{subfigure}\hspace{0.5em}%
	\begin{subfigure}[t]{0.33\linewidth}
		\includegraphics[width = \textwidth]{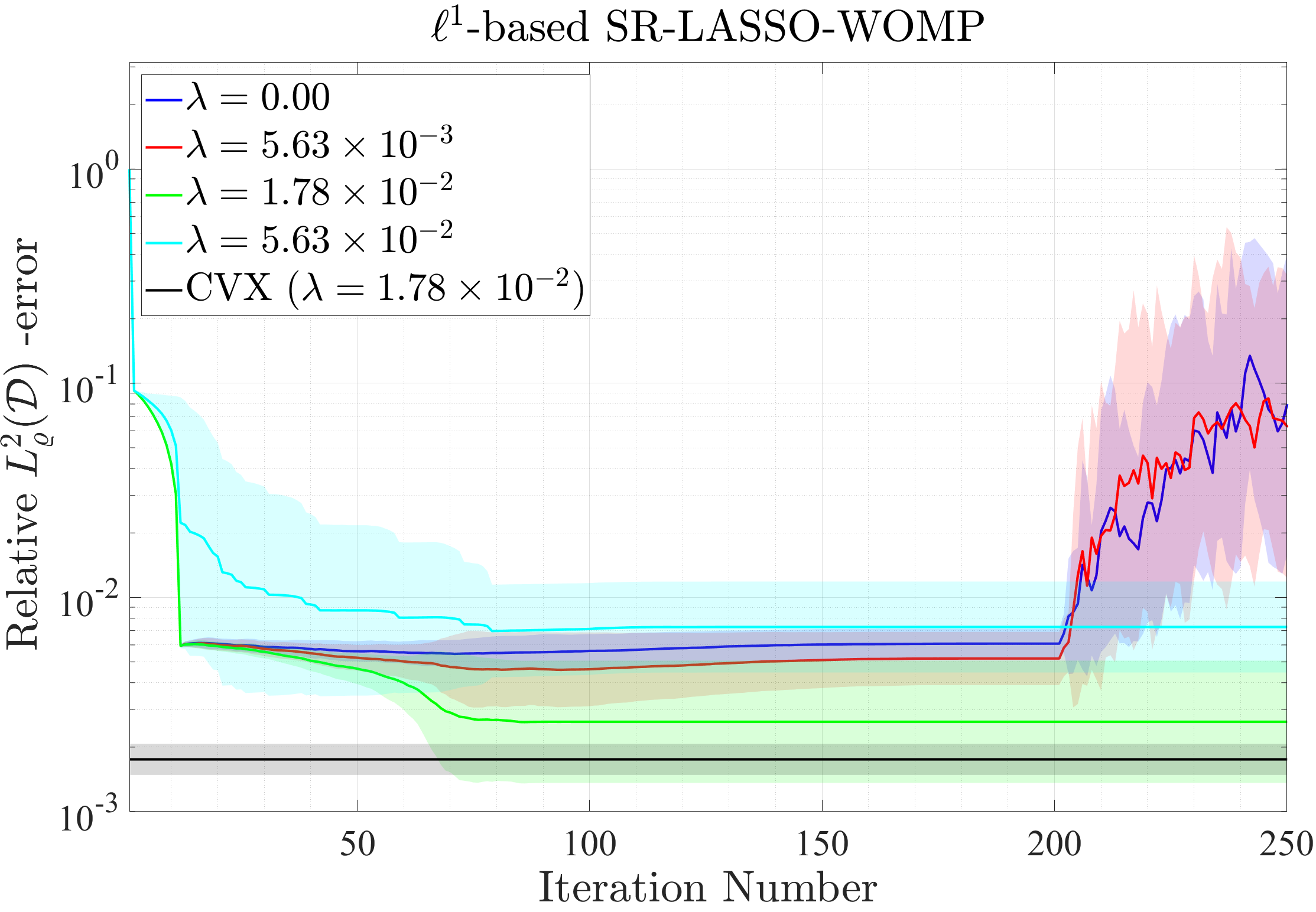}
	\end{subfigure}\hspace{0.5em}%
	\begin{subfigure}[t]{0.33\linewidth}
		\includegraphics[width = \textwidth]{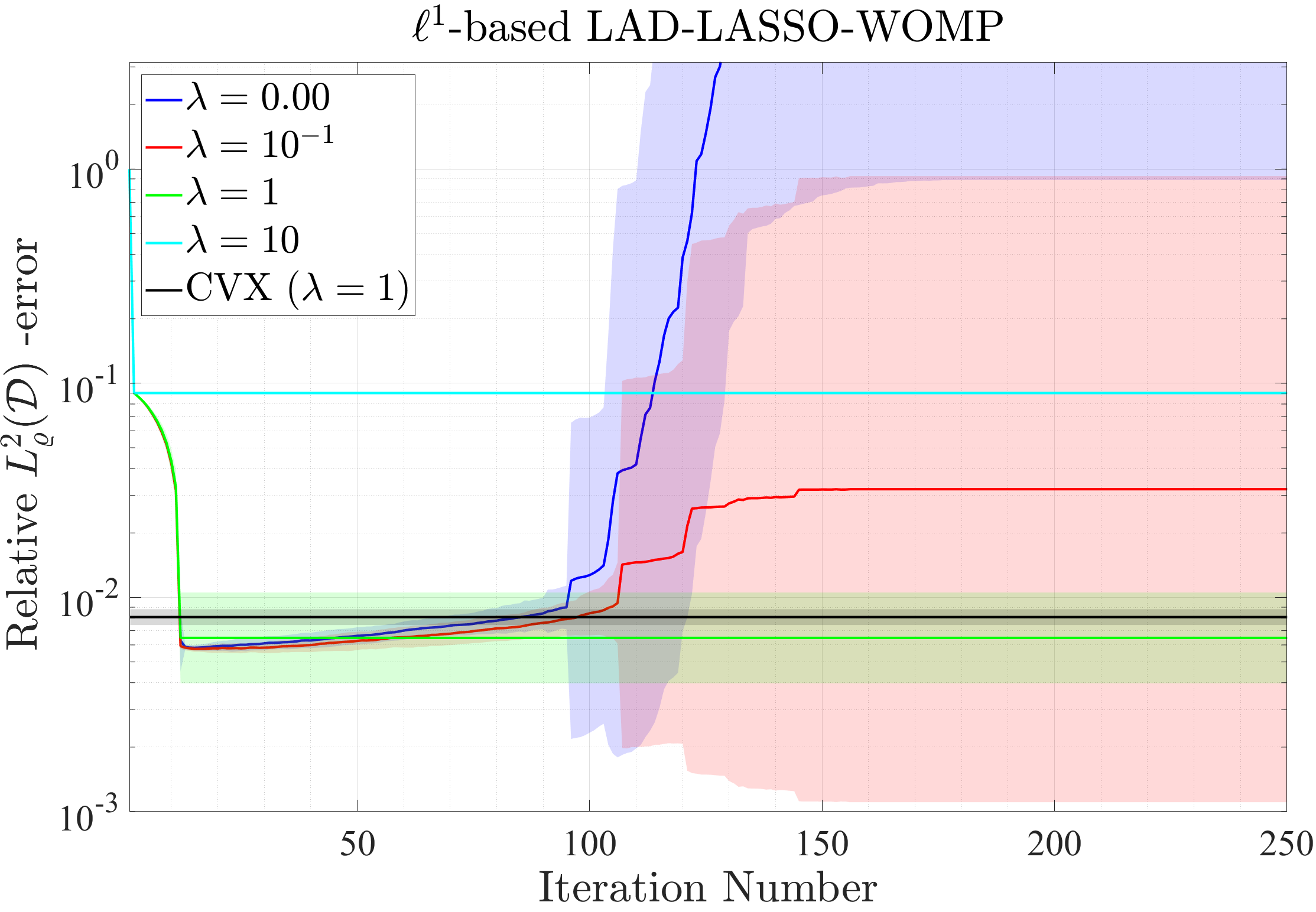}
	\end{subfigure}
	\caption{Relative error as a function of the iteration number  (Experiment V, function approximation). The proposed $\ell^1$-based WOMP formulations are tested for different values of the tuning parameter $\lambda$. The black curve corresponds to recovery via convex optimization of the corresponding loss function.}
	\label{fig:error_vs_iteration_func_approx}
\end{figure*}
The solid curves represent the mean relative error as a function of iteration number. The upper and lower boundaries of the shaded areas are designated by plotting the discrete points $ (k, 10^{\mu^k_{\lambda} + \sigma^k_{\lambda}}) $ and $ (k, 10^{\mu^k_{\lambda} - \sigma^k_{\lambda}}),\ k \in [N_{\mathrm{iter}}],\ \lambda \in \mathcal{L}$. Here $ \mu^k_{\lambda} $ and $ \sigma^k_{\lambda} $ denote, respectively, the sample mean and the sample standard deviation of the $\log_{10}$-transformed relative $\ell^2$-error at iteration $ k $, i.e.,
$$  \mu^k_{\lambda} = \frac{1}{N_{\mathrm{trial}}}\sum_{i = 1}^{N_{\mathrm{trial}}}\log((E^{(k)}_{\lambda})_i)\quad \text{and} \quad
\sigma^k_{\lambda} = \sqrt{\frac{1}{N_{\mathrm{trial}} - 1}\sum_{i = 1}^{N_{\mathrm{trial}}}\left(\log((E^{(k)}_{\lambda})_i) - (\mu^k_{\lambda})_i\right)^2}. $$
(See also \cite[\S A.1.3]{adcock2022sparse} for more details.) The set $ \mathcal{L} $ of tuning parameters always consists of $ \lambda  = 0 $ (in blue), as well as the best empirical $ \lambda $ (in green), an underestimated $\lambda$ (in red), and an overestimated one (in cyan). By the best empirical $ \lambda $, we mean the $ \lambda $ that achieves the smallest empirical relative error $ E_{\lambda}^{(k)} $, over a wide range of explored values and on average for $ N_{\text{trial}} $ random trials. Moreover, we compare each $ \ell^1 $-based WOMP formulation, with the corresponding convex optimization problem, with optimally tuned $ \lambda $ (in black). This experiment confirms once again that when $ \lambda $ is tuned appropriately, $ \ell^1 $-based WOMP algorithms can effectively perform sparse recovery from compressive measurements. In particular, for suitably chosen values of $ \lambda $, WOMP is robust with respect to the iteration number. On this note, we observe that standard OMP (corresponding to $ \lambda = 0 $ for LASSO and SR-LASSO) begins to severely overfit when the iteration number is larger than $ m $. The reason behind this phenomenon is that in standard OMP there is no regularization mechanism that prevents the greedy selection from adding more indices to the support than number of measurements. Therefore, after $ m $ iterations the least-squares fitting in standard OMP leads to severe overfitting and the algorithm starts diverging. A similar phenomenon is observed in \cite{adcock2020sparse} for $\ell^0$-based LASSO WOMP.

\subsection{Runtime, loss reduction and stopping criterion}
\label{s:runtime_loss_stop}
In this subsection we further demonstrate the robustness of the proposed algorithms with respect to the number of iterations. To do so, we consider additional numerical experiments aimed at illustrating the benefits of regularization for the computational efficiency of the algorithms and their stopping criteria. As already mentioned, this robustness is due to the fact that the support stops increasing once the sparsity of the reconstructed signal reaches a saturation point. This removes the need to perform a data-fitting step in the subsequent iterations, which, in turn, leads to significant runtime savings. In addition, this feature allows one to consider a reliable stopping criterion, i.e., halting the algorithm when the loss function reaches a steady state, or equivalently when the greedy selection leads to an index that already belongs to the current support.

\paragraph{Experiment VI (runtime).}
Thanks to the mechanism discussed above, the increase in runtime overhead of $*$-LASSO WOMP algorithms' iterations becomes negligible after a certain point. To show this, we consider the same numerical setting as in Experiment IV for sparse random Gaussian signals, but this time we measure algorithms' runtime as a function of the iteration number. Figure~\ref{fig:runtime_vs_iteration_sparse_gaussian} clearly illustrates that for an appropriate choice of $ \lambda $, WOMP imposes no significant increase in runtime overhead after a certain number of iteration corresponding to the signal's sparsity. 
\begin{figure*}[t!]
	\begin{subfigure}[t]{0.33\linewidth}
		\includegraphics[width = \textwidth]{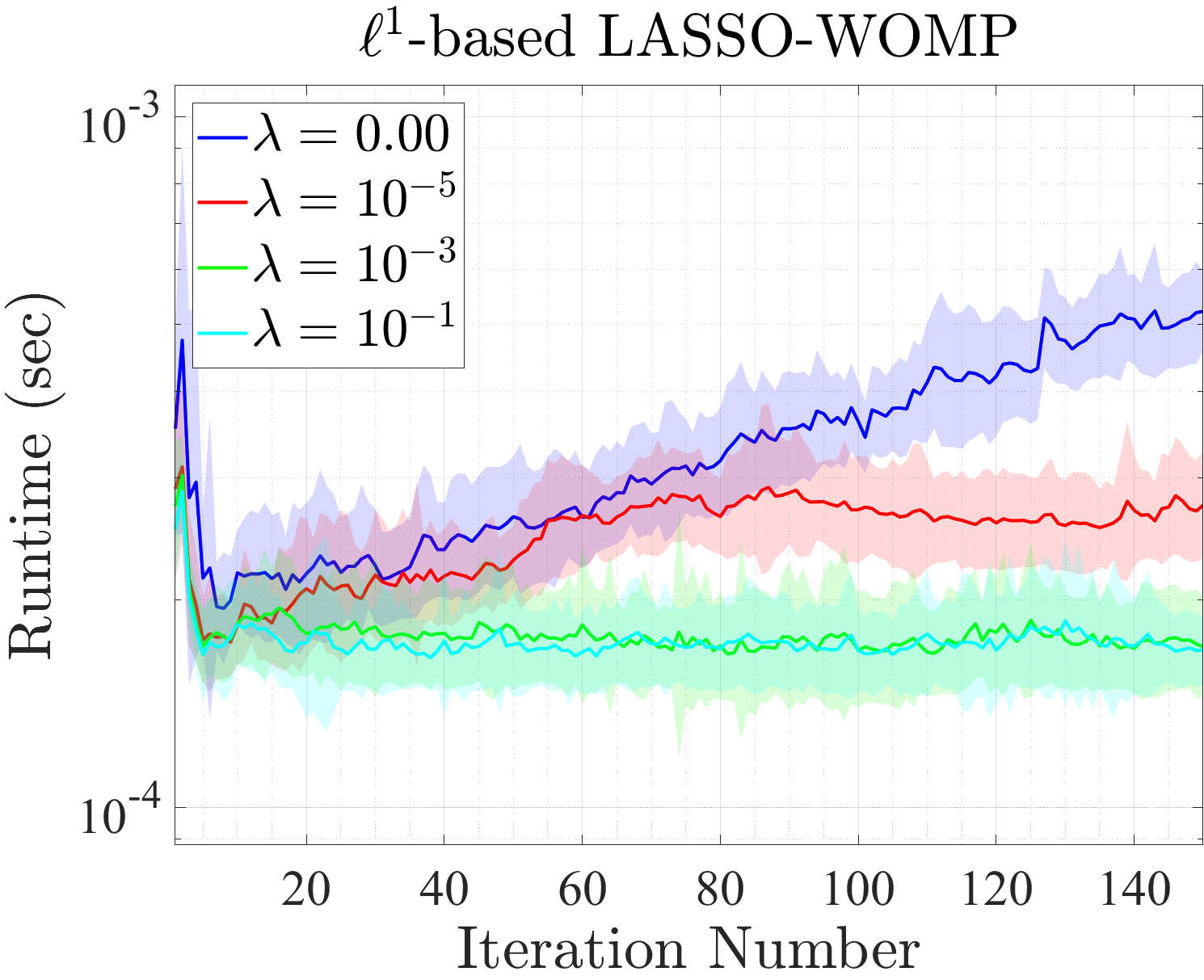}
	\end{subfigure}\hspace{0.5em}%
	\begin{subfigure}[t]{0.33\linewidth}
		\includegraphics[width = \textwidth]{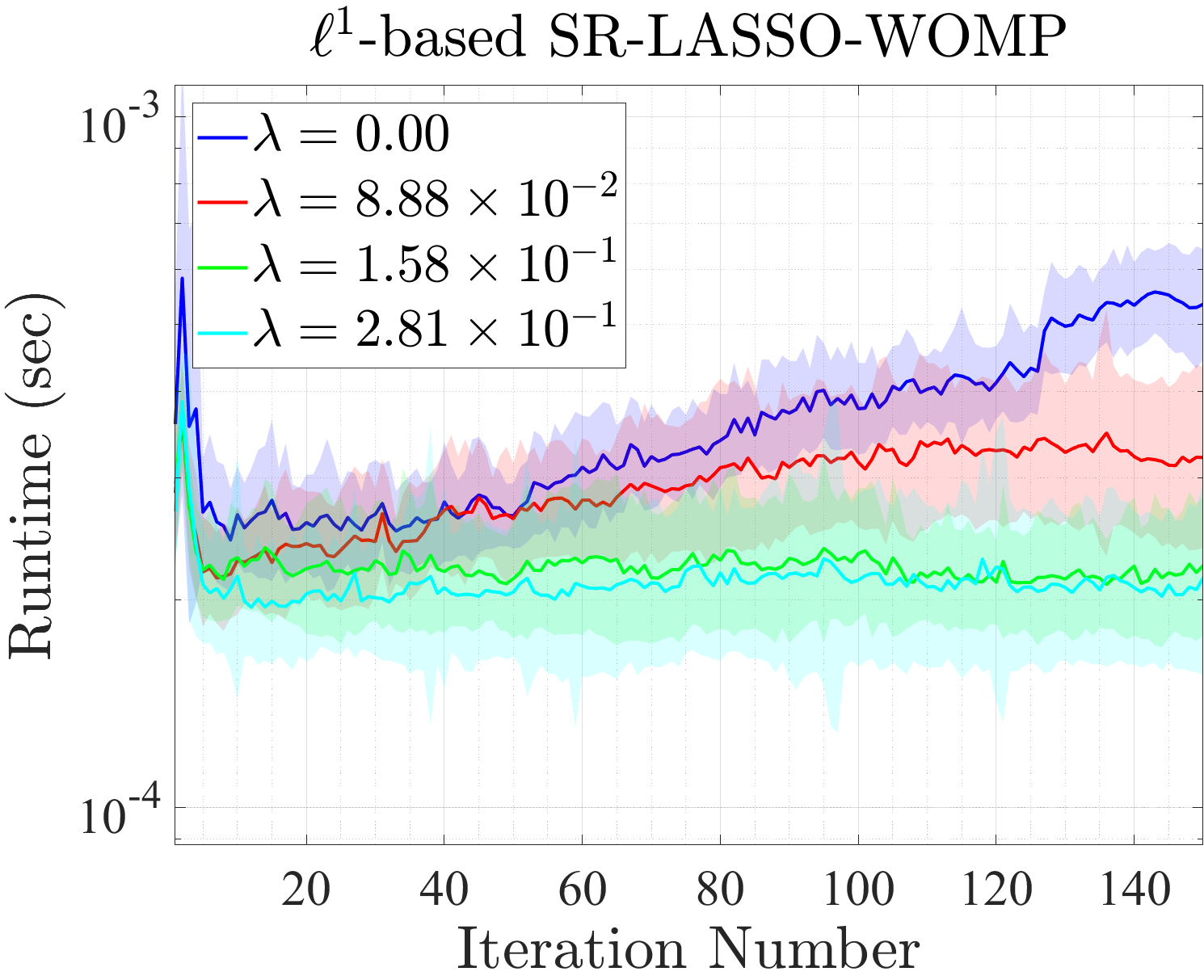}
	\end{subfigure}\hspace{0.5em}%
	\begin{subfigure}[t]{0.33\linewidth}
		\includegraphics[width = \textwidth]{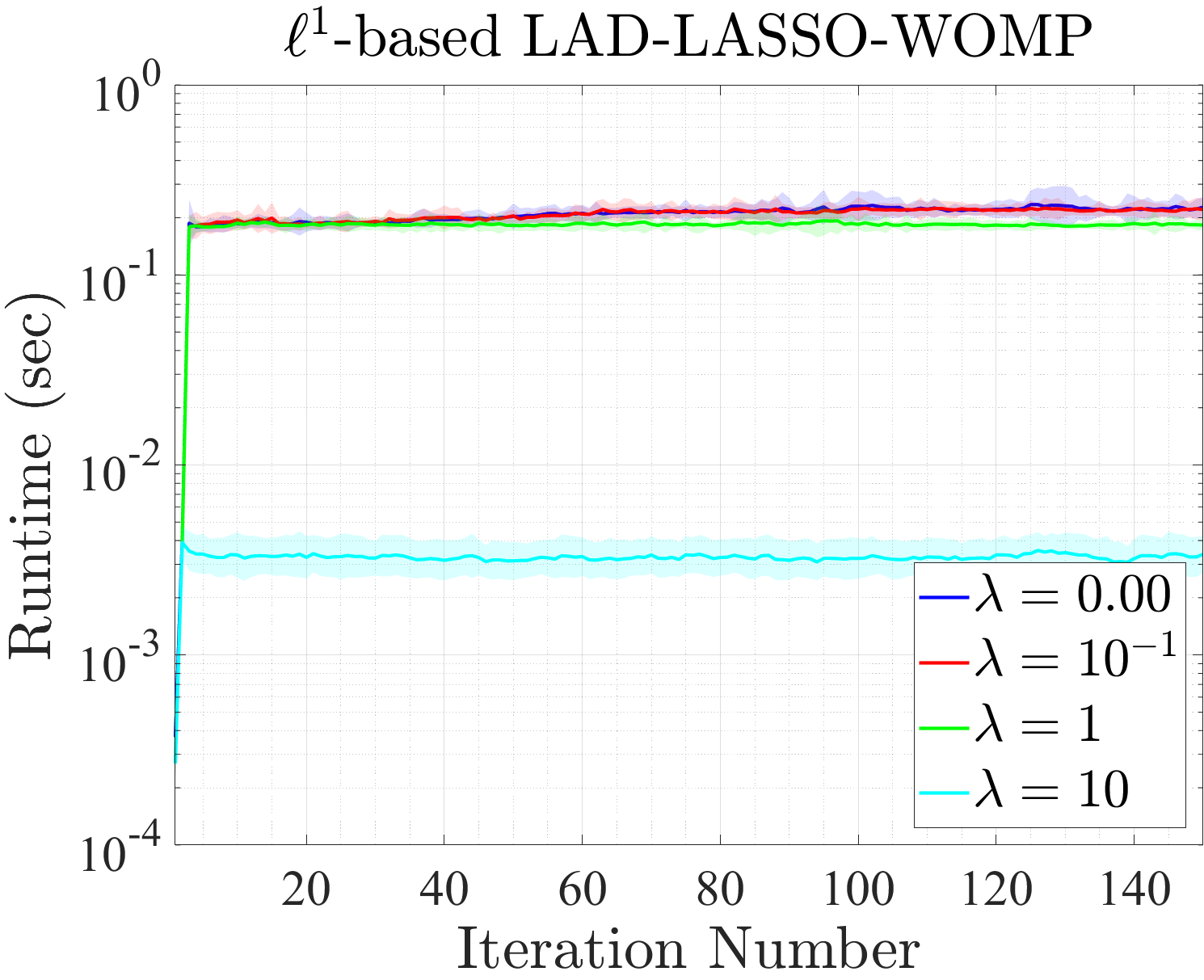}
	\end{subfigure}
	\caption{Algorithms' runtime as a function of the iteration number (same data as Experiment VI, sparse random Gaussian setting). The proposed $\ell^1$-based WOMP formulations are tested for different values of the tuning parameter $\lambda$. Colors are consistent with the ones in Figure \ref{fig:error_vs_iteration_sparse_gaussian}.}
	\label{fig:runtime_vs_iteration_sparse_gaussian}
\end{figure*}
This is not the case for standard OMP (associated with $ \lambda  = 0 $ in LASSO and SR-LASSO WOMP) as it has to solve increasingly large least-squares problems as the iterations proceed. A similar phenomenon is also observed in the context of function approximation. In higher dimensions (greater values of $ m $), this advantage becomes even more pronounced, as solving the least-squares becomes more computationally expensive (for the sake of conciseness, we omit these plots from the paper). 

Next we revisit a point mentioned in passing in Remark~\ref{remark:solving_LAD}. As mentioned, with $ N $ large and $ s $ small, reducing the dimensionality of a high-dimensional problem over $ \mathbb{R}^{m \times N} $ into smaller consecutive problems over $ \mathbb{R}^{m \times k} $, where $ k $ is the iteration number, can offer computational advantages. This phenomenon is well captured by Figure \ref{fig:error_vs_runtime_ladlassowomp}, where we plot the runtime of CVX and LAD-LASSO WOMP as a function of the solution sparsity, as well as a ``dartboard'' plot of relative $ \ell^2 $-error with respect to runtime of these methods for different values of sparsity.
\begin{figure*}[t!]
	\centering
	\includegraphics[height = 5cm]{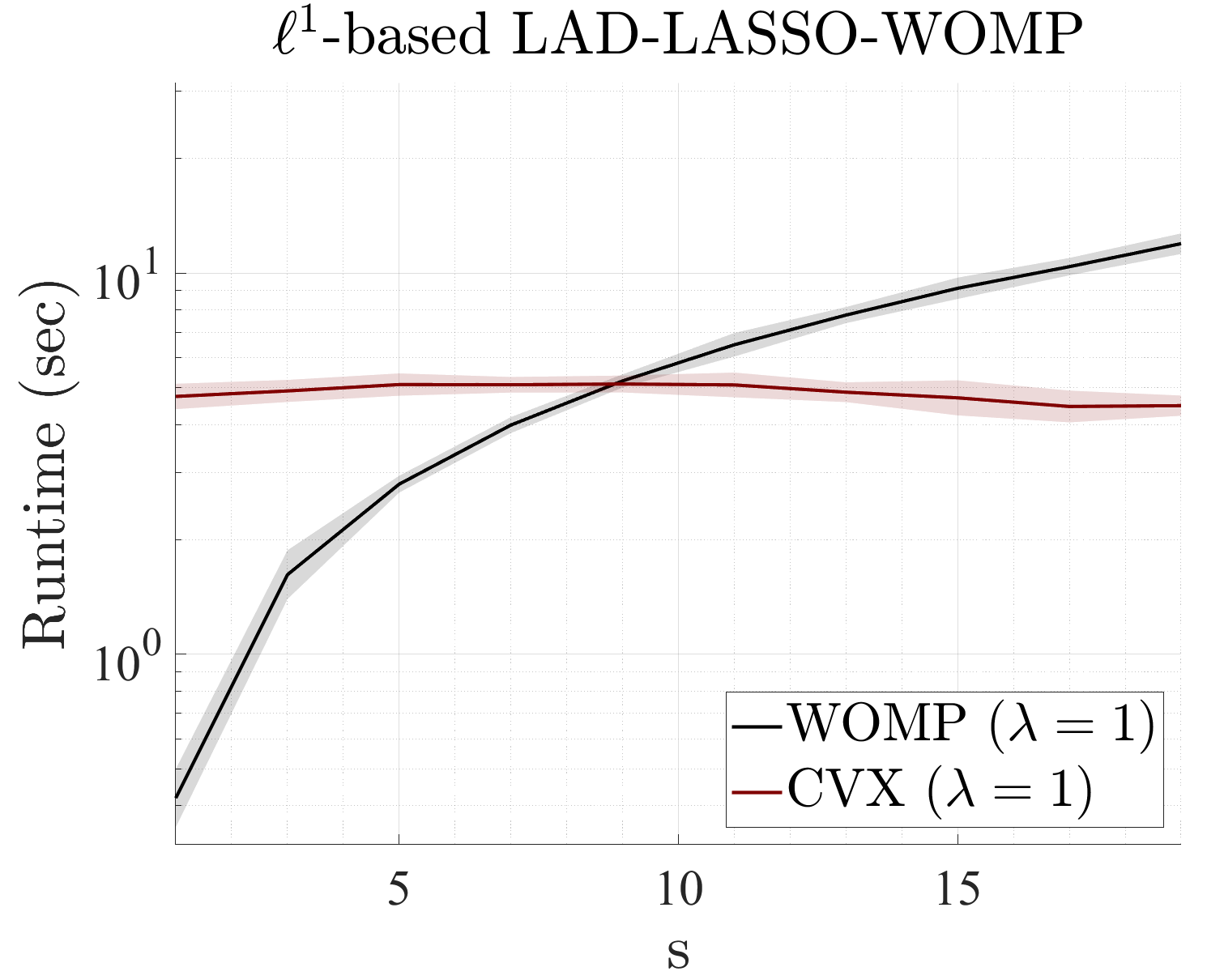} \hspace{0.5em}%
	\includegraphics[height = 5cm]{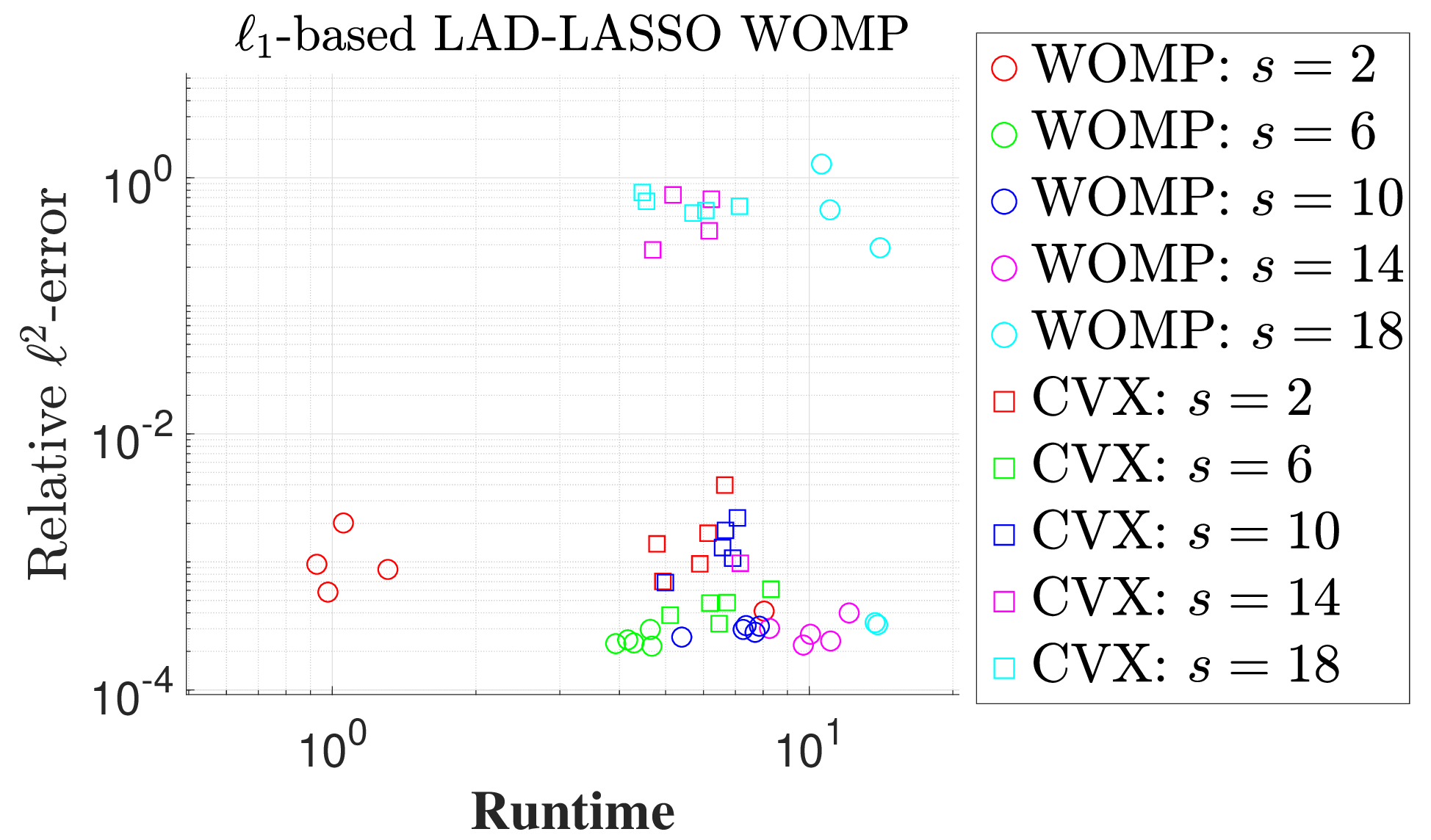}
	\caption{On the left: Runtime of LAD-LASSO WOMP and CVX as a function of solution sparsity $ s $.
	On the right: Relative $ \ell^2 $-error and runtime of LAD-LASSO and CVX for different values of sparsity. Both plots are generated using the same numerical settings.}
	\label{fig:error_vs_runtime_ladlassowomp}
\end{figure*}
For an ambient space dimension of $ N = 4000 $ and $ m = 120 $ measurements, we plot the runtime as a function of $ s $ over 15 trials for the left and 5 trials for the right figure. For LAD-LASSO WOMP, we always fix the number of iterations to $ 2s $. 
These plots vividly reveal the existence of a phase transition. For small enough values of $ s $, running LAD-LASSO-WOMP is cheaper than solving a LAD-LASSO problem with CVX. After a critical value of $ s $ (in this case, $ s = 10 $), CVX converges faster than LAD-LASSO WOMP. Nevertheless, it is worth noting that CVX is unable to attain a reliable solution for $ s \geq 14 $ (the relative error is close to 1), whereas LAD-LASSO WOMP is able to compute a more accurate solution for these values of $ s $.

\paragraph{Experiment VII (loss reduction).}
In Figure \ref{fig:loss_vs_iteration_sparse_gaussian} we plot the loss value, i.e., values of Equations \eqref{eq:loss_WLASSO}, \eqref{eq:loss_WSRLASSO} and \eqref{eq:loss_WLADLASSO}, respectively, against the iteration number. 
\begin{figure*}[t!]
	\begin{subfigure}[t]{0.33\linewidth}
		\includegraphics[width = \textwidth]{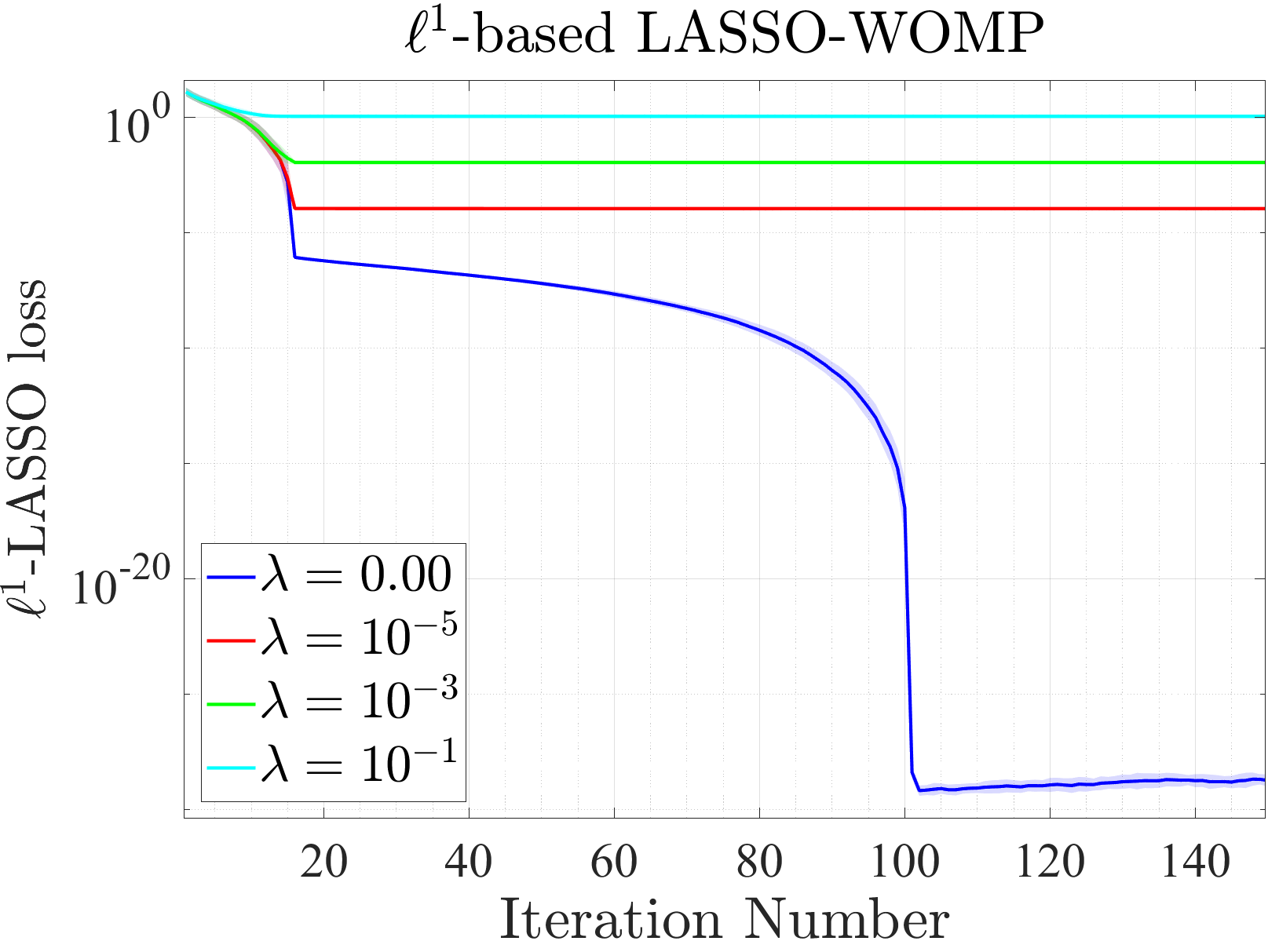}
	\end{subfigure}\hspace{0.5em}%
	\begin{subfigure}[t]{0.33\linewidth}
		\includegraphics[width = \textwidth]{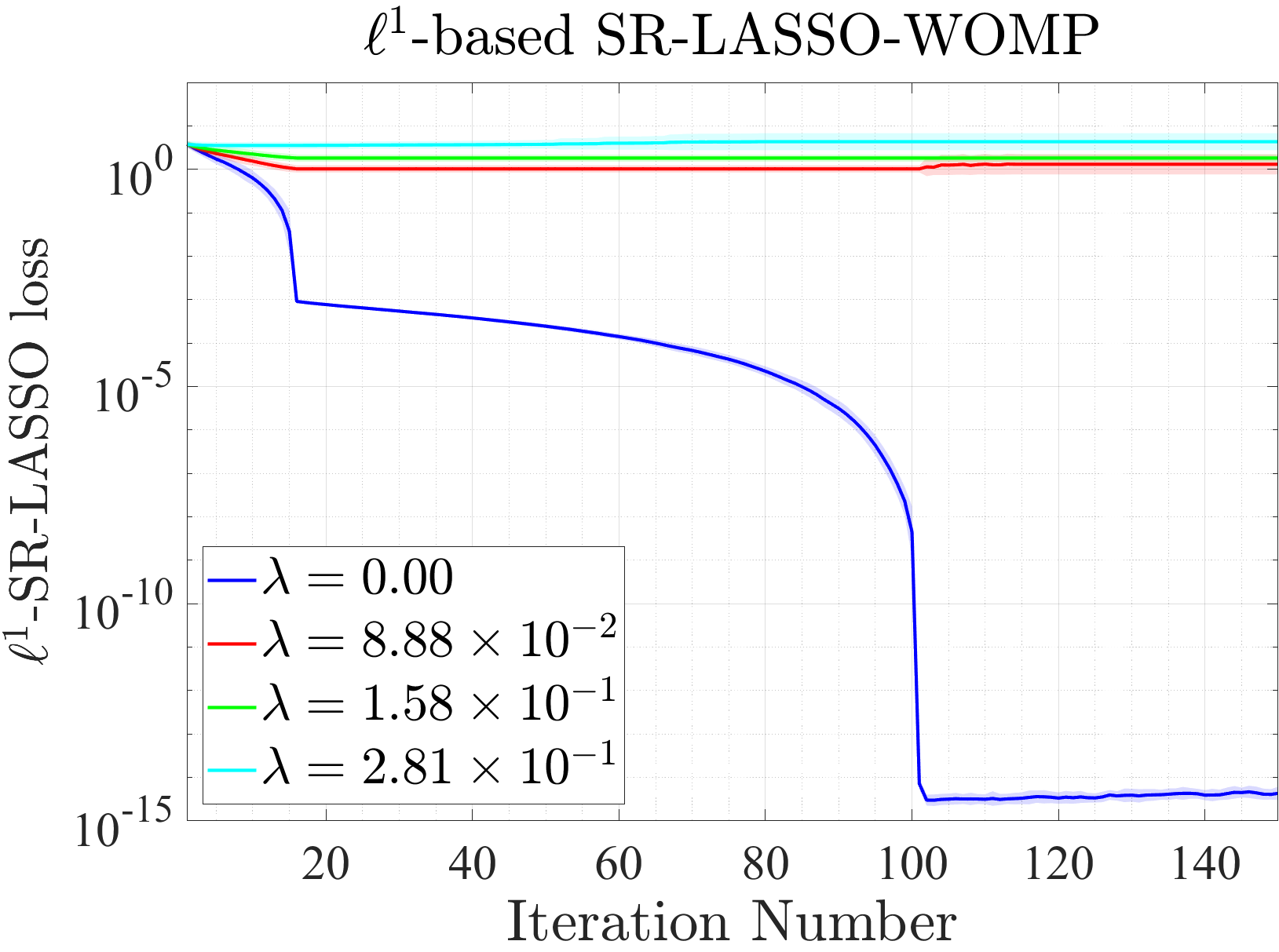}
	\end{subfigure}\hspace{0.5em}%
	\begin{subfigure}[t]{0.33\linewidth}
		\includegraphics[width = \textwidth, height = 4.1cm]{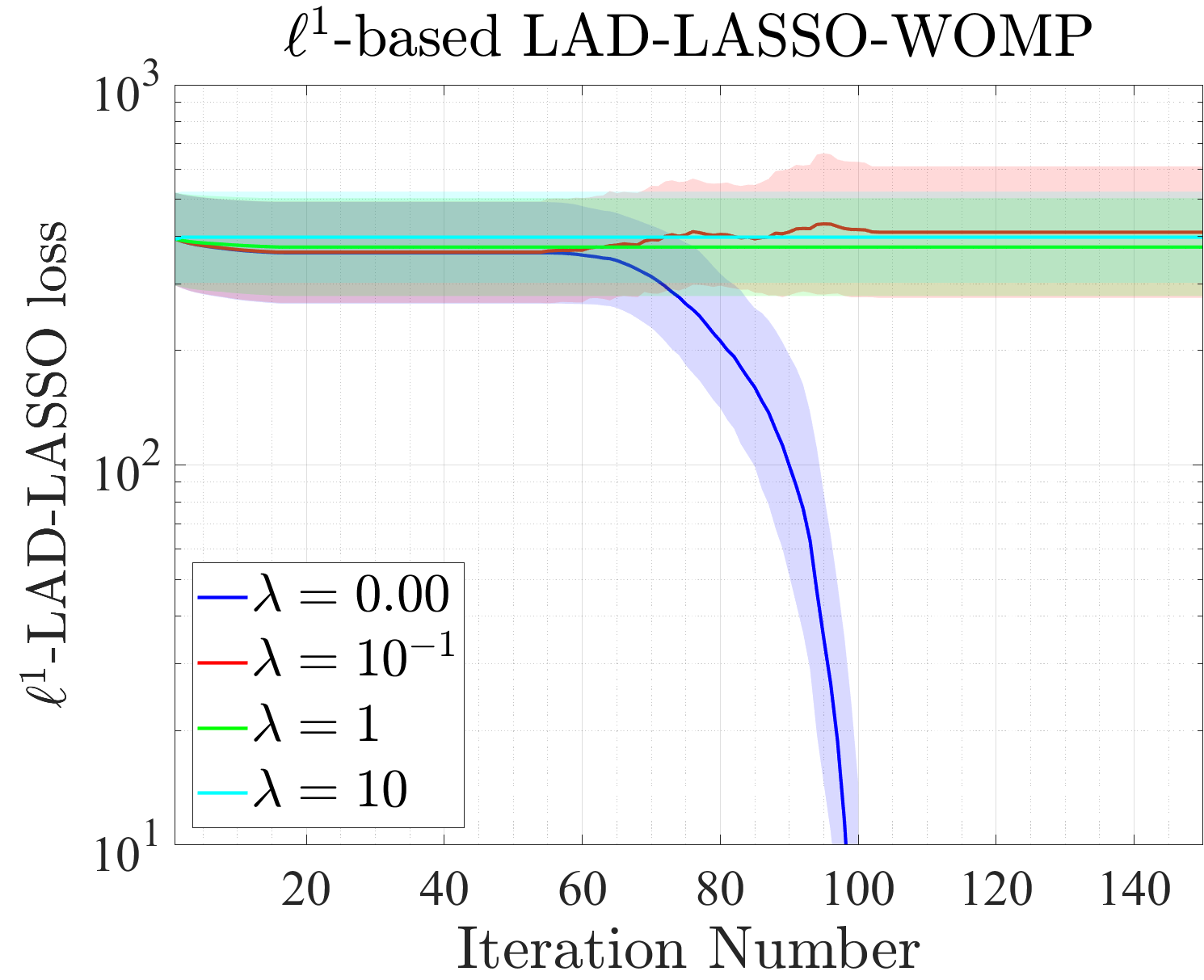}
	\end{subfigure}
	\caption{Loss value of algorithms as a function of the iteration number (same data as Experiment VI, sparse random Gaussian setting). The proposed $\ell^1$-based WOMP formulations are tested for different values of the tuning parameter $\lambda$. Colors are consistent with the ones in Figure \ref{fig:loss_vs_iteration_sparse_gaussian}.}
	\label{fig:loss_vs_iteration_sparse_gaussian}
\end{figure*}
We note again, in accordance with the results of Experiment VI, that for an appropriate choice of the tuning parameter, regularization also stabilizes the loss value after a certain iteration number.

\paragraph{Discussion on stopping criteria.}
The results of the previous experiments suggest that thanks to regularization of the loss functions in $*$-LASSO WOMP algorithms, there are several options for the stopping criterion. 
In case an estimate of the sparsity $s$ of the signal is available, one can run a number $ K > s $ iterations of $*$-LASSO WOMP and still ensure that for an appropriate choice of $ \lambda $ the algorithm does not overfit, in light of Figures \ref{fig:error_vs_iteration_sparse_gaussian} and \ref{fig:error_vs_iteration_func_approx} (note also that running more iterations than $ s $ would not imposes too much runtime overhead in light of Figure \ref{fig:runtime_vs_iteration_sparse_gaussian}).
If the signal's sparsity cannot be estimated in advance, one can monitor the loss value over iterations and halt the algorithm when the loss reaches a steady state, although this might be challenging when the loss reduction is very small. 
Finally, one can halt the algorithm when the greedy selection leads to an index that already belongs to the current support. Based on our experience, the latter seems to be the most reliable option.

\section{Conclusions and future research}
\label{s:conclusions}
Adopting a loss-function perspective, we  proposed new generalizations of OMP for weighted sparse recovery based on $\ell^0$ and $\ell^1$ versions of the weighted LASSO, SR-LASSO and LAD-LASSO. Moreover, we showed that the corresponding greedy index selection rules admit explicit formulas (see Theorems~\ref{thm:WLASSO}, \ref{thm:WSRLASSO}, \ref{thm:WLADLASSO} and Appendix~\ref{app:l0_greedy_selection}). Through numerical illustrations, in \S\ref{sec:numerics} we observed that these algorithms inherit desirable characteristics from  some of the associated loss functions, i.e., independence of the tuning parameter to noise level for SR-LASSO, and robustness to sparse corruptions for LAD-LASSO.
There are many research pathways still to be pursued. We conclude by discussing some of them.

Although we focused on LASSO-type loss functions, many other regularizers and loss functions remain to be investigated, depending on the context and specific application of interest. This includes regularization based on total variation, nuclear norm, $\ell^p-\ell^q$ norms, and group (or joint) sparsity. One might also attempt to accelerate OMP's convergence by sorting indices based on the greedy selection rules derived in this paper and selecting more indices at each iteration. This procedure is employed in algorithms such as CoSaMP \cite{needell2009cosamp}, which can thus be easily generalized to the loss function-based framework. The same holds for a recently proposed sublinear-time variant of CoSaMP \cite{choi2021sparse}. It is also worth noting that a different method to incorporate weights into OMP is based on greedy index selection rules of the form
$$ j^{(k)} \in \arg\max_{j \in [N]}\left|(A^*(y - Ax^{(k)}))_j/w_j\right| $$
(see \cite{bouchot2017multi,li2013weighted,wu2013weighted}).  The comparison, both empirical and theoretical, of rules of this form with the loss function-based criteria considered here deserves further investigation.

Regarding future theoretical developments, Theorems \ref{thm:WLASSO}--\ref{thm:WLADLASSO} demonstrate that the greedy index selection rules considered here achieve maximal loss-function reduction  at each iteration. However, these theorems do not provide recovery guarantees for loss-function-based OMP. The development of rigorous recovery theorems based on the Restricted Isometry Property (RIP) or the coherence is an important open problem. An interesting related question is whether the theoretical recipes for the optimal choice of $\lambda$ available for convex optimization decoders (see \S\ref{subsub:weighted_l1_formulations}) remain valid in the greedy setting.

Finally, although in this paper we focused on high-dimensional function approximation, there are many more applications where  loss-function based OMP could be tested. A particularly promising one is video reconstruction from compressive measurements that arises in contexts such as dynamic MRI, where one can incorporate information on the ambient signal of the previously reconstructed frames through weights in order to improve the reconstruction quality of subsequent frames (see, e.g., \cite{friedlander2011recovering}). Exploring the benefits of loss-function based OMP in this and other applications will be the object of future research work.

\section*{Acknowledgements}

The authors acknowledge the support of the Natural Sciences and Engineering Research Council of Canada (NSERC) through grant RGPIN-2020-06766, the Fonds de recherche du Qu\'ebec Nature et Technologies (FRQNT) through
grant 313276, the Faculty of Arts and Science
of Concordia University and the CRM Applied Math Lab. The authors would like to thank the two anonymous referees for their insightful comments and helpful suggestions.

\section*{Statements and declarations}

On behalf of all authors, the corresponding author states that there is no conflict of interest.

\bibliographystyle{plain}
\bibliography{main_STSPDA_final}

\appendix

\section{Proof of the main results}
\label{app:proofs}

In this section we prove Theorems \ref{thm:WLASSO}, \ref{thm:WSRLASSO} and \ref{thm:WLADLASSO}.

\subsection{Proof of Theorem~\ref{thm:WLASSO}}
	Let $G = G_{\ell^1_w}$, the weighted LASSO loss function defined in \eqref{eq:loss_WLASSO}. The argument is organized into to two cases: $j \notin S$ and $j \in S$. 
	
	\paragraph{Case 1: $ j \notin S$.} In this case, $j \notin \supp(x) \subseteq S$ and, recalling that the columns of $A$ are $\ell^2$-normalized, for any $t \in \mathbb{R}$, we can write
	\begin{flalign*}
		G(x + te_j) &= \|y - A (x + t e_j)\|_2^2 + \lambda\|x + te_j\|_{1, w} \\
		&= \| y - Ax\|_2^2 + |t|^2 -2 \Re(\overline{t} \langle y - Ax, Ae_j\rangle) + \lambda (\|x\|_{1,w} + |t| w_j).
	\end{flalign*}
	Our goal is to minimize the above quantity over $t \in \mathbb{C}$. 
 For any $ t \in \bC $ we let $ t = \rho e^{i\theta}$, with $ \rho \geq 0 $ and $ 0 \leq \theta < 2\pi $. Using the expression above we obtain
	\begin{align*}
		G(x + te_j) 
		& = \|y - Ax\|_2^2 + \rho^2 - 2\Re(\rho e^{-i\theta} (A^*(y - Ax))_j ) + \lambda \|x\|_{1,w} + \lambda \rho w_j\\
		& \geq \|y - Ax\|_2^2 + \rho^2 - 2\rho|(A^*(y - Ax))_j| + \lambda \|x\|_{1,w} + \lambda\rho w_j, \numberthis \label{eq:exp_to_min} 
	\end{align*}
	where the inequality holds as an equality for some $ 0 \leq \theta < 2\pi $. An explicit computation shows that \eqref{eq:exp_to_min} is minimized at 
	$ \rho = |(A^*(y - Ax))_j| - \frac{\lambda}{2}w_j, $
	if $|(A^*(y - Ax))_j| \geq \frac{\lambda}{2}w_j$, and at $\rho = 0$ otherwise.
	Plugging this value into \eqref{eq:exp_to_min} we obtain
	$$ \min_{t \in \bC} G(x + te_j) = G(x) - \max\left\{|(A^*(y - Ax))_j| - \frac{\lambda}{2} w_j, 0\right\}^2, $$
 as desired.
	
	\paragraph{Case 2: $ j \in S $.} Letting $r = y-Ax$, we see that $ (A^*r)_j = 0 $ since, by assumption, $Ax$ is the orthogonal projection of $y$ onto the span of the columns of $A$ indexed by $S$. Thus we can write
	\begin{flalign}
		G(x + te_j) &= \|y - A(x + te_j)\|_2^2 + \lambda \|x + te_j\|_{1,w} \nonumber \\
		&= \|y - A(x + te_j)\|_2^2 + \lambda \|x - x_je_j\|_{1, w} + \lambda|x_j + t|w_j  \nonumber \\
		&= \|r\|_2^2 + \underbrace{|t|^2 + \lambda w_j|x_j + t|}_{=:l(t)} + \lambda \|x - x_je_j\|_{1, w}. \label{eq:l1_LASSO_Case_2_def_l}
	\end{flalign}
	Now, we want to minimize $ l(t) $ over $t \in \mathbb{C}$. Let $\rho = |t|$. By the triangle inequality we have
	$$ l(t) 
	= |t|^2 + \lambda w_j|x_j + t| 
	\geq |t|^2 + \lambda w_j||x_j| - |t|| 
	= \rho^2 + \lambda w_j||x_j| - \rho|
	= :\hat{l}(\rho), $$
	where the first inequality holds as an equality only if $ t = \alpha x_j $ for some $\alpha \in\mathbb{R}$ with $\alpha \leq 0$. Therefore, $\min_{t\in\mathbb{C}} l(t) = \min_{\rho \in  [0,+\infty)} \hat{l}(\rho)$ (since given a minimizer $\hat{\rho}$ of $\hat{l}$, then $\hat{t} = -\hat{\rho} x_j /|x_j|$ is a minimizer of $l$) and it is sufficient to minimize $\hat{l}$. 
	If $ \rho \geq |x_j| $, then 
	$ \hat{l}(\rho) = \rho^2 + \lambda w_j\rho - \lambda w_j|x_j|, $
	which is minimized at $\rho = |x_j|$. Otherwise, if $ 0 \leq \rho \leq |x_j| $, we have
	$ \hat{l}(\rho) = \rho^2 - \lambda w_j\rho + \lambda w_j|x_j|. $
	In this case, a direct computation shows that $ \hat{l}(\rho) $ is minimized at $ \rho = \lambda w_j/2 $ if $\lambda w_j/2 < |x_j|$, or at $\rho = |x_j|$ otherwise.
	Summarizing the above discussion, we have
	$$
	\min_{t \in \bC} l(t) 
	= \min_{\rho \in [0, +\infty)} \hat{l}(\rho) 
	= \min\left\{\hat{l}\left(|x_j|\right), \hat{l}\left(\frac{\lambda w_j}{2}\right)\right\}
	= \min\left\{|x_j|^2, \left(\frac{\lambda w_j}{2}\right)^2 + \lambda w_j\left|\frac{\lambda w_j}{2} - |x_j|\right|\right\}.
	$$
	Therefore, recalling \eqref{eq:l1_LASSO_Case_2_def_l}, we see that
	\begin{flalign*}
		& \min_{t \in \bC} G(x + te_j) 
		= \|r\|_2^2  + \lambda \|x - x_je_j\|_{1, w} + \min\left\{|x_j|^2, \left(\frac{\lambda w_j}{2}\right)^2 + \lambda w_j\left|\frac{\lambda w_j}{2} - |x_j|\right|\right\}\\
		& \qquad = \|r\|_2^2  + \lambda \|x - x_je_j\|_{1, w} + \lambda w_j|x_j| - \lambda w_j|x_j| 
		+ \min\left\{|x_j|^2, \left(\frac{\lambda w_j}{2}\right)^2 + \lambda w_j\left|\frac{\lambda w_j}{2} - |x_j|\right|\right\} \\
		& \qquad = G(x) + \min\left\{|x_j|^2 - \lambda w_j|x_j|, \left(\frac{\lambda w_j}{2}\right)^2 + \lambda w_j\left|\frac{\lambda w_j}{2} - |x_j|\right| - \lambda w_j|x_j|\right\} \\
		& \qquad = G(x) - \max\left\{-|x_j|^2 + \lambda w_j|x_j|, -\left(\frac{\lambda w_j}{2}\right)^2 - \lambda w_j\left||x_j| - \frac{\lambda w_j}{2}\right| + \lambda w_j|x_j|\right\},
	\end{flalign*}
	which concludes the proof.

\hfill $\square$

\subsection{Proof of Theorem~\ref{thm:WSRLASSO}}
Let $G = G^{\mathrm{SR}}_{\ell^1_w}$ be the weighted SR-LASSO loss function defined in \eqref{eq:loss_WSRLASSO} and recall that
	$ r = y - Ax $. The proof strategy is analogous to that of Theorem~\ref{thm:WLASSO} and is organized into two cases.
	
	\paragraph{Case 1: $ j \notin S$.}
	Letting $ t = \rho e^{i\theta} \in \bC $, we have
	\begin{align*}
		G(x + t e_j) &= \|r - ta_j\|_2 + \lambda \|x\|_{1,w} + \lambda \rho w_j\\
		&= \sqrt{\|r\|_2^2 + \rho^2 - 2\rho\Re( e^{-i\theta}\langle r, a_j\rangle)} + \lambda \|x\|_{1,w} + \lambda \rho w_j\\
		&\geq \underbrace{\sqrt{\|r\|_2^2 +\rho^2 - 2\rho|\langle r, a_j\rangle|} + \lambda \|x\|_{1,w} + \lambda \rho w_j}_{=:h(\rho)},
		\numberthis \label{eq:exp_to_min_SRLASSO} 
	\end{align*}
	where the last inequality holds as an equality for some $0 \leq \theta < 2\pi$.
	In order to minimize the right-hand side, we compute
	$$ h'(\rho) = \frac{\rho - |\langle r,a_j \rangle|}{\sqrt{\|r\|_2^2 + \rho^2 - 2\rho|\langle r,a_j \rangle|}} + \lambda w_j. $$
    If $w_j \lambda \geq 1$, the equation $h'(\rho) =0$ does not have any solution over $[0, +\infty)$ (since $|\langle r, a_j\rangle| \leq \|r\|_2$ due to the Cauchy-Schwarz inequality). Hence, in that case $h$ is minimized at $\rho =0$. On the other hand, if $ w_j \lambda < 1 $, the equation $ h'(\rho) = 0 $ has the unique solution
	$$ \widetilde\rho = |\langle r,a_j\rangle| - \sqrt{\frac{(\lambda w_j)^2 (\|r\|_2^2 - |\langle r,a_j \rangle|^2)}{1 - (\lambda w_j)^2}}. $$
	Therefore, the minimizer of $ h $ on $[0,+\infty)$ is either $\widetilde{\rho}$ or $0$. Plugging $ \rho = \widetilde \rho $ and $\rho=0$ into \eqref{eq:exp_to_min_SRLASSO}, we obtain
	\begin{flalign*}
		\min_{t \in \bC} G(x + te_j) &= \min\left\{\sqrt{(1 - (\lambda w_j)^2)(\|r\|_2^2 - |\langle r, a_j \rangle|^2)} + \lambda w_j |\langle r,a_j \rangle| + \lambda \|x\|_{1,w}, \|r\|_2 + \lambda \|x\|_{1, w}\right\} \\
		&= \min\left\{\sqrt{(1 - (\lambda w_j)^2)(\|r\|_2^2 - |\langle r, a_j \rangle|^2)} + \lambda w_j |\langle r,a_j \rangle| - \|r\|_2, 0\right\} + \|r\|_2 + \lambda \|x\|_{1, w} \\
		&= G(x) -\max\left\{\|r\|_2 - \lambda w_j |\langle r,a_j \rangle| - \sqrt{(1 - (\lambda w_j)^2)(\|r\|_2^2 - |\langle r, a_j \rangle|^2)}, 0\right\} ,
	\end{flalign*}
	which concludes the case $ j \notin S $.\\
	
	\paragraph{Case 2: $ j \in S$.} In this situation, $ |(A^*r)_j| = 0 $ since $x$ solves a least-squares problem. Thus we can write
	\begin{flalign*}
		G(x + te_j) &= \|y - A(x + te_j)\|_2 + \lambda \|x + te_j\|_{1,w} \\
		&= \sqrt{\|y - A(x + te_j)\|_2^2} + \lambda \sum_{i \in S\backslash \{j\}}w_i|x_i| + \lambda|x_j + t|w_j \\
		&= \underbrace{\sqrt{|t|^2 + \|r\|_2^2} + \lambda w_j|x_j + t|}_{=:l(t)} + \lambda \|x - x_je_j\|_{1, w}.
	\end{flalign*}
	We continue by minimizing $ l $ over $\mathbb{C}$. Letting $t = \rho e^{i\theta}$, by the triangle inequality we have
	$$ l(t) 
	= \sqrt{|t|^2 + \|r\|_2^2} + \lambda w_j|x_j + t| 
	\geq 
	\sqrt{\rho^2 + \|r\|_2^2} + \lambda w_j||x_j| - \rho| 
	=: \hat{l}(\rho), 
	$$
	where the inequality holds as an equality  when $ t = \alpha x_j $ for some $ \alpha \in \bR $ with $\alpha \leq 0$.
	If $ \rho \geq |x_j| $ we have
	$ \hat{l}(\rho) = \sqrt{\rho^2 + \|r\|_2^2} + \lambda w_j\rho - \lambda w_j|x_j|, $
	which is minimized at $ \rho = |x_j|$. Otherwise, if $0\leq \rho < |x_j| $, we have
	$ \hat{l}(\rho) = \sqrt{\rho^2 + \|r\|_2^2} - \lambda w_j\rho + \lambda w_j|x_j|, $
	and a direct computation shows that, when $\lambda w_j \geq 1$,
 $$ \hat{l}'(\rho) = \frac{\rho}{\sqrt{\rho^2 + \|r\|_2^2}} - \lambda w_j \leq 1 - 1  = 0, 
    $$
     and the equation $ \hat{l}'(\rho) = 0 $ is either solved for all $0 \leq \rho \leq |x_j|$ (if $r=0$ and $\lambda w_j = 1$) or for no values of $0 \leq \rho \leq |x_j|$ (otherwise). Hence, when $\lambda w_j \geq 1$,
    $\rho = |x_j|$ is a minimizer of $\hat{l}$ over $[0,|x_j|]$. Conversely, when $\lambda w_j < 1$, the equation $ \hat{l}'(\rho) = 0 $ is uniquely solved by
	$$ \rho = \frac{\lambda w_j \|r\|_2}{\sqrt{1 - (\lambda w_j)^2}}.
    $$
    Hence, in summary, $\hat{l}$ is minimized at 
    $$
    \widetilde{\rho} 
    := \begin{cases}
    |x_j| & \lambda w_j \geq 1 \\
    \min\left\{|x_j|, \frac{\lambda w_j \|r\|_2}{\sqrt{1 - (\lambda w_j)^2}}\right\} & \lambda w_j < 1 
    \end{cases}.
    $$
    This leads to 
    \begin{align*}
        \min_{t \in \bC}G(x + te_j) &= \min_{t \in \bC}l(t) + \lambda \|x - x_je_j\|_{1, w} \\
        & = \hat{l}(\widetilde{\rho}) + \|r\|_2 + \lambda \|x\|_{1, w}  - \|r\|_2 - \lambda w_j|x_j|\\
        & = G(x) - \left(\|r\|_2 + \lambda w_j|x_j| - \hat{l}(\widetilde{\rho})\right),
    \end{align*}
    which concludes the proof.	 \hfill $\square$

\subsection{Proof of Theorem~\ref{thm:WLADLASSO}}
In order to prove Theorem~\ref{thm:WLADLASSO}, we need the minimum for the LAD problem in the one-dimensional setting. To this purpose, we present the following lemma based on the arguments from \cite[Lemmas~1 \& 2]{bloomfield1983least}, that we include here for the sake of completeness.
\begin{lemma}[Explicit solution of univariate LAD]
    \label{lemma:1D-LAD-solution}
    Let $ y, a \in \bR^N$ and $L:\mathbb{R} \to [0,+\infty)$, defined by
    $$ L(t):=\|y - ta\|_1 = \sum_{i = 1}^N |y_i - ta_i|, \quad \forall t \in \mathbb{R}. $$
    Then a minimizer of $ L $ over $\mathbb{R}$ is 
    $$ t^* = \left(\frac{y_{\hat{i}}}{a_{\hat{i}}}\right), $$
    with $ \hat{i} = \tau(\hat{k}) $ where
	$$ \hat{k} = \min \left\{k \in [\|a\|_0]: \sum_{i = 1}^{k} \frac{|a_{\tau(i)}|}{\|a\|_1} \geq \frac{1}{2} \right\}, $$
	and where $ \tau:[\|a\|_0] \to \supp(a) $ is a bijective map defining a nondeacreasing rearrangement of the vector
    $$ \left(\frac{y_i}{a_i}\right)_{i \in \supp(a)} \in \bR^{\|a\|_0}, $$
    i.e., such that
    $$ \frac{y_{\tau(1)}}{a_{\tau(1)}} \leq \frac{y_{\tau(2)}}{a_{\tau(2)}} \leq \dots \leq \frac{y_{\tau(\|a\|_0)}}{a_{\tau(\|a\|_0)}}. $$
\end{lemma}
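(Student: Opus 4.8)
The plan is to exploit the fact that $L$ is a sum of absolute values, hence convex and piecewise affine in $t$, so that it suffices to locate a point where the one-sided derivatives straddle zero. First I would discard the indices outside $\supp(a)$: for $i \notin \supp(a)$ the term $|y_i - t a_i| = |y_i|$ is constant in $t$ and does not affect the minimization. For $i \in \supp(a)$ I would factor $|y_i - t a_i| = |a_i|\,|t - y_i/a_i|$, rewriting
$$ L(t) = \sum_{i \in \supp(a)} |a_i|\,\Bigl|t - \tfrac{y_i}{a_i}\Bigr| + \sum_{i \notin \supp(a)} |y_i|, $$
which exhibits $L$, up to an additive constant, as a weighted sum of distances from $t$ to the breakpoints $c_i := y_i/a_i$ with weights $|a_i|$. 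This is exactly a weighted-median problem, and the nondecreasing rearrangement $\tau$ orders the breakpoints as $c_{\tau(1)} \le \dots \le c_{\tau(\|a\|_0)}$.

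Next I would compute the slope of $L$ on each affine piece. Differentiating term by term, for any $t$ not equal to a breakpoint one finds
$$ L'(t) = \sum_{i \in \supp(a)} |a_i|\,\sign\bigl(t - c_i\bigr), $$
so that on the open interval between the $(k-1)$th and $k$th distinct ordered breakpoints the slope equals $\sum_{i=1}^{k-1}|a_{\tau(i)}| - \sum_{i=k}^{\|a\|_0}|a_{\tau(i)}| = 2\sum_{i=1}^{k-1}|a_{\tau(i)}| - \|a\|_1$, which is nondecreasing in $k$. By convexity, a global minimizer is any point at which the left derivative is $\le 0$ and the right derivative is $\ge 0$.

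The remaining step is to verify that $t^* = y_{\hat{i}}/a_{\hat{i}} = c_{\tau(\hat{k})}$ satisfies this straddling condition, with $\hat{k}$ the smallest index whose cumulative weight reaches half the total, i.e. $\sum_{i=1}^{\hat{k}}|a_{\tau(i)}| \ge \tfrac12\|a\|_1$ while $\sum_{i=1}^{\hat{k}-1}|a_{\tau(i)}| < \tfrac12\|a\|_1$. Letting $[k_{\min},k_{\max}] \ni \hat{k}$ denote the contiguous block of indices $i$ with $c_{\tau(i)} = t^*$, the one-sided derivatives at $t^*$ read $L'_+(t^*) = 2\sum_{i=1}^{k_{\max}}|a_{\tau(i)}| - \|a\|_1$ and $L'_-(t^*) = 2\sum_{i=1}^{k_{\min}-1}|a_{\tau(i)}| - \|a\|_1$. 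Since cumulative weights are monotone and $k_{\min} \le \hat{k} \le k_{\max}$, the defining inequalities for $\hat{k}$ give $L'_+(t^*) \ge 2\sum_{i=1}^{\hat{k}}|a_{\tau(i)}| - \|a\|_1 \ge 0$ and $L'_-(t^*) \le 2\sum_{i=1}^{\hat{k}-1}|a_{\tau(i)}| - \|a\|_1 < 0$, as required, whence $t^*$ is a global minimizer.

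I expect the main obstacle to be purely bookkeeping: correctly handling ties among the breakpoints $c_i$, so that the one-sided derivatives are read off the endpoints $k_{\min}$ and $k_{\max}$ of the tied block rather than off $\hat{k}$ itself, together with the degenerate cases such as terms with $a_i = 0$ and equality $\sum_{i=1}^{\hat{k}}|a_{\tau(i)}| = \tfrac12\|a\|_1$. A clean way to sidestep the tie issues entirely would be to replace the derivative computation with the subdifferential condition $0 \in \partial L(t^*)$, using $\partial|t - c_i| = [-1,1]$ at $t = c_i$; this absorbs all tied terms into a single interval and reduces the verification to checking that $0$ lies in the resulting sum of intervals.
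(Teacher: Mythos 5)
Your proof is correct and follows essentially the same route as the paper's: both rewrite $L$, up to the constant $\sum_{i \notin \supp(a)}|y_i|$, as a weighted sum of distances $\sum_{i\in\supp(a)}|a_i|\,|t - y_i/a_i|$, exploit piecewise linearity, and identify the minimizer as the weighted median where the slope changes sign. If anything, your explicit handling of tied breakpoints via the block $[k_{\min},k_{\max}]$ (or the subdifferential alternative) is more careful than the paper's argument, which dismisses ties by setting $I_k=\emptyset$ and relies on the monotonicity of the interval slopes $d_k$.
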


\begin{proof}
We define $ t_k := y_{\tau(k)}/a_{\tau(k)} $ and the open intervals $ I_0 := (-\infty, t_1),\ I_k := (t_k, t_{k + 1})$ for $k \in[\|a\|_0 - 1]$, and $I_{\|a\|_0} := (t_{\|a\|_0}, +\infty) $ (if $t_k=t_{k+1}$, we simply set $I_k = \emptyset$). 

Now, let $ t \in I_k^j $ for some $ k \in [\|\widetilde{a}_j\|_0 ]$. Then,
\begin{align*}
L(t) & = \sum_{i\notin \supp(a)}\left|y_i \right| + \sum_{i = 1}^{\|a\|_0}|a_{\tau(i)}||t_i - t| \\
& = \sum_{i\notin \supp(a)}\left|y_i \right| + \sum_{i = 1}^{k}|a_{\tau(i)}|(t - t_i) - \sum_{i = k + 1}^{\|a\|_0}|a_{\tau(i)}|(t - t_i). 
\end{align*}
Differentiating with respect to $ t $, we obtain
$$ 
L'(t) 
= \sum_{i = 1}^{k}|a_{\tau(i)}| - \sum_{i = k + 1}^{\|a\|_0}|a_{\tau(i)}| =: d_k, \quad \forall t \in I_k.
$$
Hence we see that, for any $ k \in [\|a\|_0-1]$, 
\begin{flalign*}
	d_{k + 1} &= \sum_{i = 1}^{k + 1}|a_{\tau(i)}| - \sum_{i = k + 2}^{\|a\|_0}|a_{\tau(i)}| 
	= \sum_{i = 1}^{k}|a_{\tau(i)}| - \sum_{i = k + 1}^{\|a\|_0}|a_{\tau(i)}| + 2|a_{\tau(k + 1)}| 
	= d_k + 2|a_{\tau(k + 1)}| 
		> d_k,
\end{flalign*}
implying that $ L'(t) $ is increasing with respect to $ t $ (wherever it is well defined). In summary, $ L $ is a positive and piecewise linear function with increasing derivative. Hence,
$$ 
\min_{t \in \bR} h(t) = h\left(t_{\hat{k}}\right), 
$$
where
\begin{flalign}
	\hat{k} &:= \min\left\{k \in [\|a\|_0]: d_k \geq 0
  \right\} \\
	&= \min\left\{k \in [\|a\|_0]: \sum_{i = 1}^{k}|a_{\tau(i)}| \geq \sum_{i = k + 1}^{\|a\|_0}|a_{\tau(i)}|\right\} \\
	& = \min\left\{k \in [\|a\|_0]: 2\sum_{i = 1}^{k}|a_{\tau(i)}| \geq \sum_{i = 1}^{\|a\|_0}|a_{\tau(i)}|\right\} \\
	&= \min\left\{k \in [\|a\|_0]: \sum_{i = 1}^{k}\frac{|a_{\tau(i)}|}{\|a\|_1} \geq \frac{1}{2}\right\}. \numberthis \label{w_median_l1}
\end{flalign}
(Note that $\|a\|_1\neq 0$ since we are assuming $a$ to be nonzero.) Finally, we let $\hat{i} := \tau(\hat{k})$, which concludes the proof.
\end{proof}
We are now in a position to prove Theorem~\ref{thm:WLADLASSO}.
\begin{proof}[Proof of Theorem~\ref{thm:WLADLASSO}]
We let $G = G^{\mathrm{LAD}}_{\ell_w^1}$, the weighted LAD-LASSO objective defined in \eqref{eq:loss_WLADLASSO}. The proof structure is analogous to that of Theorems~\ref{thm:WLASSO} and \ref{thm:WSRLASSO}.

\paragraph{Case 1: $j \notin S$.} We start by observing that
\begin{align*}
		G(x + te_j) &= \|y - A(x + te_j)\|_1 + \lambda\|x + te_j\|_{1,w} 
		= \underbrace{\|y - A(x + te_j)\|_1 + \lambda w_j|t|}_{=:h(t)} + \lambda\|x\|_{1,w}.
\end{align*}
We continue by minimizing $ h(t) $ over $t \in \mathbb{R} $. 
Let $ A_{ij},\ i \in [m],\ j \in [N] $ be the entries of the matrix $ A $ and recall that $ \widetilde{A} \in \bR^{(m + 1) \times N} $ and $ \widetilde{r}^{\, j} \in \bR^{m + 1} $ are augmented versions of $A$ and $r$, defined by \eqref{eq:def_Atilde} and \eqref{eq:def_rjtilde}, respectively. Moreover, let $\widetilde{a}_j \in \mathbb{R}^{m+1}$ be the $j$th column of $\widetilde{A}$. 
Then we see that
\begin{align*}
	h(t) 
	= \sum_{i = 1}^{m}\left|r_i - tA_{ij}\right| + \lambda w_j|t| 
 = \sum_{i = 1}^{m + 1}\left|\widetilde{r}_i^{\, j} - t\widetilde{A}_{ij}\right|.
\end{align*}
Thanks to Lemma~\ref{lemma:1D-LAD-solution},
$$ 
\min_{t \in \bR} h(t) = h\left(\frac{\widetilde{r}_{\hat{i}(j)}^j}{\widetilde{A}_{\hat{i}(j),j}}\right), 
$$
with $ \hat{i}(j) = \tau_j(\hat{k}(j)) $, where $\hat{k}(j)$ is defined as in \eqref{w_median_l1_A_tilde}. Hence, we compute
\begin{flalign*}
	\min_{t \in \bR}G(x + te_j) &= \min_{t \in \bR}h(t) + \lambda\|x\|_{1, w} \\
	&= \left\|\widetilde{r}^{\, j} - \frac{\widetilde{r}^{\, j}_{\hat{i}(j)}}{\widetilde{A}_{\hat{i}(j), j}}\widetilde{a}_j\right\|_1 + \lambda\|x\|_{1, w} + \|r\|_1 - \|r\|_1 \\
	&= G(x) - \left( \|r\|_1 - \left\|\widetilde{r}^{\, j} - \frac{\widetilde{r}^{\, j}_{\hat{i}(j)}}{\widetilde{A}_{\hat{i}(j), j}}\widetilde{a}_j\right\|_1 \right),
\end{flalign*}
which concludes the case $ j \notin S $. 
	
\paragraph{Case 2: $j \in S$.} The proof is similar to Case 1. We start by writing
\begin{align*}
	G(x + te_j) &= \|y - A(x + te_j)\|_1 + \lambda\|x + te_j\|_{1,w} 
	= \underbrace{\|r - ta_j\|_1 + \lambda w_j|x_j + t|}_{=:l(t)} + \lambda\|x - x_je_j\|_{1,w},
\end{align*}
where $ a_j $ denotes the $ j $th column of $ A $. We want to minimize $ l(t) $ over $t \in \mathbb{R}$. Hence, we compute
\begin{flalign*}
	l(t) = \sum_{i = 1}^{m}|r_i - tA_{ij}| + \lambda w_j|x_j + t| 
     = \sum_{i = 1}^{m+1}\left|\widetilde{r}^{\, j}_i - t\widetilde{A}_{ij}\right|
\end{flalign*}
and, thanks to Lemma~\ref{lemma:1D-LAD-solution}, $ l(t) $ is minimized at $ t = \widetilde{r}^{\, j}_{\hat{i}(j)}/\widetilde{A}_{\hat{i}(j), j}$, where $ \hat{i}(j) = \tau(\hat{k}(j)) $ and $ \hat{k}(j) $ is defined as in \eqref{w_median_l1_A_tilde}.
Therefore, 
\begin{align*}
	\min_{t \in \bR} G(x + te_j) 
 &= \min_{t \in \bR} l(t) + \lambda\|x - x_je_j\|_{1, w} \\
	&= \left\|\widetilde{r}^{\, j} - \frac{\widetilde{r}^{\, j}_{\hat{i}(j)}}{\widetilde{A}_{\hat{i}(j), j}}\widetilde{a}_j\right\|_1 + \lambda\|x - x_je_j\|_{1, w}\\
 &= \left\|\widetilde{r}^{\, j} - \frac{\widetilde{r}^{\, j}_{\hat{i}(j)}}{\widetilde{A}_{\hat{i}(j), j}}\widetilde{a}_j\right\|_1 + \lambda\|x - x_je_j\|_{1, w} + \lambda w_j|x_j| - \lambda w_j|x_j| + \|r\|_1 - \|r\|_1 \\
	&= G(x) -\left(   \|r\|_1 +\lambda w_j|x_j|  - \left\|\widetilde{r}^{\, j} - \frac{\widetilde{r}^{\, j}_{\hat{i}(j)}}{\widetilde{A}_{\hat{i}(j), j}}\widetilde{a}_j\right\|_1 \right), 
\end{align*}
as desired.
\end{proof}

\section{Greedy selection rules for $\ell^0_w$-based loss functions}
\label{app:l0_greedy_selection}
In this appendix we show how to derive greedy selection rules for $\ell^0_w$-regularized loss functions. Specifically, we derive greedy selection rules for  $\ell^0_w$-based variants of the SR-LASSO (Appendix~\ref{app:l0-SR-LASSO}) and LAD-LASSO (Appendix~\ref{app:l0-LAD-LASSO}), extending the $\ell^0_w$-based LASSO setting considered in \cite{adcock2020sparse}. The corresponding weighted OMP algorithms are numerically tested in \S\ref{sec:numerics}, Experiments I and II.

\subsection{$\ell^0_w$-based SR-LASSO}
\label{app:l0-SR-LASSO}
We start with the $\ell^0_w$-based SR-LASSO. Recall that the $\ell^0_w$-norm $\|\cdot\|_{0,w}$ is defined as in \eqref{eq:def_l0w_l1w_norms}.
\begin{theorem}[Greedy selection rule for $\ell^0_w$-based SR-LASSO]
Let $ \lambda \geq 0 $, $ S \subseteq [N] $, $ A \in \bC^{m \times N} $ with $ \ell^2 $-normalized columns, and $ x \in \bC^N $ satisfying
	\begin{equation}
	x \in \arg\min_{z \in \bC^N} \|y - Az\|_2 \quad \mathit{s.t.} \quad \mathrm{supp}(z) \subseteq S.
	\end{equation}
	Consider the $\ell^0_w$-based SR-LASSO loss function
	\begin{equation}
	\label{l0_srlasso_functional}
	G_{\ell^0_w}^\mathrm{SR}(z) := \|y - Az\|_2 + \lambda\|z\|_{0,w}, \quad \forall z \in \mathbb{C}^N.
	\end{equation}
	Then, for every $ j \in [N] $, 
	$$ 
    \min_{t \in \mathbb{C}} G_{\ell^0_w}^\mathrm{SR}(x + te_j) = G_{\ell^0_w}^\mathrm{SR}(x) - \Delta^{\textnormal{SR}}_{\ell^0_w}(x, S, j), 
    $$
	where 
	$$
    \Delta^{\textnormal{SR}}_{\ell^0_w}(x, S, j) = \begin{cases}
	\max\left\{\|r\|_2 -\sqrt{\|r\|_2^2 - |(A^*r)_j|^2}  - \lambda w_j^2, 0\right\} & j \notin S \\ 
	\max\left\{\|r\|_2 + \lambda w_j^2 - \sqrt{\|r\|_2^2 + |x_j|^2}, 0\right\} & j \in S, \; x_j \neq 0  \\
	0 & j \in S, \; x_j = 0  
	\end{cases}
	$$
 and $r = y-Ax$.
\end{theorem}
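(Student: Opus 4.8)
The plan is to follow the same case analysis used in the proof of Theorem~\ref{thm:WSRLASSO}, exploiting the fact that the $\ell^0_w$ penalty $\lambda\|z\|_{0,w}$ is piecewise constant: as $t$ ranges over $\mathbb{C}$, the quantity $\|x+te_j\|_{0,w}$ takes only two values, depending solely on whether the $j$th coordinate $x_j+t$ is zero or not. Consequently, within each ``regime'' (coordinate zero vs.\ nonzero) the regularization term is constant, so minimizing $G^{\mathrm{SR}}_{\ell^0_w}(x+te_j)$ reduces to minimizing the data-fidelity term $\|r-ta_j\|_2$ over the corresponding subset of $\mathbb{C}$, and the global minimum is obtained by comparing the two regimes. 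I would record at the outset the two facts that drive every computation: $\|a_j\|_2=1$ (normalized columns), and $(A^*r)_j=0$ whenever $j\in S$, which follows from the normal equations for the least-squares optimality of $x$ on $S$.

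For $j\notin S$ we have $x_j=0$, so the regimes are $t=0$ (penalty $\lambda\|x\|_{0,w}$) and $t\neq 0$ (penalty $\lambda\|x\|_{0,w}+\lambda w_j^2$). Since $\|a_j\|_2=1$, the unconstrained minimizer of $\|r-ta_j\|_2$ is $t^\ast=(A^*r)_j$, with minimal value $\sqrt{\|r\|_2^2-|(A^*r)_j|^2}$, whence
$$
\min_{t\in\mathbb{C}}G^{\mathrm{SR}}_{\ell^0_w}(x+te_j)
=\lambda\|x\|_{0,w}+\min\Big\{\|r\|_2,\ \sqrt{\|r\|_2^2-|(A^*r)_j|^2}+\lambda w_j^2\Big\}.
$$
Subtracting this from $G^{\mathrm{SR}}_{\ell^0_w}(x)=\|r\|_2+\lambda\|x\|_{0,w}$ and rewriting $\|r\|_2-\min\{\cdot,\cdot\}$ as a $\max$ produces the stated first branch.

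For $j\in S$ with $x_j\neq 0$, the relevant regimes are $t=-x_j$ (the coordinate leaves the support, penalty $\lambda\|x\|_{0,w}-\lambda w_j^2$) and $t\neq -x_j$ (penalty $\lambda\|x\|_{0,w}$). Because $(A^*r)_j=0$, the data-fidelity $\|r-ta_j\|_2$ is minimized at $t=0$, which lies in the second regime since $x_j\neq0$, giving value $\|r\|_2$; while at $t=-x_j$ the orthogonality $r\perp a_j$ yields $\|r+x_ja_j\|_2=\sqrt{\|r\|_2^2+|x_j|^2}$ by the Pythagorean identity. Comparing the two regimes and subtracting from $G^{\mathrm{SR}}_{\ell^0_w}(x)$ gives the second branch. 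The case $j\in S$, $x_j=0$ is immediate: moving to $t\neq0$ strictly increases both terms (again $(A^*r)_j=0$), so $t=0$ is optimal and $\Delta=0$.

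The only delicate point is the bookkeeping around the discontinuity of $\|\cdot\|_{0,w}$, namely correctly identifying in each case whether the unconstrained least-squares minimizer $t^\ast$ lands in the ``zero'' or ``nonzero'' regime, and whether the competing boundary point ($t=0$ or $t=-x_j$) beats it once the $\lambda w_j^2$ jump is accounted for. Unlike the $\ell^1_w$ proofs, no calculus or triangle-inequality phase-alignment argument is required here, since within each regime the minimization of $\|r-ta_j\|_2$ is an elementary orthogonal-projection computation; the $\max\{\cdot,0\}$ structure of the final formulas is precisely the algebraic encoding of the two-regime comparison.
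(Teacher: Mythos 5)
Your proof is correct and follows essentially the same route as the paper's: a case analysis on $j \notin S$ versus $j \in S$, exploiting that the $\ell^0_w$ penalty takes only two values along the line $x + te_j$, minimizing the residual norm within each regime (using $(A^*r)_j = 0$ for $j \in S$), and encoding the two-regime comparison as a $\max\{\cdot,0\}$. The only cosmetic difference is that you minimize $\|r - ta_j\|_2$ over $t \in \mathbb{C}$ via the orthogonal-projection formula $t^\ast = (A^*r)_j$ directly, whereas the paper writes $t = \rho e^{i\theta}$ and uses a phase-alignment inequality before minimizing over $\rho$; the computations are identical in substance.
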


 \begin{proof}
Let $ G = G_{\ell^0_w}^\mathrm{SR} $, the weighted SR-LASSO objective defined in \eqref{l0_srlasso_functional}, and recall that $ r =y - Ax $. 

\paragraph{Case 1: $ j \notin S $.} In this case, we can write
		\begin{align*}
		G(x + te_j) 
		&= \|y - A(x + te_j)\|_2 + \lambda\|x + te_j\|_{0,w} + \|r\|_2 - \|r\|_2 \\
		&= G(x) + \underbrace{\|y - A(x + te_j)\|_2 - \|r\|_2 + \lambda w^2_j \mathbbm{1}_{\{t \neq 0\}}}_{=:h(t)},
		\end{align*}
		where $\mathbbm{1}_E$ denotes the indicator function of the event $E$. Recalling that the columns of $ A $ have unit $ \ell^2 $ norm, we have
		\[h(t) = \begin{cases} 
		0 & t = 0 \\
		\sqrt{|t|^2 + \|r\|_2^2 - 2\mathrm{Re}(\bar{t}(A^*r)_j)} - \|r\|_2 + \lambda w^2_j & t \in \bC\backslash \{0\}.
		\end{cases}
		\]
		Now, letting $ t = \rho e^{i\theta} $ with $ \rho \geq 0 $ and $ \theta \in [0, 2\pi) $ we have
		\begin{align*}
		\sqrt{\rho^2 + \|r\|_2^2 - 2\mathrm{Re}(\rho e^{-i\theta}(A^*r)_j)} - \|r\|_2 + \lambda w^2_j 
		\geq \sqrt{\|r\|_2^2 + \rho^2 - 2\rho|(A^*r)_j|} - \|r\|_2 + \lambda w^2_j,
		\end{align*}
		where the inequality holds as an equality for some $\theta$ and the right-hand side is minimized at $ \rho = |(A^*r)_j|$. Therefore, in summary,
        $$ \min_{t \in \bC} h(t) 
        = \min\left\{- \|r\|_2 + \sqrt{\|r\|_2^2 + |(A^*r)_j|^2} + \lambda w_j^2, 0\right\},
        $$
		which concludes the case $ j \notin S $.
		
		\paragraph{Case 2: $ j \in S $.} In this situation $ |(A^*r)_j| = 0 $. So we have
		\begin{flalign*}
			G(x + te_j) &= \|y - A(x + te_j)\|_2 + \lambda \|x + te_j\|_{0, w} \\
			&= \underbrace{\sqrt{|t|^2 + \|r\|_2^2} + \lambda w_j^2\mathbbm{1}_{\{t \neq -x_j\}}}_{=:l(t)} + \lambda \|x - x_je_j\|_{0, w}.
		\end{flalign*}
		We proceed by minimizing $ l(t) $. When $ t = -x_j $  we simply have $ l(t) = \sqrt{|x_j|^2 + \|r\|_2^2} $. Otherwise when $ t \neq -x_j $, the term $ \sqrt{\|r\|_2^2 + |t|^2} + \lambda w_j^2 $ is minimized at $ t = 0 $. As a result, 
		$$ \min_{t \in \bC} l(t) = \min \left\{\sqrt{|x_j|^2 + \|r\|_2^2}, \|r\|_2 + \lambda w_j^2\right\}. $$
		Therefore, we see that
		\begin{flalign*}
			\min_{t \in \bC}G(x + te_j) &= \min_{t \in \bC} l(t) + \lambda \|x - x_je_j\|_{0, w} \\
			&= \min \left\{\sqrt{|x_j|^2 + \|r\|_2^2}, \|r\|_2 + \lambda w_j^2\right\} + \lambda \|x - x_je_j\|_{0, w} + \|r\|_2 - \|r\|_2 \\
			&= \|r\|_2 + \|x\|_{0, w} - \lambda w_j^2\mathbbm{1}_{\{x_j \neq 0\}} + \min\left\{\sqrt{|x_j|^2 + \|r\|_2^2}, \|r\|_2 + \lambda w_j^2\right\} - \|r\|_2 \\
			&= G(x) - \lambda w_j^2\mathbbm{1}_{\{x_j \neq 0\}} + \min\left\{\sqrt{|x_j|^2 + \|r\|_2^2}, \|r\|_2 + \lambda w_j^2\right\} - \|r\|_2.
		\end{flalign*}
		Simplifying this expression in the cases $x_j = 0$ and $x_j \neq 0$ concludes the proof.
\end{proof}

\subsection{$\ell^0$-based LAD-LASSO}
\label{app:l0-LAD-LASSO}
We conclude by deriving the greedy selection rule for $\ell_w^0$-based LAD-LASSO. Like in the case of $\ell^1_w$-based LAD-LASSO,  we restrict ourselves to the real-valued case. 
\begin{theorem}[Greedy selection rule for $\ell^0_w$-based LAD-LASSO]
\label{prop:l0_WLADLASSO}
    Let $ \lambda \geq 0 $, $ S \subseteq [N] $, $ A \in \bR^{m \times N} $ with nonzero columns $a_1,\ldots,a_N$, and $ x \in \bR^N $ satisfying 
	\begin{equation}
	x \in \arg\min_{z \in \bR^N} \|y - Az\|_1 \quad \mathit{s.t.} \quad \mathrm{supp}(z) \subseteq S.
	\end{equation}
	Consider the $\ell^0_w$-based LAD-LASSO loss function
	\begin{equation}
	\label{l0_ladlasso_functional}
	G_{\ell^0_w}^\mathrm{LAD}(z) := \|y - Az\|_1 + \lambda\|z\|_{0,w}.
	\end{equation}
	Then, for every $ j \in [N] $, 
	$$ \min_{t \in \bR} G_{\ell^0_w}^\mathrm{LAD}(x + te_j) = G_{\ell^0_w}^\mathrm{LAD}(x) - \Delta_{\ell^0_w}^\mathrm{LAD}(x, S, j), $$
	where 
	\[\Delta_{\ell^0_w}^\mathrm{LAD}(x, S, j) 
 = \begin{cases}
	\max\left\{\|r\|_1 - \left\|r - \frac{r_{\hat{i}(j)}}{A_{\hat{i}(j), j}}a_j\right\|_1 - \lambda w^2_j, 0\right\} &  j \notin S \\ 
	\max \left\{\|r\|_1 - \left\|r - \frac{r_{\hat{i}(j)}}{A_{\hat{i}(j), j}}a_j\right\|_1, \|r\|_1 - \|r + x_ja_j\|_1 + \lambda w_j^2\right\} & j \in S, \; x_j \neq 0 \\
	\max \left\{\|r\|_1 - \left\|r - \frac{r_{\hat{i}(j)}}{A_{\hat{i}(j), j}}a_j\right\|_1 - \lambda w_j^2, 0\right\} & j \in S, \; x_j = 0 
	\end{cases},
	\]
	with $ \hat{i}(j) = \tau_j(\hat{k}(j)) $, 
	$$ \hat{k}(j) = \min\left\{k \in [\|a_j\|_0]: \sum_{k = 1}^{\|a_j\|_0}\frac{|A_{\tau_j(k), j}|}{\|a_j\|_1} \geq \frac{1}{2}\right\}, $$
	and where $ \tau_j: [\|a_j\|_0] \to \supp(a_j) $ defines a nondecreasing rearrangement of the sequence $ (r_i/A_{ij})_{i \in \supp(a_j)}$, i.e., is such that $ r_{\tau(k)}/A_{\tau(k),j} \leq r_{\tau(k + 1)}/A_{\tau(k + 1),j} $ for every $k \in [\|a_j\|_0-1]$.
\end{theorem}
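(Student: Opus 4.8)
The plan is to follow the template of the two preceding proofs—Theorem~\ref{thm:WLADLASSO} for the $\ell^1_w$-based LAD-LASSO and the $\ell^0_w$-based SR-LASSO result of Appendix~\ref{app:l0-SR-LASSO}—and split the analysis into the three regimes $j \notin S$, $j \in S$ with $x_j \neq 0$, and $j \in S$ with $x_j = 0$. The organizing observation is that, because $\|z\|_{0,w} = \sum_{i \in \supp(z)} w_i^2$, coordinate $j$ contributes exactly $\lambda w_j^2 \mathbbm{1}_{\{x_j + t \neq 0\}}$ to $G^{\mathrm{LAD}}_{\ell^0_w}(x + te_j)$, while every other coordinate is untouched. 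Writing $r = y - Ax$ and letting $a_j$ be the $j$th column of $A$, I would therefore decompose
\[
G^{\mathrm{LAD}}_{\ell^0_w}(x + te_j) = \|r - t a_j\|_1 + \lambda w_j^2 \mathbbm{1}_{\{x_j + t \neq 0\}} + \lambda\|x - x_j e_j\|_{0,w},
\]
so that minimizing over $t$ becomes a competition between the single point at which coordinate $j$ vanishes and the region in which it stays active.

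First I would dispatch the two ``vanishing'' regimes $j \notin S$ and $j \in S,\ x_j = 0$ simultaneously, since in both we have $x_j = 0$ and the penalty reads $\lambda w_j^2 \mathbbm{1}_{\{t \neq 0\}}$; this is precisely why the theorem assigns them the same formula. The value at $t = 0$ is $\|r\|_1$ (no penalty incurred), while for $t \neq 0$ I minimize $\|r - t a_j\|_1$ via Lemma~\ref{lemma:1D-LAD-solution}, applied to the \emph{original} $m$-dimensional residual and column (unlike the $\ell^1_w$ case of Theorem~\ref{thm:WLADLASSO}, no augmented $\widetilde{A}$ is needed, since there is no $\lambda w_j|x_j + t|$ term to absorb into the LAD). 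This gives minimizer $t^* = r_{\hat{i}(j)}/A_{\hat{i}(j),j}$ and optimal value $\|r - t^* a_j\|_1 + \lambda w_j^2$. Taking the smaller of the two candidates and subtracting from $G(x) = \|r\|_1 + \lambda\|x\|_{0,w}$ produces the $\max\{\,\cdot\,, 0\}$ expression.

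For $j \in S$ with $x_j \neq 0$, coordinate $j$ is already active, so the penalty reads $\lambda w_j^2 \mathbbm{1}_{\{t \neq -x_j\}}$: deactivating coordinate $j$ (i.e.\ $t = -x_j$) removes the penalty but pins the residual to $r + x_j a_j = y - A(x - x_j e_j)$, whereas keeping it active carries the penalty $\lambda w_j^2$ and lets Lemma~\ref{lemma:1D-LAD-solution} minimize $\|r - t a_j\|_1$ over the active branch. Comparing the two candidate values $\|r + x_j a_j\|_1$ and $\|r - t^* a_j\|_1 + \lambda w_j^2$ and subtracting from $G(x)$—which itself now carries a $\lambda w_j^2$ term because $x_j \neq 0$—yields the two-term $\max$ of the middle case, with the $+\lambda w_j^2$ attached to the deactivation term.

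The main subtlety, rather than a genuine obstacle, is the discontinuity of $\|\cdot\|_{0,w}$: I am minimizing $\|r - t a_j\|_1$ over a punctured line ($t \neq 0$ or $t \neq -x_j$), so this minimum need not be attained when $t^*$ coincides with the excluded point. The clean resolution is that the infimum over the punctured set equals the unconstrained LAD minimum by continuity of $t \mapsto \|r - t a_j\|_1$, and the outer $\min$ (equivalently the outer $\max$ defining $\Delta$) automatically selects the attained candidate: if $t^*$ lands on the excluded point the competing value is strictly smaller and is attained there. Finally, I would note—as is implicit in Theorem~\ref{thm:WLADLASSO}—that the optimality hypothesis on $x$ is not actually invoked in the pointwise minimization (the formula holds for any $x$ with $\supp(x) \subseteq S$); it is stated because $x$ is the output of the restricted LAD fit \eqref{eq:LADLASSO_WOMP_step_2} within the OMP iteration.
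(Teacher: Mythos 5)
Your proposal is correct and matches the paper's proof in all essentials: the same decomposition isolating the penalty $\lambda w_j^2 \mathbbm{1}_{\{x_j + t \neq 0\}}$, the same application of Lemma~\ref{lemma:1D-LAD-solution} to the \emph{unaugmented} residual $r$ and column $a_j$, and the same comparison between the deactivation point and the active-branch LAD minimum (the paper handles the $t^* = 0$ degeneracy with a brief parenthetical remark, which you treat slightly more explicitly). Your grouping of $j \notin S$ with $j \in S,\ x_j = 0$ is only a cosmetic reorganization of the paper's split into $j \notin S$ versus $j \in S$ followed by simplification, and your closing observation that the optimality hypothesis on $x$ is never invoked is likewise consistent with the paper's argument.
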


\begin{proof}
Let $ G= G_{\ell^0_w}^\mathrm{LAD} $ and recall that $ r =y - Ax $.

\paragraph{Case 1: $ j \notin S $.} We have		
		\begin{align*}
		G(x + te_j) = \|y - A(x + te_j)\|_1 + \lambda\|x + te_j\|_{0,w} 
		= \underbrace{\|y - A(x + te_j)\|_1 + \lambda w_j^2 \mathbbm{1}_{\{t \neq 0\}}}_{=:h(t)} + \lambda\|x\|_{0,w}.
		\end{align*}
		We continue by minimizing $ h(t) $. If $ t = 0 $, we simply have
		$ G(x + te_j) = G(x). $
		Otherwise, 
		\begin{align*}
		h(t) = \|y - A(x + te_j)\|_1 + \lambda w^2_j 
		= \sum_{i = 1}^{m}|r_i - tA_{ij}| + \lambda w^2_j, \quad \forall t\neq 0.
		\end{align*}
		Thanks to Lemma~\ref{lemma:1D-LAD-solution}, the right-hand side is minimized at $ t = r_{\hat{i}(j)}/A_{\hat{i}(j), j}$ (note that when $r_{\hat{i}(j)} = 0$, the minimum of $h(t)$ over $\mathbb{R}$ is attained at $t=0$).
        In summary, we have
		\begin{flalign*}
			\min_{t \in \bR}G(x + te_j) &= \min_{t \in \bR}h(t) + \lambda\|x\|_{0, w} \\
			&= \min\left\{\sum_{i = 1}^{m}\left|r_i - \frac{r_{\hat{i}(j)}}{A_{\hat{i}(j), j}}A_{i, j}\right |+ \lambda w^2_j+ \lambda\|x\|_{0, w} + \|r\|_1 - \|r\|_1 ,G(x)\right\}  \\
			&= G(x) - \max\left\{ \|r\|_1 - \sum_{i = 1}^{m}\left|r_i - \frac{r_{\hat{i}(j)}}{A_{\hat{i}(j), j}}A_{i, j}\right| - \lambda w^2_j, 0\right\},
		\end{flalign*}
		which concludes the case $ j \notin S $. \\
		
		\paragraph{Case 2: $ j \notin S $.} In this case, we can write
		\begin{align*}
		G(x + te_j) &= \|y - A(x + te_j)\|_1 + \lambda\|x + te_j\|_{0,w} 
		= \underbrace{\|r - ta_j\|_1 + \lambda w_j^2\mathbbm{1}_{\{t \neq -x_j\}}}_{l(t)} + \lambda\|x - x_je_j\|_{0,w}.
		\end{align*}
		We proceed by minimizing $ l(t) $. If $ t = -x_j $, we simply have $ l(t) = \|r + x_ja_j\|_1 $. Otherwise,
		$$ l(t) = \|r - ta_j\|_1 + \lambda w_j^2,\quad \forall t \neq -x_j. $$
		Similarly to the case $ j \notin S $, this is minimized at $ t = r_{\hat{i}(j)}/A_{\hat{i}(j), j}$. Therefore, in summary
		$$ \min_{t \in \bR} l(t) = \min \left\{\left\|r - \frac{r_{\hat{i}(j)}}{A_{\hat{i}(j), j}}a_j\right\|_1 + \lambda w_j^2, \|r + x_ja_j\|_1\right\}. $$
		As a result,
		\begin{align*}
		\min_{t \in \bR} G(x + te_j) &= \min_{t \in \bR} l(t) + \lambda\|x - x_je_j\|_{0, w} \\
		&= \min \left\{\left\|r - \frac{r_{\hat{i}(j)}}{A_{\hat{i}(j), j}}a_j\right\|_1 + \lambda w_j^2, \|r + x_ja_j\|_1\right\} + \lambda\|x - x_je_j\|_{0, w}.
		\end{align*}
		Simplifying this formula for $x_j =0$ and $x_j\neq 0$ yields the desired result.
  \end{proof}

\end{document}